\documentclass[a4paper,11pt]{article}
\usepackage{makeidx}
\usepackage[latin1]{inputenc}
\usepackage{amsfonts}
\usepackage{bm}
\usepackage{amssymb}
\usepackage{latexsym}
\usepackage{amsmath}
\usepackage{amscd}
\usepackage{amsthm}
\usepackage{rotating}
\usepackage{graphicx}
\usepackage{pifont}
\usepackage{curves}
\usepackage{fancyhdr}
\usepackage{epsfig}
\usepackage{pstricks}
\usepackage{pst-tree}
\usepackage{subfigure}
\usepackage{epic}
\usepackage{alltt}
\usepackage[all]{xy}
\usepackage{appendix}
\usepackage{tikz}
\usepackage{url}
\usepackage{color,soul}
\usepackage{enumitem}
\usepackage{mathtools}
\usepackage{mathalpha}
\usepackage{tabulary}
\usepackage{multirow}
\usepackage{graphics}
\usepackage{booktabs}

\usepackage{xcolor}
\usepackage[pdfencoding=auto, psdextra]{hyperref}
\hypersetup{
    colorlinks=true,
    citecolor=blue,
    breaklinks=true
    }

\usepackage{chngcntr}
\counterwithin{figure}{section}
\pagenumbering{arabic}
\setlength{\textwidth}{16 cm}
\setlength{\oddsidemargin}{0 cm} \setlength{\topmargin}{0 cm}
\setlength{\evensidemargin}{0 cm} \setlength{\headsep}{0 cm}
\setlength{\marginparwidth}{0 cm} \setlength{\textheight}{24 cm}
\setlength{\footskip}{15pt}

\theoremstyle{plain}
\newtheorem{theorem}{Theorem}[section]
\newtheorem{prop}[theorem]{Proposition}

\newtheorem{corol}[theorem]{Corollary}

\newcommand{\be} {\begin{equation}}
\newcommand{\ee} {\end{equation}}
\newcommand{\bea} {\begin{eqnarray}}
\newcommand{\eea} {\end{eqnarray}}
\newcommand{\Bea} {\begin{eqnarray*}}
\newcommand{\Eea} {\end{eqnarray*}}

\DeclareMathOperator{\Trace}{Tr}

\def\P{\mathbb{P}}
\newcommand{\E}{\mathbb{E}}
\newcommand{\ind}{\mathbf{1}}

\newcommand{\N}{\mathbb{N}}
\newcommand{\rme}{\mathrm{e}}

\newcommand{\Laplace}{\mathfrak{L}}
\newcommand{\xx}{x^{k}_0}
\newcommand{\Qlg}{\mathbf{Q}_{\lambda, \gamma}}
\newcommand{\Lap}[3]{\Laplace_{#1, #2, #3}}
\newcommand{\LapG}[4]{\mathfrak{G}_{#1, #2, #3,#4}}
\newcommand{\hQ}{\widehat{\mathbf{Q}}}
\newcommand{\bfA}{\mathbf{A}}
\newcommand{\bfB}{\mathbf{B}}

\usepackage{xcolor}

\pagestyle{plain}

\title{The epidemiological footprint of contact structures in models with two levels of mixing}
\author{Vincent Bansaye$^1$, Fran\c cois Deslandes$^2$, Madeleine Kubasch$^{1,2}$, Elisabeta Vergu$^2$}
\date{\small $^1$Ecole Polytechnique, Centre de math\'ematiques appliqu\'ees (CMAP), 91128 Palaiseau, France \\ $^2$MaIAGE, INRAE, Universit\'e Paris-Saclay, 78350 Jouy-en-Josas, France}

\begin{document}

\maketitle

\begin{abstract}

Models with several levels of mixing (households, workplaces), as well as various corresponding formulations for $R_0$, have been proposed in the literature. However, little attention has been paid to the impact of the distribution of the population size within social structures, effect that can help plan effective interventions.  We focus on the influence on the model outcomes of teleworking strategies, consisting in reshaping the distribution of workplace sizes. We consider a stochastic \emph{SIR} model with two levels of mixing, accounting for a uniformly mixing general population, each individual belonging also to a household and a workplace. The variance of the workplace size distribution appears to be a good proxy for the impact of this distribution on key outcomes of the epidemic, such as epidemic size and peak. In particular, our findings suggest that strategies where the proportion of individuals teleworking depends sublinearly on the size of the workplace outperform the strategy with linear dependence.  Besides, one drawback of the model with multiple levels of mixing is its complexity, raising interest in a reduced model. 
We propose a homogeneously mixing \emph{SIR} ODE-based model, whose infection rate is chosen as to observe the growth rate of the initial model.
This reduced model yields a generally satisfying approximation of the epidemic. These results, robust to various changes in model structure, are very promising from the perspective of implementing effective strategies based on social distancing of specific contacts. Furthermore, they contribute to the effort of building relevant approximations of individual based models at intermediate scales.

\end{abstract}

\noindent \textbf{Key words:} Epidemic process, household-workplace models, two layers of mixing,  overlapping groups, epidemic growth rate, teleworking strategies, model reduction. \\
  
\noindent \textbf{Code availability.} \url{https://forgemia.inra.fr/francois.deslandes/communityepidemics}   \\

\noindent \textbf{Author Contributions} Vincent Bansaye and Elisabeta Vergu have supervised and designed the study. Fran\c cois Deslandes has been in charge of numerical aspects (implementation and development of simulation scenarios for numerical explorations). Madeleine Kubasch has developed theoretic aspects and contributed to the numerical part. All authors have contributed equally to the analysis and interpretation of the results. All authors have contributed equally to drafting the manuscript.

\section{Introduction}

The dynamics of an epidemic relies on the contacts between susceptible and infected individuals in the population. The number and characteristics of contacts has a major quantitative effect on the epidemic. In addition to the main features playing a role in the description of contacts, such as the age and propensity to travel of individuals \cite{daviesAgedependentEffectsTransmission2020, gilesDurationTravelImpacts2020}, the nature of the contact is also crucial: homogeneous mixing in closed structures (household, workplaces, schools,...) or related to other intermediate social structures (group of friends, neighbors...) e.g. \cite{houseDeterministicEpidemicModels2008}. The heterogeneity of contacts can be captured in network based models \cite{keelingNetworksEpidemicModels2005} or models with two levels of mixing \cite{ballGeneralModelStochastic2002}. These models distinguish a global level of mixing corresponding to a uniformly mixing general population, as well as a local level consisting in an overlapping groups model, meaning that each individual belongs to one or several small contact groups such as households and workplaces and schools. These  modeling frameworks or their simplified unstructured versions allow to tackle important questions related to the control of epidemic dynamics by acting specifically on these different population structures. Furthermore, the computation of the corresponding reproduction number, arguably one of the most important epidemic indicators, enables to assess control measures.

For the homogeneous mixing \emph{SIR} model, several important characteristics can be summarized by the reproduction number $R_0$. 
This threshold parameter indicates whether there may be a large epidemic outbreak, allows to calculate the final epidemic size and the fraction of the population that needs to be vaccinated in order to stop an outbreak, see e.g. \cite{heesterbeekConceptRoEpidemic1996, ballReproductionNumbersEpidemic2016} and references therein. It is also directly linked  to the exponential growth rate $r$ at the beginning of the epidemic, and has a clear interpretation as the mean number of individuals contaminated by a single infected individual in a large susceptible population. For models with two levels of mixing, however, the definition of a unique reproduction number combining these criteria has not been achieved yet. Instead, various reproduction numbers have been proposed, of which  \cite{ballReproductionNumbersEpidemic2016} have given an interesting overview. All of them respect the threshold of $1$ for large epidemic outbreaks, and they generalize one or another aspect of the traditional $R_0$.
Some of these reproduction numbers have the advantage of an intuitive interpretation. This is the case of the reproduction number $R_I$ introduced in the supplementary material of \cite{pellisThresholdParametersModel2009} for the household-workplace model, and which was previously introduced for household models \cite{beckerEffectHouseholdDistribution1995, ballEpidemicsTwoLevels1997}. Its definition relies on a multi-type branching process which focuses on primary cases within households and workplaces, grouping all secondary cases as descendants of the primary cases. Then $R_I$  is defined as the Perron root of the corresponding average offspring matrix. In this paper, we will show that this reproduction number has the advantage of being connected to further relevant information on the household-workplace epidemic, namely the proportions of infections occurring at each level of mixing. 
 
Nevertheless, a drawback of most of the reproduction numbers for household-workplace models that are described by \cite{ballReproductionNumbersEpidemic2016} is that by construction, they lose track of time. Indeed, in an effort to construct meaningful generations of infected individuals, the timing of the infections is neglected. As a consequence, contrary to the case of homogeneous mixing, there is no simple link between these reproduction numbers and the initial exponential growth rate. The only exception is $R_r$, a reproduction number which has originally been introduced by \cite{goldsteinReproductiveNumbersEpidemic2009} for household models, and whose definition has been extended by \cite{ballReproductionNumbersEpidemic2016} to household-workplace models. The definition of this reproduction number depends explicitly on the exponential growth rate $r$. But as far as we see,  it has no easy intuitive interpretation. It thus seems pertinent to complement the information yielded by a reproduction number such as $R_I$ with the  growth rate $r$. While simple closed analytic expressions seem out of reach, \cite{pellisEpidemicGrowthRate2011} have obtained an interesting characterization that we use and complement by more explicit expressions. 

Given the relative difficulty for computing reproduction numbers for models with several levels of mixing, \cite{goldsteinReproductiveNumbersEpidemic2009} have suggested, in the case of the simplest model with two levels of mixing, namely structured only in general population and households, to first estimate the growth rate from data, and to then compute $R_r$. \cite{trapmanInferringR0Emerging2016} have gone one step further, by proposing to first infer $r$ from data, and then totally neglect the population structure and approximate a reproduction number from $r$ using the formula linking the reproduction number to the exponential growth rate in the homogeneous mixing model. They find that this procedure is generally satisfactory, indicating that this procedure defines a homogeneous mixing model able to capture key aspects of the beginning of the epidemic. This makes one wonder to what extent it is possible in general to approach an epidemic spreading in a household-workplace model by a simple, unstructured, well parametrized compartmental model. Some work has been done in this direction by \cite{delvallerafoDiseaseDynamicsMean2021}. They have shown that it is possible to approach an \emph{SIRS} household-workplace model by a homogeneously mixing \emph{SIYRS} model, where $Y$ stands for infected but no longer infectious individuals, once the parameters have been well chosen. Hence, they obtain the first approximation of multi-level epidemic dynamics using homogeneous mixing compartmental ODEs.


Naturally, models with two levels of mixing raise the question of the way their social organisation characterized by small contact structures has an impact on major features of an epidemic. From the point of view of control, they constitute minimal models allowing to account for closures of workplaces or schools. For the past years, governments world-wide have implemented such non-pharmaceutical interventions (NPIs) in reaction to the COVID-19 epidemic. Since then, several studies have assessed the impact of these measures on the epidemic spread.
Both analysis of empirical studies \cite{mendez-britoSystematicReviewEmpirical2021} and simulation studies \cite{backhauszImpactSpatialSocial2022, simoySociallyStructuredModel2021} come to the conclusion that especially (partial) school closure and/or home working have a substantial impact on the epidemic. 
Together, these findings motivate an interest in mathematical models enabling a closer study of school and workplace closures, and more generally the effect of control measures targeting small contact structures.

In this paper, we are interested in the impact of the distribution of individuals in closed structures on epidemic dynamics. In order to address this question, we consider a stochastic \emph{SIR} model with two levels of mixing, namely a global and a local level. While the former corresponds to the general population, the latter is subdivided into two layers representing households and workplaces, respectively. Note that while our model does not explicitly distinguish schools, they can be considered as workplaces. In particular, we are motivated by and study the impact of control policies based on  differentiated social distancing. For some structures, in particular for households, it is natural to assume that their size distribution is fixed and control policies cannot act on it. For others, such as workplaces and schools, control measures aiming at contact reduction can be considered, COVID-19 epidemic having raised this issue in new manners. Focusing on workplaces, we study here how control strategies
which consist in modifying the structures' size distribution, can impact 
different epidemic outcomes.
More generally, we demonstrate through simulations that the size distribution of closed structures has a significant effect on epidemic dynamics, as assessed by the total number of infections and by  the initial growth rate of infection and by the maximal number of infected individuals along time. In particular, when both the number of individuals and structures are fixed, implying that the average structure size is constant as well, we show that these epidemic outcomes are sensitive to the variance of the structure size distribution. In short, balancing structure sizes reduces the impact of the epidemic.

One drawback of the model with two levels of mixing is that numerical simulations rely on good knowledge of several epidemic parameters, such as the rates of infection within each level, which may not be easy to assess. However, considering the significant impact of structure size distributions on epidemic outcomes and the fact that control measures may actively impact these distributions, it seems crucial not to neglect this particular population structure. This motivates the development of reduced epidemic models, which aim to be more parsimonious, while still being able to capture the impact of small structures on the epidemic thanks to a pertinent choice of parameters. 
Here, we propose such a reduced model, that we evaluate using simulations. 
It consists of a deterministic, homogeneously mixing \emph{SIR} model, whose infectious contact rate is chosen as to ensure that the reduced model and model with two levels of mixing share the same exponential growth rate.
Hence, we will see that the initial growth rate is the key parameter for reducing the full epidemic process at the macroscopic level.

The questions we consider here involve  quantities which capture some main features of the epidemics
which are relevant for specific phases of an epidemic.
Indeed, starting from a single infectious individual in a large population of size $N$, epidemic dynamics can be decomposed into three phases. This 
has been proven for simple models such as the homogeneously mixing \emph{SIR} model, for which we will detail these phases below.
However, this decomposition still holds in more complex models, including the model with two levels of mixing studied here. \\
\emph{Phase 1: random behavior in small population.} When the number of infected individuals $I(t)$ is of order $1$
, $I(t)$ is approximated by a linear birth and death process.  This approximation holds on finite time intervals, but also up to a time $T_N$ which tends to infinity when $N$ tends to infinity. More precisely, both processes coincide as long as the number of infected individuals is below $\sqrt{N}$ \cite{ballStrongApproximationsEpidemic1995}. Let us also mention \cite{barbourApproximatingReedFrost2004}, for comparison results until the infected population reaches sizes of order of $N^{2/3}$, for a discrete time counterpart of the \emph{SIR} model. \\
\emph{Phase 2: deterministic evolution and linear behavior.} When $1\ll I(s), I(t) \ll N$, the number of infected follows a deterministic and  exponential dynamic:  $I(t)\approx I(s)\rme^{(\beta-\gamma)(t-s)}$, where $\beta$ is the transmission rate and $\gamma$ the recovery rate. This approximation is valid as soon as $s, t$ tend to infinity but remain far  from the time $\log(N)/(\beta-\gamma)$, which corresponds to the entry in the macroscopic level. This deterministic phase allows to capture the initial growth rate of infection, $\beta-\gamma$, by considering the slope of the growth of $I$ on a logarithmic scale. We refer to \cite{bansayeSharpApproximationHitting2023}  and references therein for more precise results. \\
\emph{Phase 3: macroscopic deterministic behavior.} When the number of infected individuals is of order $N$,
the proportion of susceptible, infected and recovered individuals can be approximated by a macroscopic deterministic system. More precisely, letting $N$ go to infinity, the trajectories of $(S/N, I/N, R/N)$ converge in law on finite time intervals to the solutions of the \emph{SIR} dynamical system. The approximation is valid for any $t$ greater than $\log(N)/(\beta-\gamma)$. For accurate results, we refer in particular to \cite{barbourApproximatingEpidemicCurve2013} and \cite{barbourStochasticModelTransmission1978}. Let us also mention that fluctuations around the deterministic curve are of order $1/\sqrt{N}$ by classical Gaussian approximation (Chapter 7, Sections 4 and 5, in \cite{ethierMarkovProcessesCharacterization1986}).\\
In our study, \emph{phase 1} corresponds to the regime where stochasticity of the individual-based version of the \emph{SIR} model is observed in simulations. \emph{Phase 2} is the relevant regime for the definition of $R_I$ and the initial growth rate $r$. \emph{Phase 3} yields the deterministic macroscopic approximation, where stochasticity vanishes. It starts at a random time necessary to reach a macroscopic proportion of infected. This time represents the starting point of the comparison between the stochastic structured model and its reduced ODE-based counterpart we propose.

This paper is structured as follows. Section \ref{MM} presents the main modeling ingredients, such as a detailed description of the model with two levels of mixing and proper introduction of considered key parameters, as well as numerical settings for simulations. Section \ref{impactsizedistrib} is devoted to the study of the impact of the structure size distribution on some main epidemic outcomes, namely the reproduction number, the exponential growth rate, the peak size and the final epidemic size. For this purpose, two slightly different situations are considered. While the size of the population is always considered fixed, we first keep the total number of workplaces constant as well but vary the way individuals are distributed among these given workplaces, see Section \ref{impactnormal}. Second, we consider teleworking strategies, which differ from the previous setting as for simulations, these strategies amount to creating a new workplace of size one for each teleworking employee, see Section \ref{impacttws}. 
Finally, in Section \ref{reductionODE}, we propose an ODE reduction of the initial multi-level model based on the computation of the initial growth rate and assess its robustness. The paper concludes with a Discussion (Section  \ref{discussion}) of the main results on the impact of structure size distributions on epidemic dynamics, their robustness to different modeling assumptions, and their implications for control measures.

\section{Model with two levels of mixing: description, simulation approach, key parameters, simulation scenarios}
\label{MM}

\subsection{General model description}\label{model}

We consider an \emph{SIR}-type model with two levels of mixing by considering global and two types of local contacts following two local partitions of the population, see \cite{ballGeneralModelStochastic2002}. In addition to homogeneous mixing in the general population, contacts occur in households and workplaces of various sizes, in which the population is structured. Each individual belongs both to a household and a workplace, which are chosen independently from one another. Generally speaking, infection spreads through contacts between susceptible and infected individuals within each level of mixing, which are characterized by different contact rates among individuals as will be detailed below. Infected individuals recover at rate $\gamma$.

We distinguish two slightly different types of parametrization concerning contact description. For closed structures such as households and workplaces, we will use \emph{one-to-one} infectious contact rates  $\lambda_H$ and $\lambda_W$, respectively. In other words, within a household, if there are $s$ susceptibles and $i$ infected individuals, each susceptible is infected at rate $\lambda_H i$ (resp. $\lambda_W i$ for workplaces). This has the disadvantage to make the average number of contacts established by each individual grow with the size of the structure. This is tractable for structures of finite size, and a good enough approximation of contacts within very small structures, but it is not realistic at the scale of the general population. Instead, within the general population, when there are $s$ susceptibles and $i$ infectious individuals, each susceptible individual becomes infected at rate $\beta_G i/(N-1)$ where $N$ is the population size. Here, the parameter $\beta_G$ represents the \emph{one-to-all} infectious contact rate, which is the global rate at which an infected individual makes contact with all other individuals in the population. Hence the corresponding one-to-one infectious contact rate $\lambda_G^{(N)}= \beta_G /(N-1)$ is small. This allows to scale the contact rates when $N$ tends towards infinity, so that the mean number of contacts made by an infected individual remains constant. The global rate of infection in the population is then $\beta_G I S/(N-1)=\lambda_G^{(N)} I S$ where $S$ (resp. $I$) is the number of susceptible (resp. infected) individuals, and $S$ is indeed of order $N$. 

\subsection{Structure size distributions}
\label{structure-distributions}

Let us introduce the size distribution of households and workplaces, called $\pi^H$ and $\pi^W$, respectively.
When the number of structures is large, $\pi_k^H$ (resp. $\pi_k^W$) is the proportion of households (resp. workplaces) of size $k\geq 1$. 
The total number of individuals is $N$, which is fixed. Besides, all individuals belong to one (and only one) household and workplace, the latter being of size one for teleworking employees. Notice that  the following equivalences hold a.s.
$$N \, \sim \, N_H\sum_{k\geq 1} k \pi_k^H  \, \sim \,  N_W\sum_{k\geq 1} k \pi_k^W \qquad (N\rightarrow \infty),$$
where $N_H$ (resp. $N_W$) is the total number of
households (resp. workplaces). We define $m_H = \sum_{k\geq 1} k \pi^{H}_k$ (resp. $m_W = \sum_{k\geq 1} k \pi^{W}_k$) the average household (resp. workplace) size.


\subsection{Numerical simulation scenarios: structure size distributions and epidemiological parameters} 
\label{sec-param}

In the numerical explorations of the impact of the structure size distribution on the epidemic dynamics, we use the household size distribution observed in France in 2018 as reference distribution and also more generally, unless stated otherwise. We also provide a workplace distribution derived from the workplace size distribution of Ile-de-France in 2018, later called reference workplace size distribution, and we refer to Appendix \ref{appdx:sizedist} for detail. In particular, we assume homogeneous mixing within structures, which is unrealistic for large workplace sizes, and we thus have limited workplaces to size 50 at most. The household reference distribution is stated in Table \ref{tab:refhouseholds}, while the workplace reference distribution is shown in Figure \ref{fig:dist-insee}.

\begin{table}[ht]
\begin{center}
\begin{minipage}{\textwidth}
\caption{Reference household size distribution corresponding to the size distribution of households in France in 2018.}
\label{tab:refhouseholds}
\begin{tabular}{lcccccc}
  \toprule 
  Household size & 1 & 2 & 3 & 4 & 5 & 6 \\
  \midrule
  Proportion &  0.367 & 0.326 & 0.136 & 0.114 & 0.041 & 0.016 \\
  \bottomrule
\end{tabular}
\end{minipage}
\end{center}
\end{table}

\begin{figure}[!ht]
  \centering
\includegraphics[width=0.7\textwidth]{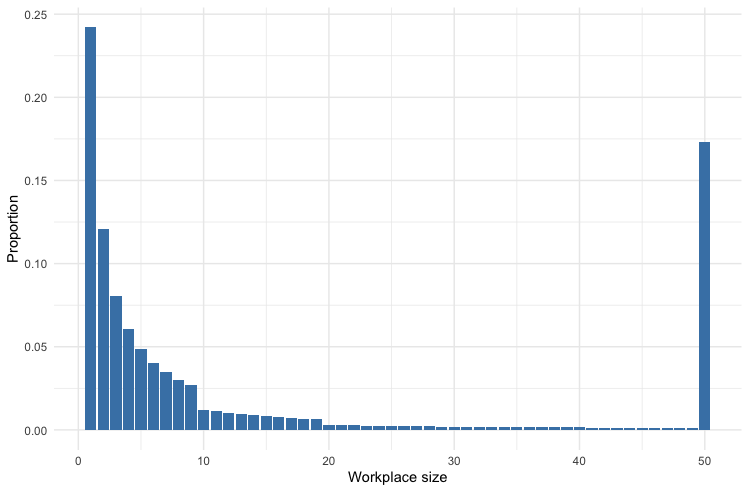} 
\caption{Reference workplace size distribution distribution derived from workplace size distributions in Ile-de-France.}
\label{fig:dist-insee}
\end{figure}


To study the impact of the average workplace size and workplace variance we provide the following sets of exploratory workplace size distributions: (A) a set of 160 workplace distributions with mean ranging from 3 to 30, different variances and maximal size 50; (B) a set of 100 workplace distributions with mean 20, different variances and maximal size 50; (C) a set of 100 workplace distributions with mean 7, different variances and maximal size 50. These workplace size distributions were generated using random mixtures of distributions, as explained in Appendix \ref{appdx:sizedist}. The sets of structure size distributions are summarized in Table \ref{tab:dist-bis}. 

\begin{sidewaystable}
\begin{center}
\begin{minipage}{\textheight}
\caption{Household and workplace size distribution sets.}
\label{tab:dist-bis}
\begin{tabular*}{\textheight}{@{\extracolsep{\fill}}lcccccc@{\extracolsep{\fill}}}
\toprule
\multicolumn{2}{c}{Name} & Number of distributions & Average size\footnotemark[1] & Variance\footnotemark[1] & Maximum size \\
\midrule
\multicolumn{2}{c}{Reference household size distribution}    &  1                       &    2.2          &    1.6      & 6           \\  
\multicolumn{2}{c}{Reference workplace size distribution}    &  1                       &    13.8          &    331.5      & 50        \\     
\midrule
\multirow{3}{5.5cm}{Exploratory workplace size distributions} & A    &  160                       &    3-30          &    9.8-554.3     & 50     \\
 & B    &  100                       &    20          &    42.4-544.0      & 50        \\     
 & C    &  100                       &    7          &    5.8-248.8      & 50        \\
\bottomrule
\end{tabular*}
\footnotetext[1]{Value or range of values.}
\end{minipage}
\end{center}
\end{sidewaystable}

Numerical exploration of the model was performed using a combination of various epidemic parameters.
We designed scenarios to cover a range of interesting situations that illustrate the mathematical properties of the model and its approximations as well as the expected epidemic behaviour of the model for several infectious diseases.
Our study covers several values of the reproduction number, from threshold values to values observed in real world epidemic such as the flu (\cite{ajelliRoleDifferentSocial2014}) or COVID-19 (\cite{locatelliEstimatingBasicReproduction2021,  galmicheExposuresAssociatedSARSCoV22021}). We also explore higher values of the reproduction number, closer to what is observed in highly contagious diseases such as chickenpox (\cite{silholModellingEffectsPopulation2011}).
Similarly, we cover a range of distributions of infections between structures and global mixing. We study balanced scenarios where the proportion of infection in global mixing is between 30 and 40\%, such as those observed in influenza or COVID-19 epidemics, as well as more contrasted situations where infections at the global or local level strongly dominate, as for chickenpox. An overview of all considered scenarios is given in Table \ref{tab:scenfeat}. 

\begin{table}[ht]
\begin{center}
\begin{minipage}{\textwidth}
\caption{Main features of the epidemic scenarios considered.}
\label{tab:scenfeat}
\begin{tabular}{llll}
  \toprule 
  Scenario & Reproduction number & Distribution of infections & Comment \\
  \midrule
  1 &  COVID-19 & balanced & COVID-19-like scenario \\
  2 &  high & balanced & \\
  3 &  threshold & balanced & \\
  4 &  high & mostly global mixing & \\
  5 &  high & mostly workplaces & \\
  6 &  flu & balanced & \\
  7 &  flu & mostly global mixing & \\
  8 &  flu & balanced & flu-like scenario\\
  9 & threshold & mostly structures & \\
  10 & threshold & balanced & \\
  11 & threshold & fully structures & \\
  \bottomrule
\end{tabular}
\end{minipage}
\end{center}
\end{table}

 More detail is given in Appendix \ref{annex_numer}, including values of reproduction number, epidemic growth rate and proportions of infection per layers for each scenario as defined in upcoming Section \ref{R0}. They correspond to the values reached considering the reference size distribution. The corresponding epidemic parameters are also provided. Without loss of generality, the recovery rate is set to 1 in all simulations. Illustration of the simulated final size for each scenario and exploratory workplace size distribution in the simulation study are provided in Figure S1 of the Online Resource.

\subsection{Simulations of the population structure and epidemic process}
\label{simulations-of-the-epidemic-process}

The epidemic process was simulated using the Gillespie algorithm (Stochastic Simulation Algorithm). A population structure (contacts between individuals) is first generated from the size distributions of households and workplaces, generated as described in Section \ref{sec-param} and Appendix \ref{appdx:sizedist}. Individuals are listed from $1$ to $N$ and  placed in structures randomly by applying the following iterative process: (i) for each type of structure, randomly  select a structure size $k$ with probability given by the  size distribution; (ii) if the number $n$ of individuals that have not yet been assigned to a structure of the same type exceeds $k$, randomly select $k$ individuals among those; otherwise, group all remaining individuals in a structure of size $n < k$.
For a given population structure, the algorithm computes the rates of events, \textit{i.e.} infection events in households, workplaces and the general population, and recovery, as described in Section \ref{model}.
These rates are used to derive the next event, \emph{i.e.} infection of a susceptible individual or recovery of an infected individual, and the corresponding event time. The epidemic is initiated with a single infected individual, selected uniformly at random in the population. For each run of the epidemic process, we compute several classical summary statistics: (i) the final epidemic size, \textit{i.e.} the number of individuals that are in the recovered state when the number of infected individuals becomes 0; (ii) the infectious peak size, \textit{i.e.} the maximum number of infectious individuals occurring simultaneously over the course of the epidemic; (iii) the infectious peak time, \textit{i.e.} the time at which the infectious peak occurs.

\subsection{Simulation of teleworking strategies}\label{teleworking}
\label{sec:meth-telework}

\begin{figure}
    \centering
    \includegraphics[width=\textwidth]{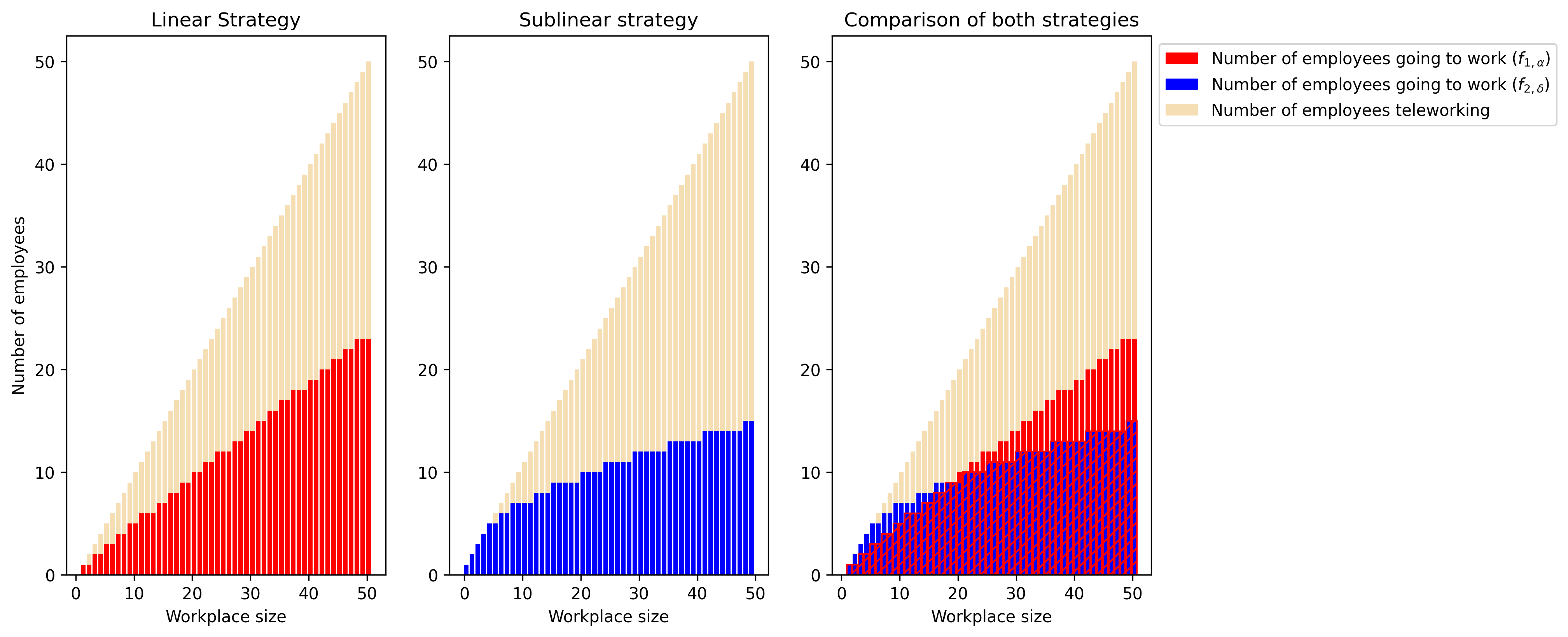}
    \caption{For each workplace size, number of employees coming to work on site or teleworking according to either the linear or the sublinear strategy. The parameters $\alpha$ and $\delta$ are chosen in order to observe an average proportion of employees teleworking per workplace equal to 0.5 for the uniform workplace size distribution $(\alpha = 0.46, \delta = 2.01)$.}
    \label{fig:twstrats}
\end{figure}

We evaluated the evolution of different epidemic outcomes for varying proportions of teleworkers using two strategies, as illustrated in Figure \ref{fig:twstrats}: \\
(i)  Linear strategy. The workplace size distribution is modified according to the function $f_{1,\alpha}(k)= \lceil \alpha k \rceil$ for $\alpha \in [0,1]$, where $\lceil  \cdot \rceil$ is the usual ceiling function. This means that a work place of size $k$ becomes a workplace of size $\lceil \alpha k \rceil$ and the remaining $k-\lceil \alpha k \rceil$ individuals now telework. (ii) Sublinear strategy.  The workplace size distribution is modified according to the function $f_{2,\delta}(k)=\lceil \delta k^{\frac{1}{2}} \rceil\wedge k$, where $\delta\geq 0$.This means that a work place of size $k$ becomes a workplace of size $\lceil \delta k^{\frac{1}{2}}\rceil$ and the remaining $k-\lceil \delta k^{\frac{1}{2}}\rceil$ individuals now telework.

The rationale behind the sublinear strategy is that withdrawing an individual from a large structure has a stronger impact in terms of number of contacts. Notice that we consider here the exponent $1/2$ for the sublinear strategy, but one could more generally consider any exponent $q < 1$.

\section{The impact of the size distribution of closed structures and assessment of teleworking strategies}
\label{impactsizedistrib}

\subsection{Outbreak criterion, $R_I$ and type of infection}
\label{R0}

Various notions of reproduction numbers have been proposed, as a compromise between complexity of the computations and epidemiological interpretation. The idea is to capture the mean number of infections caused more or less directly by 
one single "typical" individual. This concept is primarily defined in the first steps of the epidemic, which usually can be approximated by a branching process, whose mean satisfies a linear ODE. 
A typical individual corresponds then to a uniform sample in the corresponding population. The reproduction number is  delicate to define for epidemiological processes with multi-level contacts, such as the one we consider here. 
We recall that each infected individual infects an individual outside his structure with rate $\beta_G S/(N-1)$, which is in the \textit{phase $1$} approximated by $\beta_G$, since the number of susceptibles $S$ is approximately  $N$. As a consequence, the mean number of individuals directly infected in the general population by a single infected individual is $\beta_G/\gamma$.

Following the Supplementary Material of \cite{pellisThresholdParametersModel2009}, we structure the infected population following the origin of the infection and consider successive generations of infected individuals: 
\begin{equation*}
    (I_n^G, I_n^H,  I_n^W)_{n\geq 0}, \qquad  I_n=I_n^G+I_n^H+I_n^W.
\end{equation*}

\noindent Processes $I_n^G$, resp. $I_n^H$ and $I_n^W$, count the number of individuals in generation $n$, which have been infected through the mean field, respectively in the household and in the workplace. Hence $I_n$ is the total number of infected  individuals in generation $n$.
At time $0$, we assume $I_0=I_0^G=1$. The next generation $n+1$ of infected individuals is created by considering the number of direct infections $I_{n+1}^G$ in the general population, plus the local epidemic triggered within structures. This process is illustrated in Figure \ref{fig-branching-model}.

To compute the mean number of infections per generation, it is necessary to compute the mean number of individuals infected during the epidemic triggered by a single infected individual in a given structure. 
Thus, we introduce  $i_H(k)$ (resp. $i_W(k)$), the average total number of infections starting from one infected individual in a closed population of size $k$, one-to-one contact rate $\lambda_H$  (resp. $\lambda_W$) and recovery rate $\gamma$ as defined in Section \ref{model}. It corresponds to the number of infections caused by a single infected individual which introduces the epidemic into his household (resp. his workplace) of size $k$.
Recalling that $m_H = \sum_{k\geq 1}  k \pi^{H}_k$ (resp. $m_W = \sum_{k\geq 1} k \pi^{W}_k$), we define  
\begin{equation*}
   \widehat{\pi}_k^H= \frac{k \pi_k^H}{m_H}, \; \text{resp.} \; \widehat{\pi}_k^W= \frac{k \pi_k^W}{m_W},
\end{equation*}
as the size biased distribution
of structure sizes, which naturally defines the household (resp. workplace) size distribution of an individual chosen uniformly at random in the population. 
Then the numbers of infected individuals at each level triggered by an infected individual whose size structure is distributed according to the size biased law are defined by:
\begin{equation} \label{eq:nhat}
\mathcal I_G=\frac{ \beta_G}{\gamma}, \qquad \mathcal I_H=\sum_{k} \widehat{\pi}_k^H i_H(k), \qquad \mathcal I_W=\sum_{k} \widehat{\pi}_k^W i_W(k).
\end{equation}

\begin{figure}
	\includegraphics[width=\textwidth]{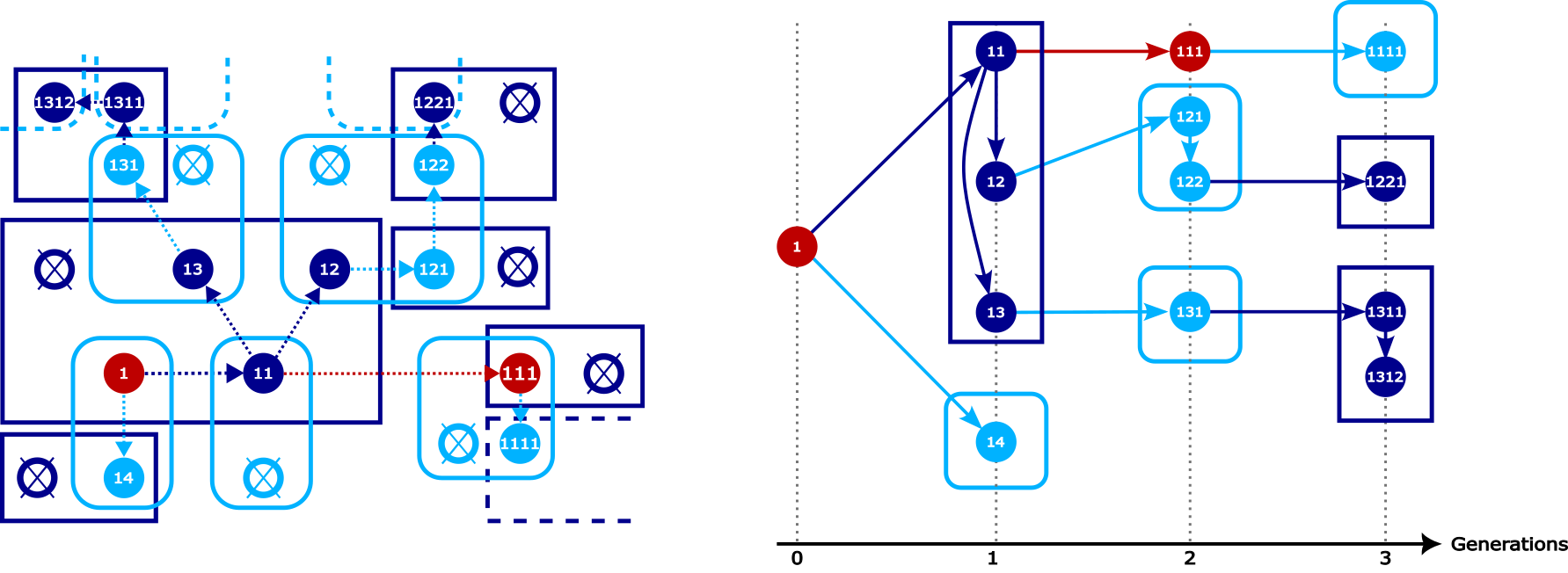}
	\caption{Example of an epidemic with two levels of mixing (in general population and within structures). The spread of the epidemic is shown on the left. Households and workplaces are delimited respectively in dark and light blue. Only structures containing infected individuals are shown. Individuals who have been infected during the epidemic appear as plain dots, whose colors indicate the means by which the infection occurred: through the general population (red), within households (dark blue) or within workplaces (light blue). The arrows keep track of the spread of the disease, pointing from the infector to the newly infected. Their color refers again to the type of infection. Members of a structure who have not been infected during the within-structure epidemic are represented as crossed circles. In the case of the branching model, they are never going to be infected, as a secondary introduction of the disease in an already infected structure is negligible  at the beginning of the epidemic. The different generations of infected as seen by the branching process are represented on the right. Colors still encode the way each infected is contaminated, and arrows represent the true order of infections as depicted on the left. 
	The branching genealogy is depicted through lexicographic labeling of individuals. Labels have been reported on the left panel, as to simplify identification of infected individuals in both means of representation.}
	\label{fig-branching-model}
\end{figure}

Following \cite{pellisThresholdParametersModel2009}, the expectation $\E((I_n^G, I_n^H,  I_n^W)^T)$ can be approximated by a sequence $X_n$
satisfying the following linear induction
$$X_{n+1}=AX_n$$
where $A$ is the mean reproduction matrix: 
\begin{equation}
\label{eq:reprod_matrix}
A=\begin{pmatrix} 
\mathcal I_G & \mathcal I_G & \mathcal I_G \\  \mathcal I_H &  0    & \mathcal I_H \\  
\mathcal I_W &  \mathcal I_W & 0  \end{pmatrix}.
\end{equation}
As $A$ is a primitive matrix, Perron Frobenius theorem yields the asymptotic behavior of $X_n=A^nX_0$ using its positive eigenelements, see \cite{athreyaBranchingProcesses1972}. More precisely, the unique  positive vector  $P=(p_G, p_H, p_W)^T$  solution of
\begin{equation} \label{eq-pgphpw}
AP=R_I P, \qquad p_G+p_H+p_W=1,
\end{equation}
gives the proportion of infected individuals from each source: general population, households, workplaces.
The associated positive eigenvalue is
$$R_I=\parallel AP\parallel_1$$
which corresponds to the mean reproduction number 
\begin{equation} \label{eq:R0}
R_I=\mathcal I_G +(1-p_H)\mathcal I_H+ (1-p_W)\mathcal I_W.
\end{equation}
When $R_I>1$, the process $(I_n)_{n \geq 1}$ survives with positive probability and on this event a.s. grows geometrically fast with speed $R_I$ yielding a supercritical regime, under an additional moment assumption on the number of infections, see also \cite{athreyaBranchingProcesses1972}. We observe that the vector $P$ gives  the origin of infections for large times.

As illustrated in Figure \ref{fig:R0_outbreak}, $R_I$ can play the role of an outbreak criterion. The final size of the epidemic is plotted against the value of $R_I$ for three parameter sets (scenarios 9, 10 and 11 in Table \ref{tab:scenfeat}). These epidemic parameter sets, combined with the set of workplace size distributions, allow to cover a large range of values for $p_G$, $p_H$ and $p_W$ (more precisely, $p_G$ between 0 and 0.5 and $p_W$ between 0.05 and 0.65).

\begin{figure}[!ht]
  \centering
\includegraphics[width=0.7\textwidth]{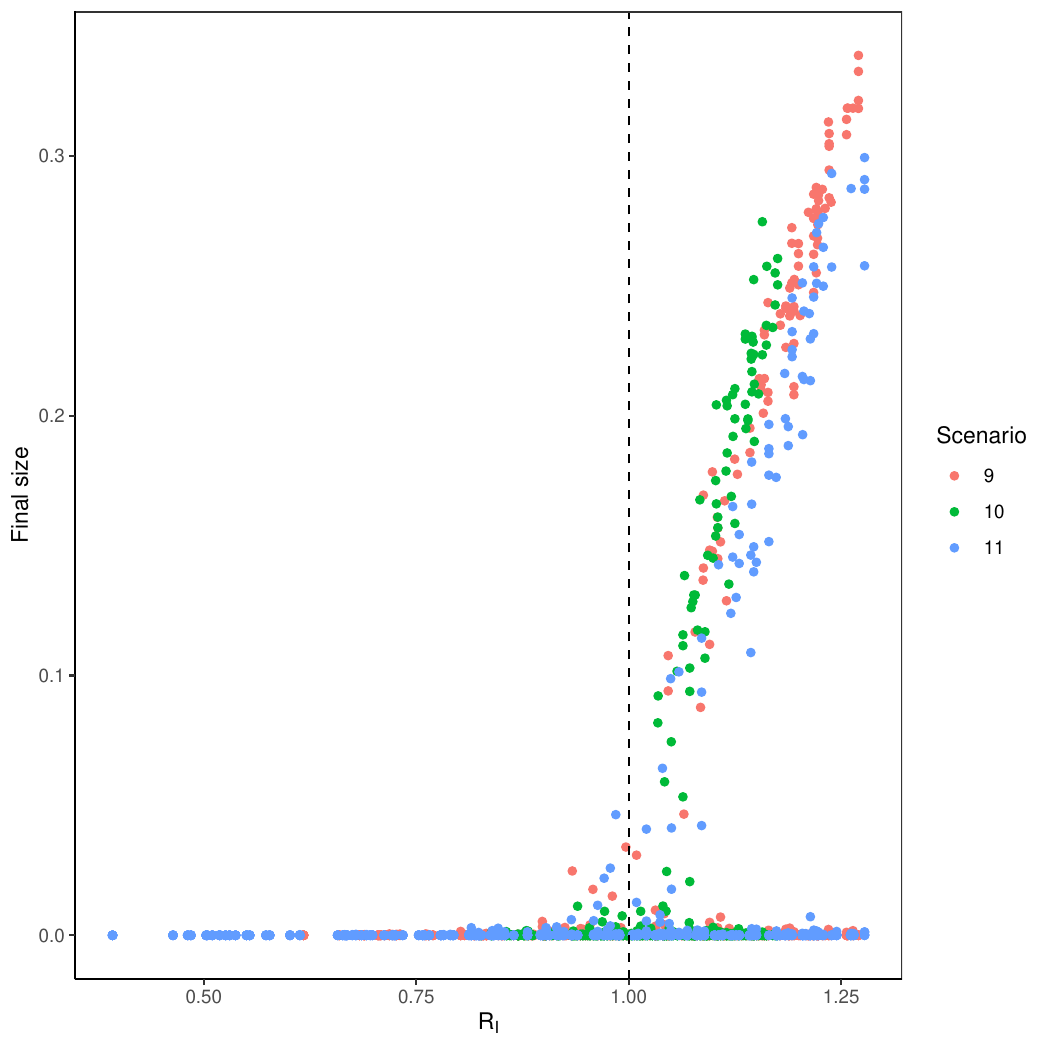} 
\caption{Simulated values of the final size of epidemics of the stochastic structured model as a function of $R_I$. Simulations are performed for the reference household distribution, exploratory workplace size distributions from set A in Table \ref{tab:dist-bis} and epidemiological scenarios 9, 10 and 11 from Table \ref{tab:scenfeat}. For each combination of workplace size distribution and epidemic scenario, 10 runs are performed.
}
\label{fig:R0_outbreak}
\end{figure}

Let us end this section with a brief description of the numerical computation of $R_I$. For a given set of structure size, transmission parameter and recovery rate, we simulate the within structure epidemic using Gillespie's algorithm, and record the final epidemic size. We thus calculate the average size of epidemics in isolated structures from simulations of this epidemic process (default value of the number of runs: 10000, and 50000 for larger workplaces). Notice that the average final epidemic size  could also be obtained through analytic results, see for instance Section 6.4 in \cite{baileyMathematicalTheoryInfectious1975}. The value of $R_I$ is then obtained from Equation \eqref{eq:R0}, as the largest  eigenvalue of the matrix defined in \eqref{eq:reprod_matrix}, by replacing unknown quantities by simulated quantities. The proportions of infections occurring within each layer of mixing are obtained from the associated eigenvector.

\subsection{The effect of structure size distribution on epidemic outcomes}
\label{impactnormal}

\begin{figure}[!ht]
  \centering
\includegraphics[width=\textwidth]{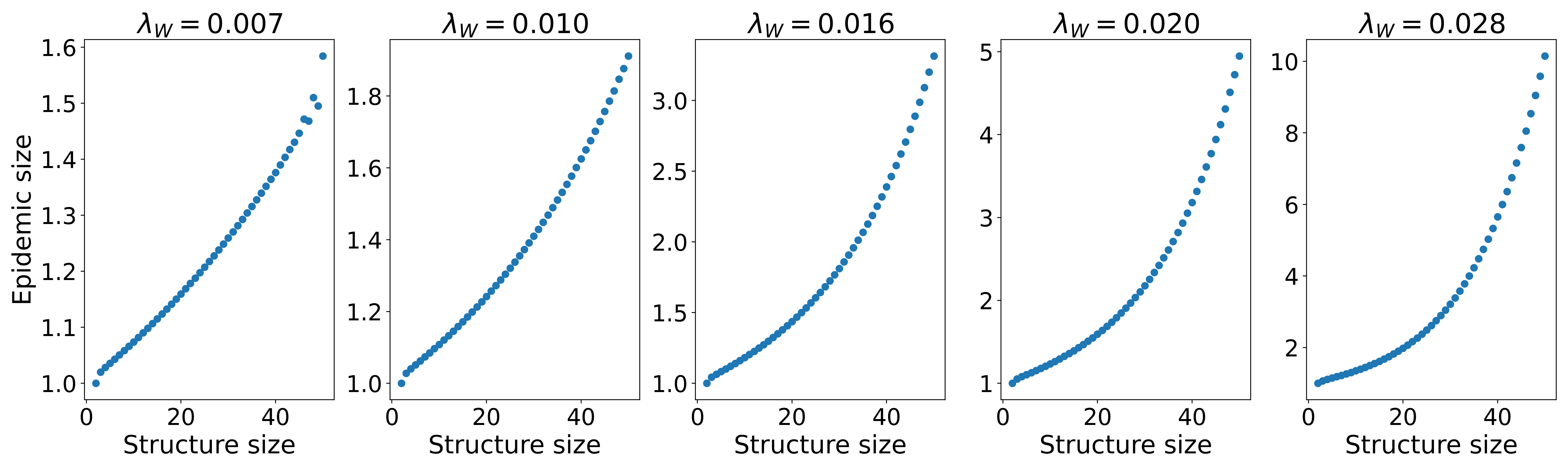} 
\caption{
Average final epidemic size within workplaces, i.e. for the uniformly mixing SIR model with one-to-one contact rate $\lambda_W$ and removal rate $\gamma=1$, as a function of workplace size. The average epidemic final size is computed using explicit formulas obtained from the Sellke construction, see for instance Section 2.4 in \cite{anderssonStochasticEpidemicModels2000}. Different values of one-to-one contact rate $\lambda_W$ have been considered, in order to cover a representative sample of scenarios 1 to 11 of Table B1, for the reference structure size distributions. More precisely, they correspond from left to right to scenarios 10, 4, 11, 2 and 5.
}
\label{fig:final-size}
\end{figure}

In this section, we will consider both the number of individuals and the number of workplaces as fixed. For the latter, one can imagine that this is due to logistic constraints, as there are only a certain number of offices (or classrooms) to dispose of. In other words, we are interested in understanding the epidemic impact of the way employees are assigned to those given workplaces. 

From the previous section, one may notice that the workplace size distribution has a direct impact on $R_I$ through $\mathcal{I}_W$, \emph{i.e.} the average number of infections occurring within workplaces
under the size-biased workplace size distribution $\hat{\pi}^W$. 
In particular, it follows from Equation \eqref{eq:R0} that diminishing $\mathcal{I}_W$ is enough to ensure that $R_I$ decreases. 
Further, Figure \ref{fig:final-size} illustrates that for most epidemic scenarios considered (Tables \ref{tab:scenfeat} and \ref{tab:scenarios}), the average number of infected $i_W(k)$ caused by a within-workplace epidemic in a workplace of size $k$ can be reasonably approximated in some scenarios by a linear function of the workplace size $k$. We deduce that, up to a constant $c$,
\begin{equation}
    \mathcal{I}_W = \sum_{k \geq 1} \hat{\pi}_k^W i_W(k) \approx c\sum_{k \geq 1} \hat{\pi}_k^W k = \frac{c}{m_W} \sum_{k \geq 1} k^2 \pi_k^W = c\frac{m_W^{(2)}}{m_W},
\end{equation}
where $m_W^{(2)}$ designates the second moment of the workplace size distribution.

Since we suppose both the population size and the number of workplaces to be fixed, it follows that the average workplace size $m_W$ is constant as well. As thus, in order to reduce $\mathcal{I}_W$, it is enough to reduce $m_W^{(2)}$. At fixed expected workplace size, modifying $m_W^{(2)}$ is strictly equivalent to modifying the workplace size variance. Since the latter has a more direct and intuitive interpretation, we will focus on the variance of $\pi^W$ as a natural candidate for the epidemic impact of the workplace size distribution.

In order to assess this impact, we will proceed by numerical exploration. A variety of workplace size distributions of average fixed at 20 and different variances have been considered, corresponding to exploratory workplace size distributions set B of Table \ref{tab:dist-bis}. For each of these distributions and for epidemic scenarios 1, 2, 4 and 5 we have computed the epidemic growth rate as explained in Appendix \ref{supp:nuemrical_methods}, before evaluating through simulations the epidemic size and the peak size. Results have been reported in Figure \ref{fig:Variance} (and Figure S2 of the Online Resource for additional scenarios), which thus illustrates the impact of the variance of the workplace size distributions on our selected epidemic outcomes (growth rate, final size, and peak size).
This figure shows that the workplace size variance has a linear impact on these epidemic outcomes, observed for various values of average workplace size, see also Figures S3 and S4 of the Online Resource.
Thus, the variance appears as a relevant indicator of the epidemic impact of the workplace size distribution. This also is of interest for the design of efficient control policies such as teleworking and (partial) closure of schools, as will be explored in the next section. 

\begin{figure}[!ht]
  \centering
\includegraphics[width=0.7\textwidth]{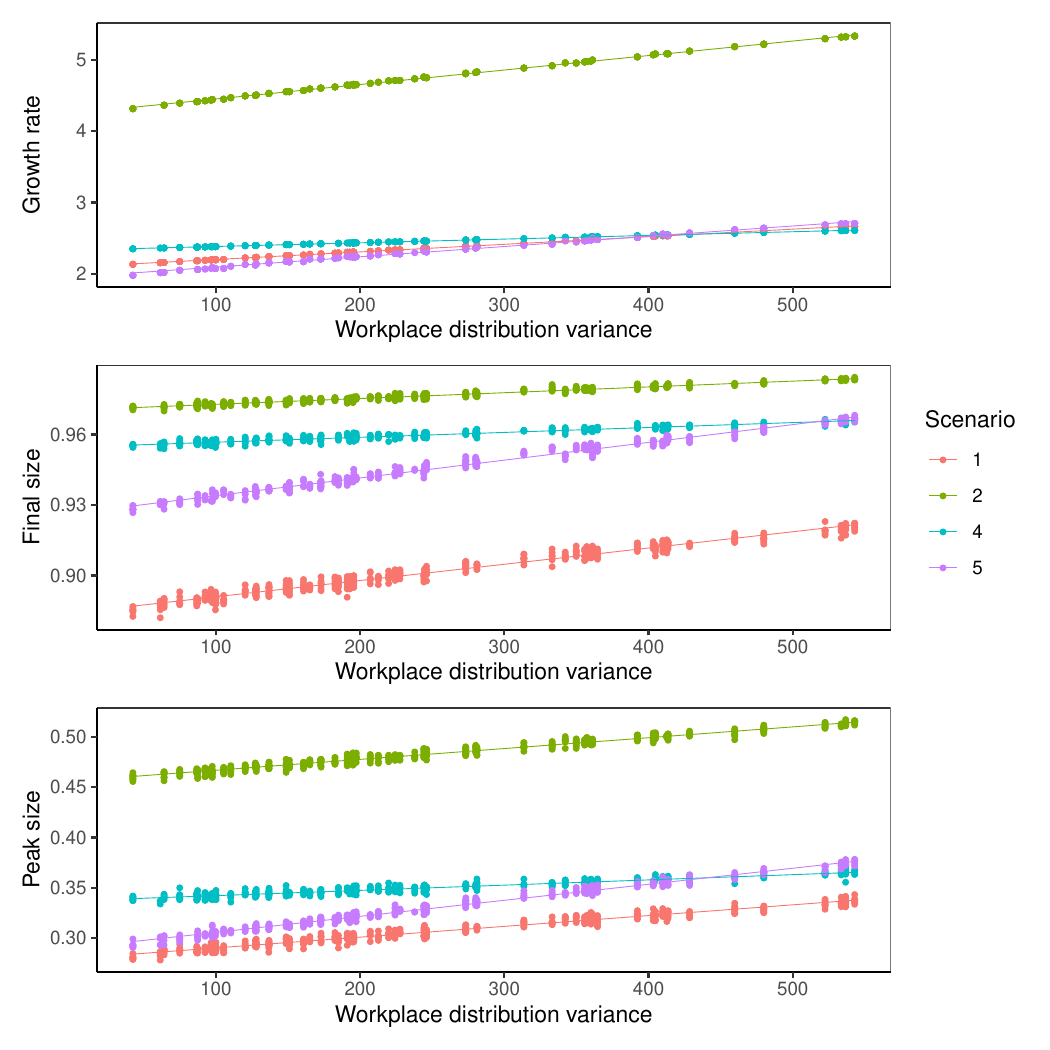} 
\caption{Influence of the variance of the workplace size distribution on the growth rate (top), epidemic final size (middle) and epidemic peak size (pbottom). Simulations of the stochastic structured model were performed with the reference household size distribution, exploratory workplace size distribution set B with average workplace size of 20 from Table \ref{tab:dist-bis} and epidemic scenario 1, 2, 4 and 5 from Table \ref{tab:scenfeat}. Simulations were repeated 10 times for each combination of scenario and workplace size distribution.
}
\label{fig:Variance}
\end{figure}

\subsection{Teleworking strategies}
\label{impacttws}

Teleworking, a strategy to mitigate disease outbreaks, results in changes in the distribution of workplace sizes. These changes have an impact on the value of $R_I$, which has been shown to be a threshold criterion for epidemics, and more generally on the different epidemic outcomes.

Two teleworking strategies, formalized in Section \ref{sec:meth-telework}, were assessed: (i) a linear strategy, where  the same proportion of teleworking is applied equally to all workplaces, and (ii) a sublinear strategy where teleworking is more prevalent in larger workplaces. The motivation for such a strategy is twofold: larger workplaces allow for larger within-workplace epidemics, and they are expected to be better equipped to mitigate the economic impact of teleworking on the firm. We will indeed show that such a strategy has a beneficial health outcome.

Figure \ref{fig:teleworking} illustrates the behaviour of the two teleworking strategies as a function of the teleworking rate, which is defined as the proportion of individuals in the population that do not have contacts in a workplace. Implementation of the teleworking strategies consists in adjusting parameters $\alpha$ and $\delta$ from Section \ref{sec:meth-telework} to obtain a prescribed value of the teleworking rate. The proportions of infections in the different structures are the same for both strategies at the threshold $R_I=1$. The findings show a large reduction in the proportion of infections occurring in the workplaces. The sublinear strategy reaches the threshold for a lower teleworking rate, which indicates that this strategy has a lower impact on workplace organisation for similar epidemic outcomes.

Figure \ref{fig:teleworking2} illustrates, for the reference workplace size distribution and epidemic scenario 1 of Table \ref{tab:scenfeat}, that even if the threshold cannot be reached by simply applying teleworking strategies, the sublinear strategy still outperforms the linear one. In particular, it shows that for the same global teleworking rate, the final size of the epidemic is lower for the sublinear strategy, except for the highest teleworking rates.

In other words, using sublinear teleworking policies (and more generally sublinear strategies for the closure of structures) allows either to reduce the need of teleworking in order to attain a given epidemiological outcome, or to reduce more strongly the epidemiological outcome for a given teleworking rate. Both effects may even be combined: more people go to work, but the epidemic outcome is reduced when compared to the linear strategy. 

\begin{figure}[!ht]
  \centering
\includegraphics[width=0.7\textwidth]{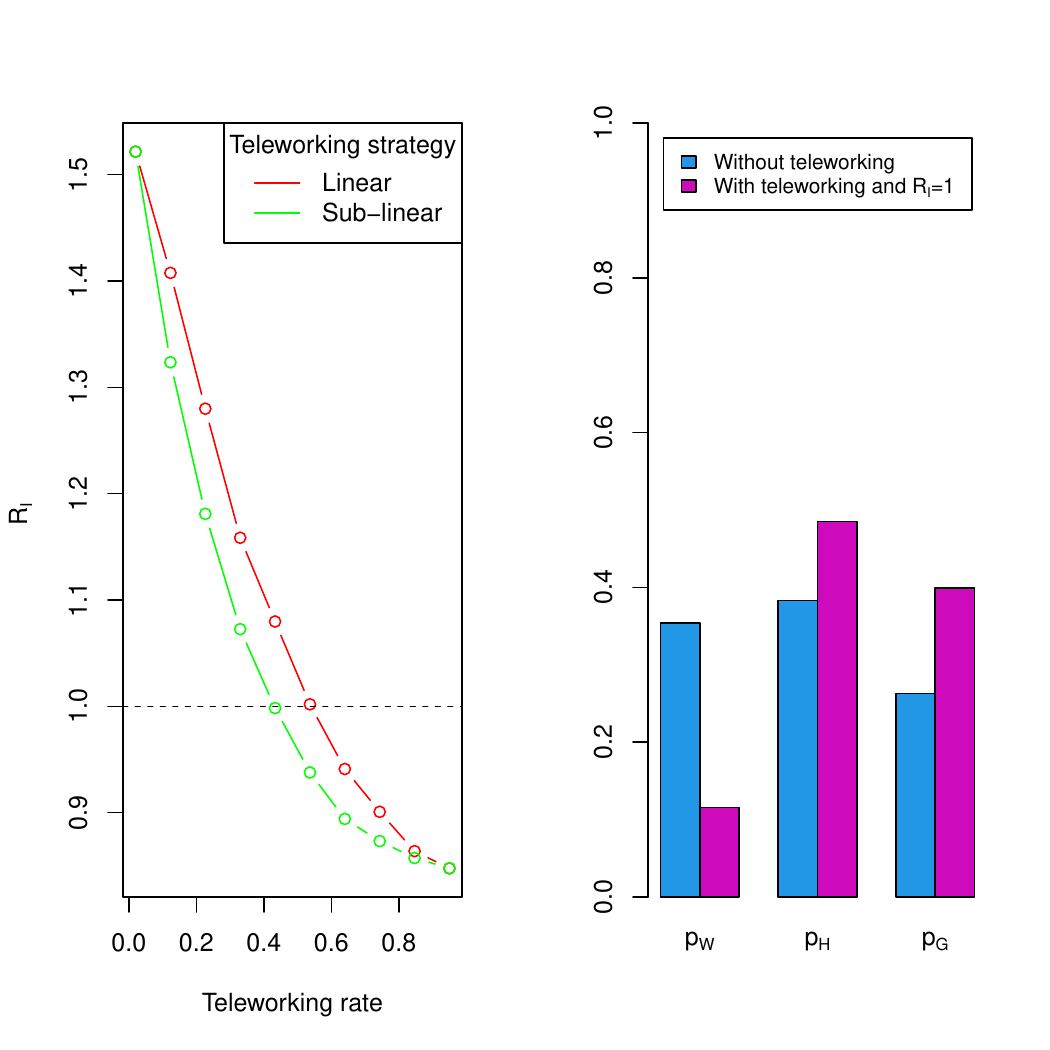}
\caption{Effect of linear and sublinear teleworking strategies, using scenario 6 from Table \ref{tab:scenfeat} and reference household and workplace distributions from Table \ref{tab:dist-bis}. Diminution of $R_I$ as a function of teleworking rate (left) and proportions of infection in the different structures (right).}
\label{fig:teleworking}
\end{figure}

\begin{figure}[!ht]
  \centering
\includegraphics[width=0.7\textwidth]{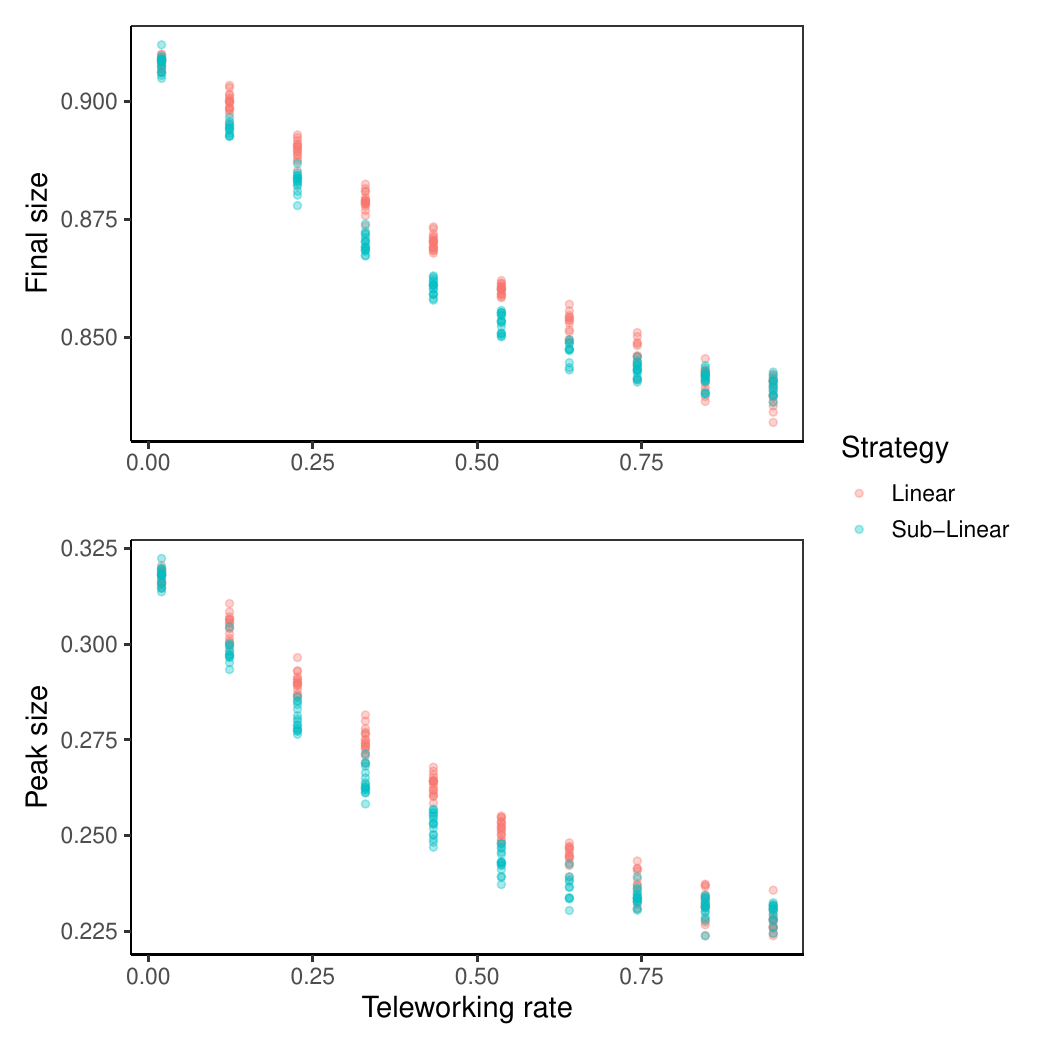} 
\caption{
Impact of linear and sublinear teleworking strategies on the final epidemic size (top) and epidemic peak size (bottom). Simulations of the stochastic structured model were performed using the reference household size distribution and workplace size distribution from Table \ref{tab:dist-bis}, with epidemic parameters from scenario 1 of Table \ref{tab:scenfeat}. 10 runs of were performed for each simulation scenario and distribution.
}
\label{fig:teleworking2}
\end{figure}

\subsection{Robustness to the form of the infection term}
\label{robustinteract}

Assuming linear growth of the number of infectious contacts per susceptible with the number of infected is often an overestimation. 
Thus, in this simulation study, we focus on having an infection rate within social structures growing sublinearly with the number of infected individuals in the structures. More precisely, we assume that within a household ($X=H$) or workplace ($X=W$) containing $I$ infected and $S$ susceptible individuals, the next infection occurs at rate $\lambda_X S \sqrt{I}$.

The observed linear effect of variance on the peak size and final size of the epidemic remains valid when the model is modified to use a sublinear infection rate in households and workplaces. Figure \ref{fig:Variance_sqrt1} shows the effect of the variance of the workplace size distribution with fixed mean on the epidemic outcomes for several scenarios (additional results for other scenarios can be found in Figure S5 of the Online Resource). This impact of the variance appears to hold for all scenarios, suggesting that the effect holds true regardless of both the epidemic speed and the proportions of infections that occur within structures. We also confirmed the validity of this result for workplace distributions with smaller mean sizes, as illustrated in Figures S6 and S7 of the Online Resource.

\begin{figure}[!ht]
  \centering
\includegraphics[width=0.7\textwidth]{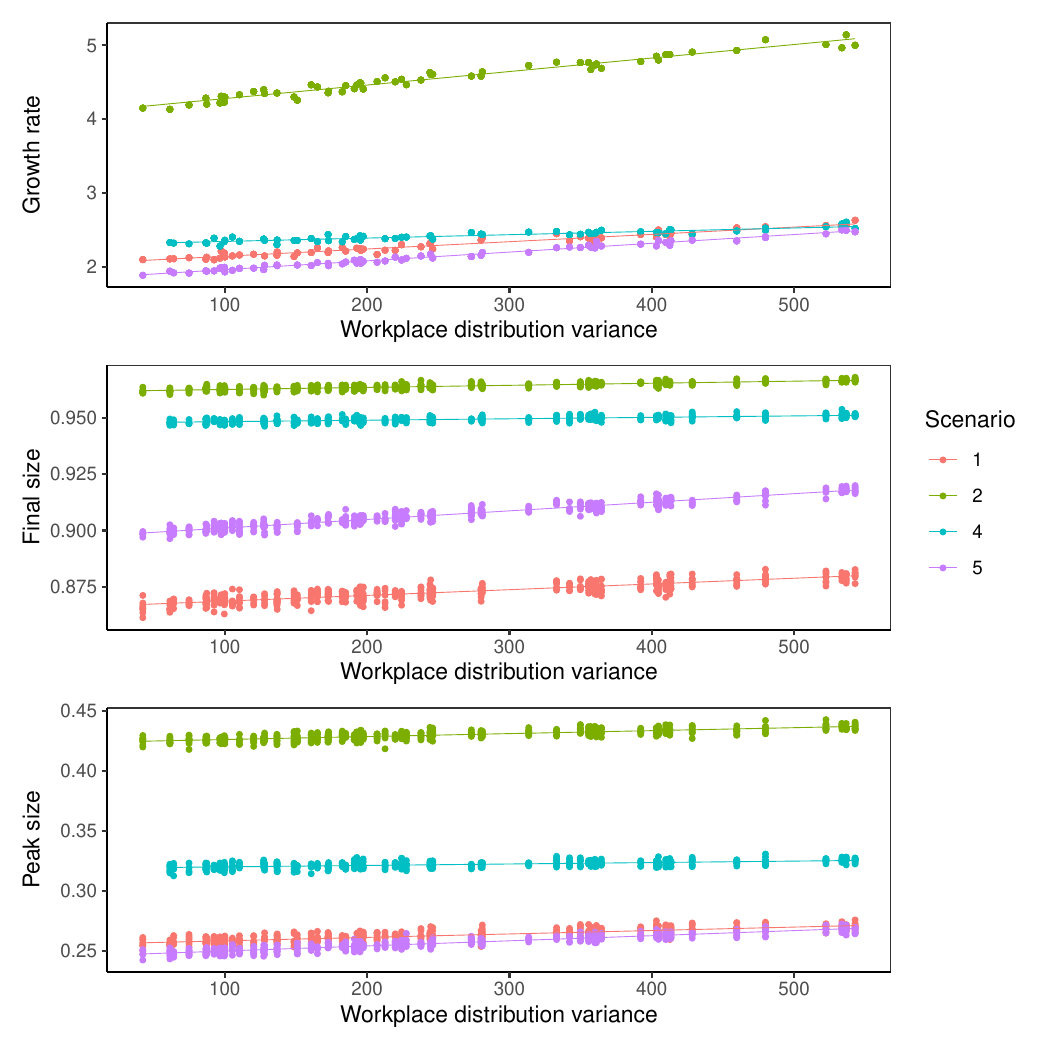} 
\caption{Influence of the variance of the workplace size distribution on the growth rate (top), epidemic final size (middle) and epidemic peak size (bottom). Simulations of the stochastic structured model with sublinear infection rates in households and workplaces were performed with the reference household size distribution, exploratory workplace size distribution set B with average workplace size of 20 from Table \ref{tab:dist-bis} and epidemic scenarios 1, 2, 4 and 5 from Table \ref{tab:scenfeat}. Simulations were repeated 10 times for each combination of scenario and workplace size distribution.
}
\label{fig:Variance_sqrt1}
\end{figure}

\section{Reduction to compartmental ODEs based on the initial growth rate}
\label{reductionODE}

In this section, our aim is to propose a relevant reduction of the multi-level contact process, when the total population is large $(N\gg 1)$ and the number of infected individuals too, corresponding to \textit{phase $2$} and \textit{phase $3$} presented in the introduction. We propose a deterministic reduction which keeps track of the multi-level structuring of contacts, but has a low dimension and depends on few parameters only. It thus allows to see the effect of structure size distributions and control policies modifying them at a low computational cost. We show that the key parameter to achieve this reduction is the initial growth rate. As expected, it captures the initial growth of the size of the infected population. Actually, simulations show that it also allows  for a  relevant prediction of the rest of the epidemic, see Section \ref{sec-reduction} for details on the interest and limitations of this reduction.

We assume that the total population size $N$ is large and consider an approximation in an infinite population. As for the branching approximation considered in Section \ref{impactsizedistrib}, we focus on the beginning of the epidemic (\textit{phase $1$} and \textit{phase $2$}).  As households and workplaces are chosen independently from one another and for each individual, this implies that whenever an infection occurs in the general population, it will almost surely affect an individual whose household and workplace are entirely susceptible otherwise. Similarly, an infection taking place within a household will cause an infection within an otherwise susceptible workplace, and vice-versa. Some time is needed to reach a large (but still negligible compared to $N$) number of infected individuals and forget the peculiar initial condition. Perron Frobenius theorem  allows to get a deterministic growth rate, which is observable in \textit{phase $2$}. More precisely, it is observed at the beginning of phase 2 and more generally before the infected population is too close to $N$. For more quantitative results on this point for the SIR model, we refer to Proposition 5.1
of  \cite{bansayeSharpApproximationHitting2023}. It is called the initial growth rate $r$. It will play a crucial role in reducing and analyzing the process in the deterministic phase with a macroscopic number of infected individuals (\textit{phase $3$}).

For the stochastic \emph{SIR} model in large homogeneously mixing populations, the initial growth rate can readily be obtained \cite[Section 1.2.3]{diekmannMathematicalEpidemiologyInfectious2000}. Let us briefly explain the heuristics of the reasoning. Consider $\mathbf{i}(t)$ the number of new infections in the population occurring at time $t$ after the start of the epidemic. It is easy to see that $\mathbf{i}(t)$ satisfies a renewal equation, which may be used to deduce an implicit equation for the exponential growth rate $\bar{r}$. Indeed, suppose that $\mathbf{i}(t) = C\rme^{\bar{r} t}$ for some constant $C$, and let $\zeta(\tau)$ denote the average rate at which an individual who has been infected $\tau$ units of time ago transmits the disease. Then $\bar{r}$ is characterized as follows:
\begin{equation}
\label{growth-rate-hSIR}
\Laplace(\zeta)(\bar{r}) \coloneqq \int_{0}^{\infty} \zeta(\tau) \rme^{-\bar{r} \tau} d\tau = 1,
\end{equation}
where $\Laplace$ designates the Laplace transform operator. In order to conclude, it remains to make $\zeta$ explicit, and at the beginning of the epidemic, one readily obtains the approximation $\zeta(\tau) \approx \beta \rme^{-\gamma \tau}$. Injecting this into the implicit equation $\Laplace(\zeta)(\bar{r}) = 1$ leads to the well-known growth rate $\bar{r} = \beta - \gamma$. This derivation can rigorously be obtained using branching approximations \cite{modeLifeCycleModels2000}.

Exponential growth of infections is also observed when household-workplace structures are added to homogeneous mixing, and \cite{pellisEpidemicGrowthRate2011} characterize the associated growth rate $r$. Similarly to what has been done for the reproduction number, they aggregate within-structure epidemics to facilitate the mathematical analysis of the model. This leads to a point of view where an infected household contaminates a new workplace each time an infection occurs during the within-household epidemic, and vice-versa. Using equation (\ref{growth-rate-hSIR}), this allows for the exact characterization of $r$ as the unique solution of an implicit equation, which can be solved numerically. This motivates the study of the Laplace transform
(\ref{growth-rate-hSIR}) of the average rate at which infections occur during the course of within-structure epidemics, which captures the dynamics of these infections. 

Here, we follow and complement the approach in \cite{pellisEpidemicGrowthRate2011}, mainly by providing more explicit expressions of the key quantities involved. The following Section \ref{sec-laplace} introduces our contribution, which lies in Proposition \ref{prop-hatq} and its corollaries. Subsequent Section \ref{sec-pellis} summarizes the work of \cite{pellisEpidemicGrowthRate2011}, and allows to position our contribution in the context of their work. 

\subsection{Laplace transform of the infection rate in a uniformly mixing population}
\label{sec-laplace}

The main point lies in understanding the dynamics of the stochastic \emph{SIR} model in a population of finite size $k$, with any one-to-one infectious contact rate $\lambda$ and  removal rate $\gamma$. The results on within-household or within-workplace epidemics will follow by choosing these parameters accordingly. 

More precisely, consider the continuous-time Markov chain $X_{k, \lambda, \gamma} = \left(S, I\right)$ taking values in $\bar{\Omega}(k) = \{(s,i) \in \left(\N \cup \{0\}\right)^2: s + i \leq k\}$, and whose transition rates are given by 
\begin{equation}
\label{rate-sir}
\begin{aligned}
\text{Transition} &\quad & \text{Rate} \\
(s, i) \to (s - 1, i + 1) & \quad & \lambda s i; \\
(s, i) \to (s, i-1) & \quad & \gamma i.
\end{aligned}
\end{equation}
Then $S_t$ and $I_t$ represent respectively the number of susceptible and infected individuals at time $t$. Furthermore, the initial condition of interest is $\xx = (k-1, 1)$, and $\P_{\xx}$ denotes the probability conditionally on $X_{k, \lambda, \gamma}(0) = \xx$. 

Let us start by summarizing the results obtained by \cite{pellisEpidemicGrowthRate2011} on this matter. From (\ref{rate-sir}), it is obvious that when the population is in state $(s,i)$, a new infection takes place at rate $\lambda s i$. In particular, this rate is non-null if and only if $si \geq 1$, so we can restrict the study to the set of transient states $\Omega(k) = \{(s,i) \in \bar{\Omega}(k): i \geq 1\}$. Let $\zeta_{k, \lambda, \gamma}(t)$ be the average infection rate  in a population of composition $X_{k, \lambda, \gamma}(t)$, conditionally on $X_{k, \lambda, \gamma}(0) = \xx$. It is clear that, by definition,
\begin{equation}
\zeta_{k, \lambda, \gamma}(t) = \sum_{(s,i) \in \Omega(k)} \lambda s i \P_{\xx}(X_{k, \lambda, \gamma}(t) = (s,i)).
\end{equation}

\noindent Consider $\Qlg(k)$ the restriction of the generator of $X_{k, \lambda, \gamma}$ to $\Omega(k)$, which is defined as the following matrix indexed by states in $\Omega(k)$\footnote{We have made a slight change here compared to  \cite{pellisEpidemicGrowthRate2011} since the mortality matrix
$\Delta$ needed to be deleted from the expression of $Q$.}

\begin{equation}
\begin{aligned}
\label{eq-Q}
\forall (s,i)& \in \Omega(k), \forall (s',i') \in \Omega(k),  \\ 
&\left(\Qlg(k)\right)_{(s,i),(s',i')} =
\begin{cases}
\begin{aligned}
- \lambda s i - \gamma i \;\; & \text{ if } (s',i') = (s,i); \\
\lambda s i \;\; & \text{ if } (s',i') = (s-1, i+1); \\
\gamma i \;\; & \text{ if } (s',i') = (s, i-1); \\
0 \;\; & \text { otherwise.}
\end{aligned}
\end{cases}
\end{aligned}
\end{equation}

\noindent Then it is well known that for all $(s,i)$ in $\Omega(k)$,
\begin{equation*}
\P_{\xx}(X_{k, \lambda, \gamma}(t) = (s,i)) = \left(\rme^{t \Qlg(k)}\right)_{\xx,(s,i)}.
\end{equation*} 
Thus, a  computation readily yields the following Laplace transform of $\zeta_{k, \lambda, \gamma}$, where $I_{d(k)}$ is the identity matrix of appropriate dimension, namely $d(k) = \#\Omega(k) = k(k+1)/2$: for any $u \geq 0$,
\begin{equation}
\label{eq-lapzeta}
\Lap{k}{\lambda}{\gamma} (u) \coloneqq \Laplace \left( \zeta_{k, \lambda, \gamma} \right)(u) = \sum_{(s,i) \in \Omega(k)} \lambda si \left(\left(uI_{d(k)} - \Qlg(k)\right)^{-1}\right)_{\xx,(s,i)}.
\end{equation}

Let us now turn to our contributions. As we will see in the following proposition, we show that it is possible to go one step further and give an analytic expression for the relevant coefficients of
$$\hQ_{k, \lambda, \gamma} (u) \coloneqq \left(uI_{d(k)} - \Qlg(k)\right)^{-1},$$
for any population size $k$. This will finally allow us to give a more explicit expression of $\Lap{k}{\lambda}{\gamma}$.

Studying the restricted generator $\Qlg(k)$ leads to consider possible trajectories in $\Omega(n)$ leading from some state $(k-\ell, \ell)$ to another state $(s, i)$, for $\ell \leq k$ and $i \leq m = s+i \leq k$ (see Appendix \ref{appendix-proofs}). This incites us to introducing the following set, which allows to encode this set of trajectories:
\begin{equation}
\begin{aligned}
\mathcal{I}_k (\ell,m,i) = \big\{
(i_0,& \dots, i_{m+1}) \in \{\ell\}\times \N^{m} \times \{i\}: \\ \;\;
    & i_m \leq i,\; i_{j-1} - 1 \leq i_j  \leq k-j \quad \forall 1 \leq j \leq m\big\}.
\end{aligned}
\end{equation}
We are now ready to state our result.

\begin{prop} 
\label{prop-hatq}
Let $\ell \in \{1, \dots, k\}$ and consider $(s,i) \in \Omega(k)$ such that $s \leq k-\ell$. \\ 
Then for any $u \geq 0$,
  \begin{equation}
        \label{eq-hatq}
       	\left(\hQ_{k, \lambda, \gamma} (u)\right)_{(k-\ell,\ell),(s,i)} = \frac{1}{u + \lambda si + \gamma i} \sum_{ \mathfrak{i} \in \mathcal{I}_k(\ell,m,i)}     
         \prod_{j=0}^{m} q_{k, \lambda, \gamma}(\mathfrak{i},j; u) g_{k, m,\lambda, \gamma}(\mathfrak{i},j; u)
        \end{equation}
 where $m = k - (s + i)$ and 
     \begin{equation}
     \label{eq-qn}
     q_{k, \lambda, \gamma}(\mathfrak{i},j; u)=\prod_{w=i_j}^{ i_{j+1} - \ind_{\{\ell = m\}} }\left[1+\frac{u+\gamma w}{\lambda (k-j-w)w}\right]^{-1}
     \end{equation}
     and
      \begin{equation}
      \label{eq-gn}
      g_{k,m, \lambda, \gamma}(\mathfrak{i},j; u) = 
      \begin{cases} 
      \begin{aligned}
      & \left[1+\frac{u + \lambda(k - j - i_{j+1} - 1)(i_{j+1} +1)}{\gamma(i_{j+1}+1)}\right]^{-1} \;\; & \text{ for } j < m, \\
       & 1 \;\; & \text{ for } j = m.
  	\end{aligned}
       \end{cases}
      \end{equation}
     Furthermore, for every state $(s,i) \in \Omega(k)$ such that $s > k-\ell$, $\left(\hQ_{k, \lambda, \gamma} (u) \right)_{(k-\ell,\ell),(s,i)} = 0$.
\end{prop}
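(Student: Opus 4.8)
The plan is to compute the resolvent $\hQ_{k,\lambda,\gamma}(u) = (uI_{d(k)} - \Qlg(k))^{-1}$ through its probabilistic path-sum representation, exploiting the acyclic structure of the restricted chain. First I would record the two structural features of $X_{k,\lambda,\gamma}$ on $\Omega(k)$: an infection $(s,i)\to(s-1,i+1)$ strictly decreases $s$ while preserving the total $s+i$, whereas a recovery $(s,i)\to(s,i-1)$ preserves $s$ and strictly decreases $s+i$. Hence the pair $(s,\,s+i)$ strictly decreases at every transition, so the restricted chain is acyclic. Writing $\Qlg(k) = -D + A$ with $D$ the diagonal of total exit rates $q_{(s,i)} = \lambda s i + \gamma i$ (which is $\geq \gamma > 0$ on $\Omega(k)$ since $i\geq 1$) and $A$ the off-diagonal matrix of transition rates, acyclicity makes $A$ nilpotent, so that
\[
\hQ_{k,\lambda,\gamma}(u) = \sum_{n\geq 0}\bigl((uI_{d(k)}+D)^{-1}A\bigr)^{n}(uI_{d(k)}+D)^{-1}
\]
is a \emph{finite} sum, valid for every $u\geq 0$. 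Reading off the $(x,y)$ entry yields the path-sum formula: $\bigl(\hQ_{k,\lambda,\gamma}(u)\bigr)_{x,y}$ is the sum over all directed paths $x = z_0 \to \cdots \to z_n = y$ of $\prod_{l=0}^{n-1}\frac{A_{z_l z_{l+1}}}{u + q_{z_l}}\cdot\frac{1}{u+q_y}$.

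Second, I would settle reachability and parametrize the admissible paths. Since $s$ is non-increasing, no path joins $(k-\ell,\ell)$ to $(s,i)$ when $s > k-\ell$, which gives the final assertion that the corresponding entry vanishes. For $s\leq k-\ell$, grade the states by the number of recoveries $m = k-(s+i)$: every path is a \emph{staircase} in which, at each level $j\in\{0,\dots,m\}$ (where $s+i = k-j$), a run of infections raises the infected count, followed for $j<m$ by a single recovery dropping to level $j+1$. Encoding the infected count at the start of level $j$ as $i_j$, with $i_0=\ell$ fixed by the initial condition $\xx = (k-1,1)$ generalized to $(k-\ell,\ell)$ and $i_{m+1}=i$ the target, the monotonicity of infections across the recovery step forces $i_j\geq i_{j-1}-1$, and non-negativity of the susceptibles $k-j-i_j$ forces $i_j\leq k-j$; together with $i_m\leq i$ these are precisely the constraints defining $\mathcal{I}_k(\ell,m,i)$. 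I would check that $\mathfrak{i}\mapsto$ (its staircase) is a bijection onto the admissible paths.

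Third, I would compute the weight of each path. The target factor $1/(u+q_y) = (u+\lambda s i + \gamma i)^{-1}$ is the common prefactor. Along level $j$, each infection taking the infected count from $w$ to $w+1$ (so the susceptibles equal $k-j-w$) contributes
\[
\frac{\lambda(k-j-w)w}{u+\lambda(k-j-w)w+\gamma w} = \Bigl[1+\tfrac{u+\gamma w}{\lambda(k-j-w)w}\Bigr]^{-1},
\]
and collecting these over the infections performed at level $j$ reconstitutes $q_{k,\lambda,\gamma}(\mathfrak{i},j;u)$; the run ends at $i_{j+1}+1$ infected for $j<m$ but at the target value $i$ for the terminal level $j=m$, where no recovery follows, which is exactly what fixes the upper index of that product. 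Each recovery (present only for $j<m$) leaves a state with $i_{j+1}+1$ infected and $k-j-i_{j+1}-1$ susceptibles and contributes $\frac{\gamma(i_{j+1}+1)}{u+\lambda(k-j-i_{j+1}-1)(i_{j+1}+1)+\gamma(i_{j+1}+1)} = g_{k,m,\lambda,\gamma}(\mathfrak{i},j;u)$, with $g=1$ set at $j=m$. Multiplying all factors and summing over $\mathfrak{i}\in\mathcal{I}_k(\ell,m,i)$ produces \eqref{eq-hatq}.

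I expect the main obstacle to be the combinatorial bookkeeping at the level boundaries rather than any single hard estimate: one must encode the recovery step as the coupling $i_j\geq i_{j-1}-1$ between the top of one level and the bottom of the next, keep the susceptible-positivity bounds $i_j\leq k-j$ straight, and above all treat the terminal level $j=m$ consistently, since it carries a run of infections but no closing recovery. Getting the separation of the infection factors ($q$) from the recovery factors ($g$) exactly right at this boundary is the delicate point; the algebra of each individual factor is routine once the staircase indexing is pinned down.
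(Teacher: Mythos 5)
Your proof is correct, but it follows a genuinely different route from the paper's. The paper proceeds by induction on $k$: it orders the states of $\Omega(k+1)$ so that $\mathbf{Q}_{\lambda,\gamma}(k+1)$ becomes block upper triangular, with an upper-bidiagonal block $\bfA$ encoding the infections among states with $s+i=k+1$, a block $\bfB$ encoding the recoveries down to $\widehat{\Omega}(k)$, and $\mathbf{Q}_{\lambda,\gamma}(k)$ in the remaining corner; it then inverts the bidiagonal block explicitly, applies the block-inversion formula \eqref{eq-inverse-bloc}, and splices the induction hypothesis into the result via an index shift. Your argument instead inverts the resolvent in one shot: acyclicity of the restricted chain makes the off-diagonal part $A$ nilpotent, so $(uI_{d(k)}-\Qlg(k))^{-1}$ equals a finite Neumann series, whose entries are sums of weights over directed paths; the staircase parametrization of paths by tuples in $\mathcal{I}_k(\ell,m,i)$ then produces \eqref{eq-hatq} directly. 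What your approach buys is conceptual transparency — each term of the sum is exhibited as the weight of an explicit trajectory, the factors $q$ and $g$ acquire probabilistic meaning as embedded-jump-chain type ratios, and the vanishing of the entry for $s>k-\ell$ falls out of reachability for free, whereas the paper must carry it through the induction. What the paper's induction buys is that it never needs the path bijection to be set up globally: the level-by-level peeling (the inverse of $\bfA'$ playing the role of your level-$0$ infection run, $\bfB$ the closing recovery) handles the combinatorial bookkeeping mechanically, at the cost of the index-shifting gymnastics with the projection $\tau$. One point worth making explicit: your derivation yields the indicator $\ind_{\{j=m\}}$ as the correction to the upper limit in \eqref{eq-qn}, not the $\ind_{\{\ell=m\}}$ literally printed in the statement; the printed version is evidently a typo, since the paper's own base case $k=1$ ($\ell=1$, $m=0$, $j=0$) only returns $(u+\gamma)^{-1}$ if the product is empty, i.e.\ if the indicator fires when $j=m$. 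Finally, if $\N$ is read as containing $0$, tuples with some $i_j=0$ belong to $\mathcal{I}_k(\ell,m,i)$ but correspond to no path in $\Omega(k)$; you should note that these terms carry a factor $\bigl[1+\tfrac{u+\gamma w}{\lambda(k-j-w)w}\bigr]^{-1}$ at $w=0$ that vanishes (for $u>0$, and by continuity the identity extends to $u=0$), so your bijection onto genuine paths still accounts for the full sum.
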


The proof of Proposition \ref{prop-hatq} uses arguments of linear algebra. Details can be found in Appendix \ref{appendix-proofs}. Using Equation (\ref{eq-lapzeta}), a more explicit expression of $\Lap{k}{\lambda}{\gamma}$ follows from equation (\ref{eq-hatq}). Let us define the ensemble
        \begin{equation*}
        \begin{aligned}
        \mathcal{I}_{k}(m) = \big\{&
        (i_0, i_1, \dots, i_{m},i_{m+1}) \in \{1\}\times \N^{m+1}: \\
        & i_{m+1} \leq k - m,\,  i_m \leq i_{m+1},\, i_{j-1} - 1 \leq i_j   \quad \forall 1 \leq j \leq m \big\}.
        \end{aligned}
  \end{equation*}
We can now state the result, using the same notations as in Proposition \ref{prop-hatq}.

\begin{corol}
\label{prop-lapzeta}
For any integer $k$ and any set of parameters $\lambda, \gamma > 0$ and any $u \geq 0$,
        \begin{equation}
        \label{eq-lapn}
        \Lap{k}{\lambda}{\gamma} (u) = \sum_{m=0}^{k-1} \sum_{ \mathfrak{i} \in \mathcal{I}_{k}(m)}    
         c_{k, \lambda, \gamma}(\mathfrak{i}; u) \prod_{j=0}^{m} q_{k, \lambda, \gamma}(\mathfrak{i},j; u) g_{k,m, \lambda, \gamma}(\mathfrak{i},j; u)
        \end{equation}
       
     	 with

	\begin{equation}
	c_{k, \lambda, \gamma} (\mathfrak{i}; u) = \left[1 + \frac{u +  \gamma i _{m+1}}{\lambda (k-m-i _{m+1})i _{m+1}}\right]^{-1}.
\end{equation}
\end{corol}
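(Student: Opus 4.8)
The plan is to substitute the explicit coefficient formula of Proposition~\ref{prop-hatq} directly into the spectral representation \eqref{eq-lapzeta} of $\Lap{k}{\lambda}{\gamma}$, and then reorganize the resulting double sum. Since the initial condition is $\xx = (k-1,1)$, every coefficient occurring in \eqref{eq-lapzeta} is of the form $\left(\hQ_{k,\lambda,\gamma}(u)\right)_{(k-1,1),(s,i)}$, that is, the case $\ell = 1$ of Proposition~\ref{prop-hatq}. First I would note that the sum in \eqref{eq-lapzeta} ranges over $(s,i)\in\Omega(k)$, for which $i\geq 1$ forces $s\leq k-1 = k-\ell$; hence the vanishing case $s>k-\ell$ of Proposition~\ref{prop-hatq} never arises and the explicit formula \eqref{eq-hatq} applies to every term.

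Next I would absorb the prefactor $\lambda si$ into the leading factor $1/(u+\lambda si+\gamma i)$ of \eqref{eq-hatq}. Writing $m = k-(s+i)$ and identifying $i$ with the terminal index $i_{m+1}$ of the trajectories in $\mathcal{I}_k(1,m,i)$, so that $s = k-m-i_{m+1}$, a one-line manipulation gives
\[
\frac{\lambda s i}{u + \lambda s i + \gamma i} = \left[1 + \frac{u + \gamma i_{m+1}}{\lambda(k-m-i_{m+1})i_{m+1}}\right]^{-1} = c_{k,\lambda,\gamma}(\mathfrak{i};u).
\]
This is exactly the coefficient $c_{k,\lambda,\gamma}$ of the statement, so after substitution each term of \eqref{eq-lapzeta} becomes $c_{k,\lambda,\gamma}(\mathfrak{i};u)$ multiplied by the inner sum over $\mathfrak{i}\in\mathcal{I}_k(1,m,i)$ of the product $\prod_{j=0}^{m} q_{k,\lambda,\gamma}(\mathfrak{i},j;u)\, g_{k,m,\lambda,\gamma}(\mathfrak{i},j;u)$.

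It then remains to merge the outer sum over $(s,i)\in\Omega(k)$ with the inner sum over trajectories. The key combinatorial identity I would establish is
\[
\mathcal{I}_k(m) = \bigsqcup_{i_{m+1}} \mathcal{I}_k(1,m,i_{m+1}),
\]
i.e. letting the terminal value $i_{m+1}$ range over its admissible values while keeping $m = k-(s+i)$ fixed reproduces exactly $\mathcal{I}_k(m)$, the correspondence being $(s,i)\in\Omega(k)\leftrightarrow(m,i_{m+1})$ with $s = k-m-i_{m+1}$. Grouping the terms of \eqref{eq-lapzeta} by the value of $m\in\{0,\dots,k-1\}$ then yields \eqref{eq-lapn} verbatim. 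The boundary case $s=0$ (equivalently $i_{m+1}=k-m$) is harmless: there $\lambda si=0$, and correspondingly $c_{k,\lambda,\gamma}(\mathfrak{i};u)=0$ because its fractional term blows up, so including or excluding these trajectories changes nothing.

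I expect the main obstacle to be verifying that the two descriptions of the index set genuinely coincide, namely that the explicit upper bounds $i_j \leq k-j$ ($1\leq j\leq m$) written into the definition of $\mathcal{I}_k(1,m,i)$ are automatically enforced by the constraints defining $\mathcal{I}_k(m)$, which only list the chain inequalities $i_{j-1}-1\leq i_j$ together with $i_m\leq i_{m+1}\leq k-m$. This follows by a short downward induction: from $i_{m+1}\leq k-m$ and $i_m\leq i_{m+1}$ one gets $i_m\leq k-m$, and reading the chain inequality as $i_{j-1}\leq i_j+1$ propagates the bound to give $i_j\leq k-j$ for every $j$. Once this equivalence of index sets is secured, the remainder is pure bookkeeping.
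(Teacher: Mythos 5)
Your proof is correct and follows essentially the same route as the paper's: substitute Proposition \ref{prop-hatq} with $\ell=1$ into \eqref{eq-lapzeta}, rewrite the prefactor $\lambda s i$ combined with the leading factor $1/(u+\lambda si+\gamma i)$ as $c_{k,\lambda,\gamma}(\mathfrak{i};u)$, and regroup the sum over states $(s,i)\in\Omega(k)$ by $m=k-(s+i)$ via the identification of the index sets. You are in fact somewhat more careful than the paper, which merely asserts that ``the conclusion follows by the definition of $\mathcal{I}_k(m)$ and $c_{k,\lambda,\gamma}$'': your downward-induction verification that the bounds $i_j\leq k-j$ are implied by the constraints defining $\mathcal{I}_k(m)$, and your explicit treatment of the $s=0$ boundary where the fractional expression for $c_{k,\lambda,\gamma}$ degenerates, spell out precisely what that assertion requires.
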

In the following section, we will see how $\Lap{k}{\lambda}{\gamma}$ intervenes in the computation of the growth rate $r$ of the epidemic. Another quantity of similar nature will be needed, namely the Laplace transform $\LapG{k}{\lambda}{\gamma}{\beta_G}$ of the average rate $\zeta_{k, \lambda, \gamma, \beta_G}^G$ at which all individuals of a structure of composition $X_{k, \lambda, \gamma}(t)$, conditionally on $X_{k, \lambda, \gamma}(0) = \xx$, contaminate individuals in the general population. Obviously, when the structure is in state $(s,i) \in \bar{\Omega}(k)$, this rate is given by $\beta_G i$, considering that the global proportion of susceptible individuals is close to one. In other words,

\begin{equation}
\label{eq-lapzetaG}
\LapG{k}{\lambda}{\gamma}{\beta_G} (u) \coloneqq \Laplace \left( \zeta_{k, \lambda, \gamma, \beta_G}^G \right)(u) = \sum_{(s,i) \in \Omega(k)} \beta_G i \left(
\hQ_{k, \lambda, \gamma} (u)
\right)_{\xx,(s,i)}.
\end{equation}

This rate is positive if and only if $(s,i) \in \Omega(k)$. Proceeding like before, we obtain the following formula:

\begin{corol}
\label{prop-lapzetag}
For any integer $k$, for any set of parameters $\lambda, \gamma, \beta_G > 0$, for any $u \geq 0$,
        \begin{equation}
        \label{eq-lapgn}
        \LapG{k}{\lambda}{\gamma}{\beta_G} (u)  = \sum_{m=0}^{k-1} \sum_{ \mathfrak{i} \in \mathcal{I}_{k}(m)}    
         c^\prime_{k, \lambda, \gamma, \beta_G} (\mathfrak{i}; u) \prod_{j=0}^{m} q_{k, \lambda, \gamma}(\mathfrak{i},j; u) g_{k,m, \lambda, \gamma}(\mathfrak{i},j; u)
        \end{equation}
        
with

\begin{equation}
c^\prime_{k, \lambda, \gamma, \beta} (\mathfrak{i}; u) = \frac{\beta i _{m+1}}{u + \lambda (k-m-i _{m+1})i _{m+1} + \gamma i _{m+1}}.
\end{equation}
\end{corol}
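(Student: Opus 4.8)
The plan is to derive \eqref{eq-lapgn} by direct substitution, mirroring exactly the computation behind Corollary~\ref{prop-lapzeta}. The starting point is the definition \eqref{eq-lapzetaG}, which expresses $\LapG{k}{\lambda}{\gamma}{\beta_G}(u)$ as a sum of the entries of $\hQ_{k,\lambda,\gamma}(u)$ along the row indexed by the initial state $\xx = (k-1,1)$, each weighted by $\beta_G i$. Since $\xx = (k-\ell,\ell)$ with $\ell = 1$, and every state $(s,i) \in \Omega(k)$ satisfies $s \leq k - i \leq k - 1 = k - \ell$ (because $i \geq 1$ on $\Omega(k)$), Proposition~\ref{prop-hatq} applies to all the relevant entries and none of them vanish. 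I would therefore insert the explicit expression \eqref{eq-hatq} into \eqref{eq-lapzetaG}.

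The key algebraic observation is that the prefactor $1/(u + \lambda s i + \gamma i)$ supplied by \eqref{eq-hatq}, once multiplied by the weight $\beta_G i$ of \eqref{eq-lapzetaG}, reproduces precisely the coefficient $c^\prime$. Writing $m = k - (s+i)$ and identifying the terminal coordinate $i_{m+1} = i$ of the trajectory $\mathfrak{i}$, one has $s = k - m - i_{m+1}$, so that
\begin{equation*}
\frac{\beta_G i}{u + \lambda s i + \gamma i} = \frac{\beta_G i_{m+1}}{u + \lambda (k - m - i_{m+1}) i_{m+1} + \gamma i_{m+1}} = c^\prime_{k, \lambda, \gamma, \beta_G}(\mathfrak{i}; u).
\end{equation*}
This is the \emph{only} place where the present statement differs from Corollary~\ref{prop-lapzeta}: there the weight $\lambda s i$ produced the coefficient $c$, here the weight $\beta_G i$ produces $c^\prime$, while the products of the $q_{k,\lambda,\gamma}$ and $g_{k,m,\lambda,\gamma}$ factors are left untouched.

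It then remains to reorganize the summation. In \eqref{eq-lapzetaG} the outer sum runs over $(s,i) \in \Omega(k)$, equivalently over $m \in \{0, \dots, k-1\}$ and, for each $m$, over the terminal value $i = i_{m+1} \in \{1, \dots, k-m\}$. For fixed $m$ and $i_{m+1}$, the inner sum coming from \eqref{eq-hatq} runs over $\mathfrak{i} \in \mathcal{I}_k(1, m, i_{m+1})$, in which the last coordinate is frozen at $i_{m+1}$. Taking the disjoint union over the admissible terminal values merges these sets into $\mathcal{I}_k(m)$, where $i_{m+1}$ becomes a free coordinate subject to $i_{m+1} \leq k - m$ and $i_m \leq i_{m+1}$. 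Pulling the factor $c^\prime$ in front of the product then yields exactly \eqref{eq-lapgn}.

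The bulk of the difficulty resides entirely in Proposition~\ref{prop-hatq}; granting it, this corollary is a bookkeeping exercise. The one point deserving a little care is the identification of $\mathcal{I}_k(m)$ with the disjoint union over $i_{m+1}$ of the sets $\mathcal{I}_k(1, m, i_{m+1})$, which is an elementary matching of the defining constraints and is verbatim the argument already carried out for Corollary~\ref{prop-lapzeta}.
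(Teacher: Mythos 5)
Your proof is correct and follows essentially the same route as the paper: both substitute the explicit entries from Proposition \ref{prop-hatq} (with $\ell=1$, valid for every state of $\Omega(k)$ since $i\geq 1$ forces $s\leq k-1$) into the definition \eqref{eq-lapzetaG}, identify the weight $\beta_G i$ times the prefactor $1/(u+\lambda si+\gamma i)$ with $c^\prime_{k,\lambda,\gamma,\beta_G}$, and merge the sets $\mathcal{I}_k(1,m,i)$ over the terminal values $i$ into $\mathcal{I}_k(m)$. No gaps; the paper's own proof is exactly this bookkeeping, done jointly with Corollary \ref{prop-lapzeta}.
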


\begin{proof}[Proof of Corollaries \ref{prop-lapzeta} and \ref{prop-lapzetag}]
Start by noticing that 
\begin{equation}
\Omega(k) = \bigsqcup_{m=0}^{k-1} \bigsqcup_{i=1}^{k-m} \{(k-m-i,i)\}.
\end{equation}
Thus, equation (\ref{eq-lapzeta}) becomes, using Proposition \ref{prop-hatq}:
\begin{equation}
\Lap{k}{\lambda}{\gamma} (u) = \sum_{m = 0}^{k-1} \sum_{i=1}^{k-m} \frac{\lambda(k-m-i)i}{u + \lambda(k-m-i)i + \gamma i} \sum_{\mathfrak{i} \in \mathcal{I}_k(1,m,i)} \prod_{j=0}^m q_{k, \lambda,\gamma}(\mathfrak{i},j;u) g_{k,m, \lambda,\gamma}(\mathfrak{i},j;u)
\end{equation}
and the conclusion follows by the definition of $\mathcal{I}_k(m)$ and $c_{k, \lambda, \gamma}(\mathfrak{i};u)$.
Similarly, 
\begin{equation}
\LapG{k}{\lambda}{\gamma}{\beta_G} (u) = \sum_{m = 0}^{k-1} \sum_{i=1}^{k-m} \frac{\beta_G i}{u + \lambda(k-m-i)i + \gamma i} \sum_{\mathfrak{i} \in \mathcal{I}_k(1,m,i)} \prod_{j=0}^m q_{k, \lambda,\gamma}(\mathfrak{i},j;u) g_{k, m, \lambda,\gamma}(\mathfrak{i},j;u)
\end{equation}
and one concludes using the definition of $\mathcal{I}_k(m)$ and $c^\prime_{k, \lambda, \gamma, \beta_G}$.
\end{proof}

We now have introduced all necessary ingredients allowing for the computation of $r$, which is covered in detail in the next section.

\subsection{Characterization of the initial growth rate}
\label{sec-pellis}

Let us now turn to the characterization of $r$ for the multi-level model, which has been obtained by \cite{pellisEpidemicGrowthRate2011}. We summarize their arguments for the sake of completeness and in order to illustrate how Corollaries \ref{prop-lapzeta} and \ref{prop-lapzetag} complement their approach. 

\begin{figure}
	\centering
	\includegraphics[width=0.6\textwidth]{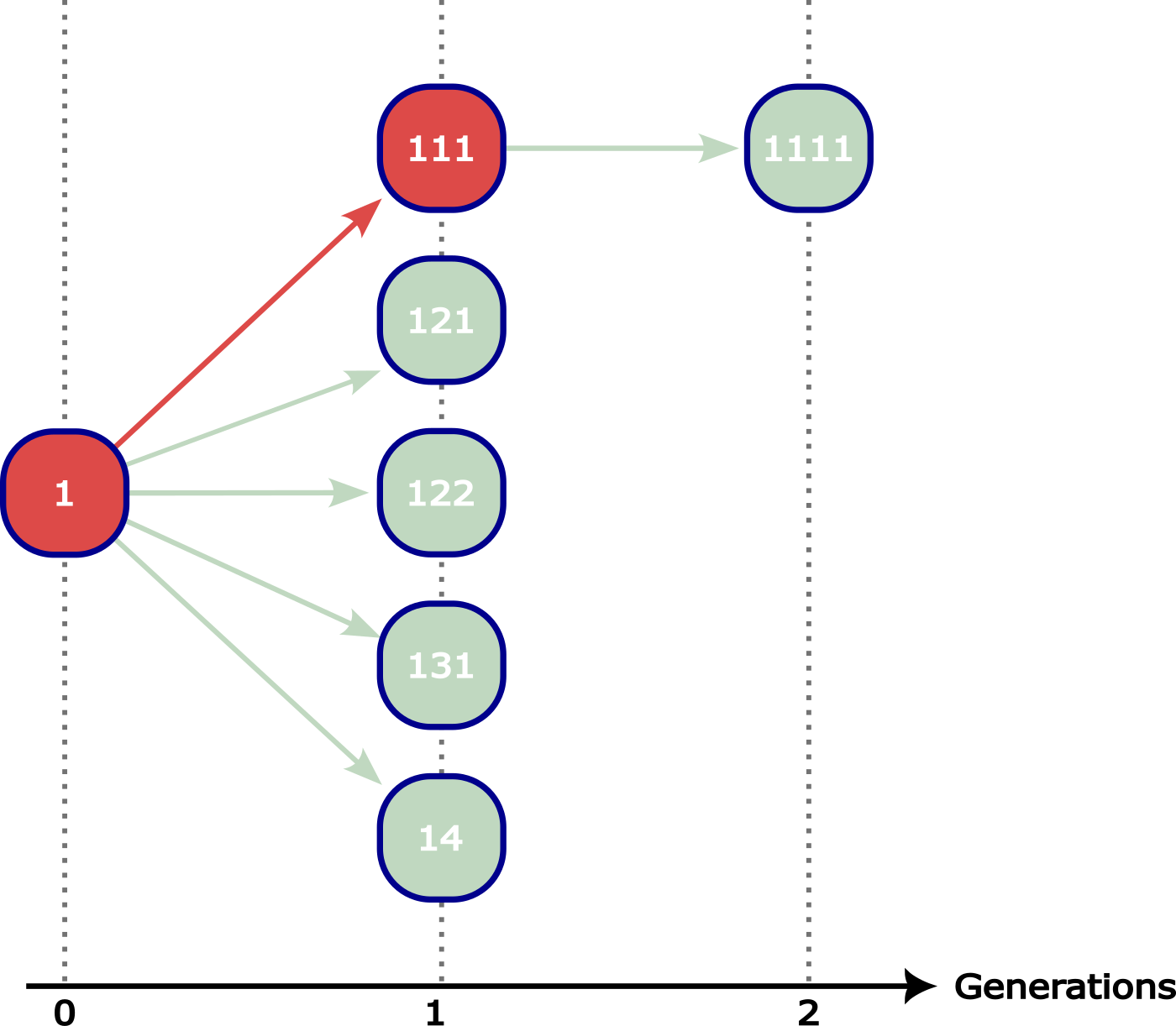}
	\caption{Illustration of the two-type household epidemic process. The example is the same as in Figure \ref{fig-branching-model}. The generations correspond here to generations of infected households, where the labels denote the first infected member of each household consistently with Figure \ref{fig-branching-model}. The colors of the arrows and of the households represent the type of infection, either globally (red) or locally (light green).}
	\label{fig-household-epidemic}
\end{figure}

Their main idea for computing the real time growth rate consists in considering the epidemic at the level of households instead of the individual level. Indeed,  it is possible to reduce the epidemic dynamics to a two-type process, distinguishing households that have been contaminated \emph{locally} (type $L$), if the first infected of the household contracted the disease at his workplace, or \emph{globally} (type $G$) otherwise. Remember that, during the early phase of an epidemic, every newly contaminated household or workplace will be fully susceptible except for its member who has just been infected. Thus, a household infects another household \emph{globally} whenever one of its contaminated members transmits the disease through the general population. Local transmission, on the other hand, occurs in the following way. Every time a member of a contaminated household $H_1$ is infected during its within-household epidemic, a within-workplace epidemic is started at his workplace. Then again, each coworker who is infected during the within-workplace epidemic introduces the disease into his household, which is regarded as \emph{locally} contaminated by household $H_1$. A slight subtlety is worth noticing: whether the first infected individual of a household participates in locally contaminating other households depends on the way he has been infected. If he has contracted the disease at his workplace, then he is not hold responsible for the the within-workplace epidemic there, and thus the other households that are infected through his workplace are not considered locally infected by his household. However, the opposite happens if he was infected through the general population, because he then launches a new within-workplace epidemic. This two-type process is depicted in Figure \ref{fig-household-epidemic}.

Suppose now that the epidemic is in its exponential growth phase, meaning that there exists a growth rate $r$ such that the number of infected individuals at time $t$ is proportional to $\rme^{r t}$. \cite{pellisEpidemicGrowthRate2011} argue that in this case, the household epidemic also grows exponentially at the same rate $r$. It thus is enough to study the previously introduced two-type process. 

For $x, y \in \{L,G\}$, let $\zeta_{xy}(t)$ denote the average rate at which a household of type $y$ which was infected $t$ units of time ago contaminates other households either locally if $x = L$, or globally if $x = G$. For $u \geq 0$, consider the matrix 
\begin{equation}
\label{Kmatrix}
K(u) = \begin{pmatrix} 
\Laplace \zeta_{GL} (u) & \Laplace \zeta_{GG} (u) \\
\Laplace \zeta_{LL} (u) & \Laplace \zeta_{LG} (u) 
\end{pmatrix}.
\end{equation}
It is a classical result \cite{diekmannMathematicalEpidemiologyInfectious2000, pellisEpidemicGrowthRate2011} that for this two-type setting, the growth rate $r$ is characterized as being the unique solution of the implicit equation 
\begin{equation}
\label{eq-r}
\rho(K(r)) = \frac{1}{2} \left(\Trace(K(r)) + \sqrt{\Trace(K(r))^2 - 4\det(K(r))}\right)= 1,
\end{equation}
where the operators $\rho$, $\Trace$ and $\det$ denote the spectral radius, trace and determinant, respectively. 

It thus only remains to take a closer look at $\zeta_{xy}(t)$ for all $x, y \in \{L,G\}$. As these are average rates of infection, one has to take into account the probability for a newly infected individual to belong to a household or workplace of a given size. Naturally, as households and workplaces are chosen independently uniformly at random, size-biased distributions appear both in the case of global and local infections. This leads us to introduce the following notation. For any application $f: (n,z) \mapsto f_n(z)$ on $\N \times \mathbb{R}$ and measure $\nu$ on $\N$, $\nu(f_\bullet)$ defines the function on $\mathbb{R}$ such that $\nu(f_\bullet): z \mapsto \sum_n \nu(n) f_n(z)$.

Within a household of size $k$, by definition, the average rate at which global transmissions occur is $\zeta_{k, \lambda_H, \gamma, \beta_G}^G$.  Since a newly locally or globally contaminated household is of size $k$ with probability $\hat{\pi}^{H}_k$, it follows that 
\begin{equation}
\zeta_{GG} = \zeta_{GL} = \hat{\pi}^H(\zeta_{\bullet, \lambda_H, \gamma, \beta_G}^G).
\end{equation}

On the other hand, within a globally contaminated household of size $n$, new cases appear at rate $\zeta_{k, \lambda_H, \gamma}$. Once infected, each of these individual transmits the disease to his coworkers on average at rate $\zeta_{w, \lambda_W, \gamma}$, where $w$ is the appropriate workplace size. This also applies to the globally infected individual who launched the within-household epidemic. If the household is contaminated locally, the reasoning is the same, except that the initially infected member is not hold responsible for his workplace epidemic, as he is himself a secondary case only. As a consequence, 
\begin{equation}
\begin{aligned}
\zeta_{LG} &=  \hat{\pi}^W(\zeta_{\bullet, \lambda_W, \gamma}) + \hat{\pi}^H(\zeta_{\bullet, \lambda_H, \gamma})* \hat{\pi}^W(\zeta_{\bullet, \lambda_W, \gamma}) \\
\zeta_{LL} &=  \hat{\pi}^H(\zeta_{\bullet, \lambda_H, \gamma})* \hat{\pi}^W(\zeta_{\bullet, \lambda_W, \gamma}).
\end{aligned}
\end{equation}

Using standard properties of Laplace transforms, $K(u)$ thus admits the following expression as derived by \cite{pellisEpidemicGrowthRate2011}, where the coefficients can now be computed using Corollaries \ref{prop-lapzeta} and \ref{prop-lapzetag}: 
\begin{equation}
\label{eq-Kr}
K(u) = \begin{pmatrix} 
 \hat{\pi}^H \left(\LapG{\bullet}{\lambda_H}{\gamma}{\beta_G}\right) (u) & \hat{\pi}^H \left(\LapG{\bullet}{\lambda_H}{\gamma}{\beta_G}\right) (u) \\
\hat{\pi}^H \left(\Lap{\bullet}{\lambda_H}{\gamma}\right) (u)  \hat{\pi}^W \left(\Lap{\bullet}{\lambda_W}{\gamma}\right) (u) &  \left(1 + \hat{\pi}^H \left(\Lap{\bullet}{\lambda_H}{\gamma}\right) (u)\right)  \hat{\pi}^W \left(\Lap{\bullet}{\lambda_W}{\gamma}\right) (u)
\end{pmatrix}.
\end{equation}
Numerical methods then allow to solve the implicit equation (\ref{eq-r}) for the exponential growth rate $r$. We refer to Appendix \ref{supp:nuemrical_methods} for details on the computation procedure.

\subsection{ODE reduction of the multilevel model based on the initial growth rate}
\label{sec-reduction}

The reduced model is a standard deterministic \emph{SIR} model, with infection rate derived from the growth rate $r$, obtained from Equation \eqref{eq-r}.
The reduction of the model is defined by the following set of ordinary differential equations:
\begin{equation}
\label{eq:reduction}
\left\{
\begin{aligned}
\frac{dS}{dt} = & - (r + \gamma) S I \\
\frac{dI}{dt} = & (r + \gamma ) S I - \gamma I \\
\frac{dR}{dt} = & \gamma I \\
\end{aligned}
\right.
\end{equation}

The prediction accuracy of the structured model by the reduced model is illustrated in Figure \ref{fig:reduction_example}, where 1\% of individuals are initially infected. In this example, we compare simulated epidemics of the stochastic structured model with scenario 1 from Table \ref{tab:scenfeat} and reference household and workplace size distributions to the ODE reduction from Table \ref{tab:dist-bis}. The reduced model, based on the initial growth of infected population, accurately predicts, as expected, the early stages of the epidemic. We note however, that the prediction accuracy decreases with time and that the epidemic peak and the final size are overestimated by the reduced model.

Further explorations, using all scenarios and exploratory workplace distributions A, indicate the same trends.
Comparison of epidemic outcomes such as the epidemic peak and the final size, between simulations of the stochastic structured model and their reduced counterpart are presented in Figure \ref{fig:reduction_robustness}. We observe that, regardless of the scenario, the peak size and final epidemic size are largely correlated between the reduced model and numerical simulations of the stochastic structured model. A slight tendency to overestimate the epidemic outcomes with the reduced model can also be noticed. However, Figure \ref{fig:reduction_robustness} (top) illustrate that the epidemic outcomes of the reduced model remain close to the simulations of the stochastic structured model, with differences from the exact model simulations of less than 5\% of the total number of individuals with the reduced model.  The overestimation is more visible concerning the final epidemic size. This figure also shows that the epidemic parameters influence the prediction quality with the reduced model.
This figure also illustrates the effect of the values of $r$ and $p_G$ on the prediction of the final epidemic size with the reduced model. Higher values of $r$, combined with low value for $p_G$ (or high $p_W$), such as in scenario 1 and 2, tend to decrease the prediction accuracy with the reduced model. For lower values of the growth rate, which also corresponds to lower epidemic final size, the prediction accuracy is high, see for example scenarios 6 to 11.

The quality of the prediction is expected to be good for high values of $p_G$, as for $p_G=1$ the reduced model is strictly equivalent to the deterministic approximation of the stochastic \emph{SIR} model without structures. As the value of $p_G$ decreases, propagation of the epidemic through households and workplaces becomes dominant and the structured model cannot be approximated by a uniformly mixing deterministic \emph{SIR} model regardless of the growth rate, as illustrated in Figure \ref{fig:reduction_robustness} (bottom). This figure shows that the epidemic outcomes for the stochastic structured model cannot be obtained by any deterministic \emph{SIR} model for scenarios with lower values of $p_G$ such as scenario 1 and 2, which deviate from the black line representing the possible outcomes with a deterministic \emph{SIR} model. In addition, Figure \ref{fig:reduction_robustness} (top) shows that, for scenarios with lower epidemic size, the prediction of epidemic outcomes by the reduced model is more accurate. This figure also illustrates that the peak size is accurately predicted by the reduced model in these cases, while the prediction of the final size is less accurate. This provides further evidence that the reduction captures the early phase of the epidemic.

\begin{figure}[!ht]
\centering
\includegraphics[width=0.7\textwidth]{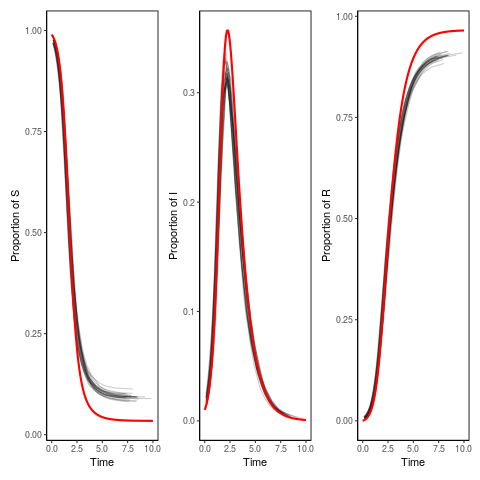} 
\caption{Evolution of the proportion of S, I, R individuals in 30 runs of the stochastic structured model (black), and the reduced model (red). Simulations are performed in a population with 100,000 individuals, with the reference household and workplace size distributions and epidemic parameters from scenario 1. The origin of time of the numerical simulation has been set to the time where 1\% of individuals are infected. The simulation of the reduced model is performed with 1\% of initially infected individuals.}
\label{fig:reduction_example}
\end{figure}

\begin{figure}[!ht]
  \centering
\includegraphics[width=0.7\textwidth]{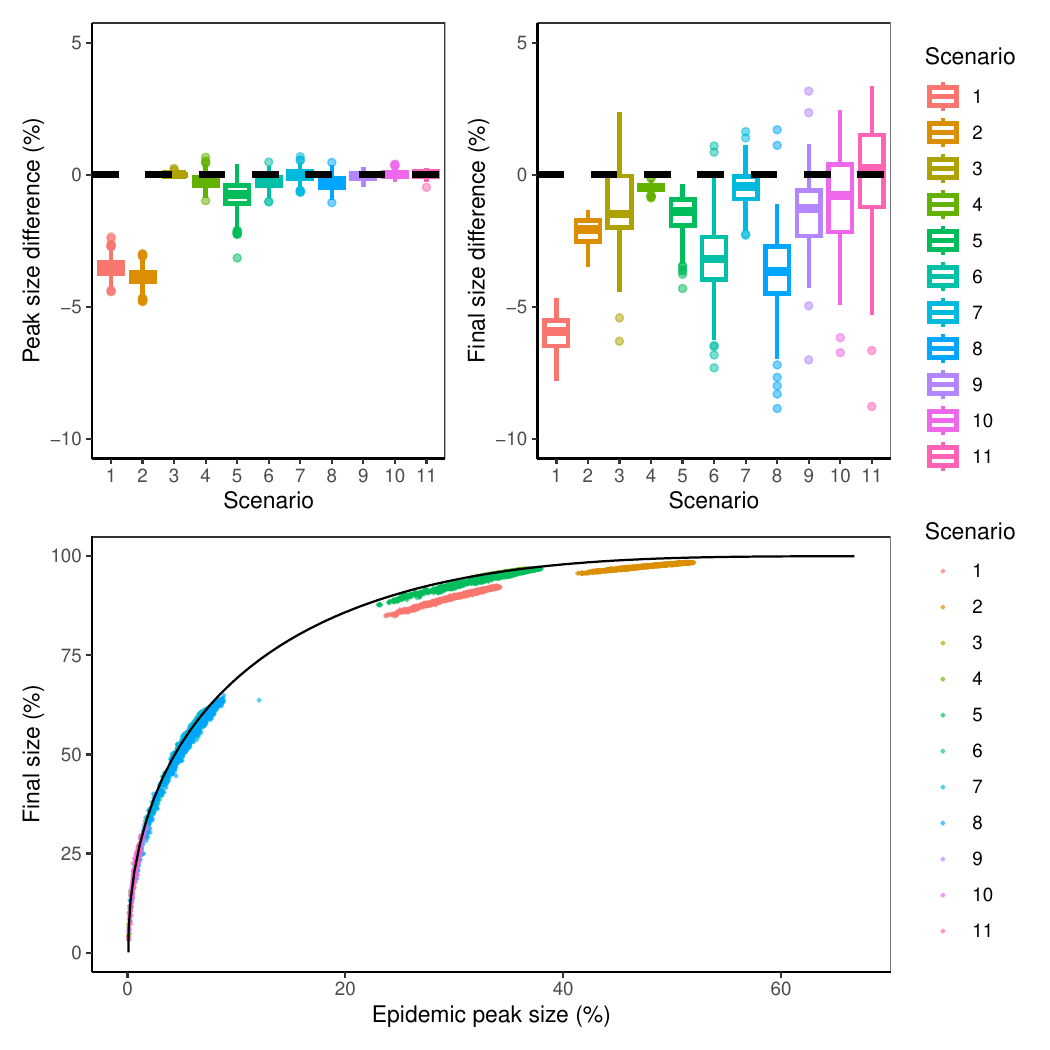}
\caption{Boxplot of the differences in the peak sizes (top left) and final sizes (top right) between the complete and the reduced models, obtained from simulations of the stochastic structured model and the corresponding reduced model. Simulated final size as a function of the peak size is reported (bottom), the black line represents the value for the \emph{SIR} model \eqref{eq:reduction}. Simulations of the stochastic structured model are performed with population size of 100,000, reference household distribution, workplace size distributions set A, for all scenarios from Table \ref{tab:scenfeat} (point color). Each combination of scenario and workplace size distribution from set A is repeated 10 times. Only simulations where an epidemic outbreak occurred (\emph{i.e.} more than 3\% of the population become infected) are reported in this figure.}
\label{fig:reduction_robustness}
\end{figure}

\subsection{Assessment of the reduction robustness}
\label{robust:reduction}

So far, in the modeling  and for the simulations and computations, we have restricted ourselves to Markovian models and to the \emph{SIR} structure. The Markovian assumption simplifies the parametrization: there is one single rate of infection for each of three infection ways and one single rate of recovery. This assumption is not necessary and most of our results can be formalized in more general contexts. Similarly, the \emph{SIR} model is the canonical one, but models with different structures can be envisaged. We have checked by simulation that the reduction procedure based on the initial growth rate, as defined in Section \ref{sec-reduction} and in Appendix \ref{apx:seir-num}, still provides good predictions in more general contexts. In Figures S8 and S9 of the Online Resource, we illustrate through two examples the prediction of epidemic outcomes between the structured model with Gamma distributed individual recovery times and the corresponding reduced deterministic \emph{SIR} model. We can make similar observations as for the Markovian model that the reduced model generally provides a good prediction of the epidemic outcomes, even though the predictions are less accurate in some cases. Notice that the loss of accuracy was to be expected, as the reduced model differs both in terms of contact structure, and by the choice of the distribution of the infectious period (exponential distribution instead of Gamma distribution). We also show with an example provided in Figures S10 and S11 of the Online Resource that the reduction procedure provides good predictions of epidemic outcomes for an \emph{SEIR} model, where an additional latent (and non infectious) $E$ state is added to the \emph{SIR} model. Details on the growth rate derivation for the \emph{SEIR} model are provided in Appendix \ref{sec-SEIR-growthrate}, while computational aspects are addressed in Appendix \ref{apx:seir-num}.

\section{Discussion}
\label{discussion}

In this work, we study the effect of the size of closed structures (households, workplaces, schools...) on the propagation of epidemics, when individuals belong to different structures, chosen independently for a given individual. We assume that there is no dependence between the size of households and the size of workplaces. Our motivation comes in particular from the fact that control policies allow to change the distribution of structure sizes in various ways, for example by reducing the size of workplaces by teleworking or the size of schools by their (partial) closure. Optimizing control measures in terms of sanitary outcome has become a major challenge. Measuring the link between a social organisation, in terms of distribution of structure sizes, and the epidemic outcomes is a delicate issue. Indeed, explicit formula and low dimensional large population approximations are known only in uniformly mixing populations. These results concern explicit values for $R_0$ and initial growth rate $r$, simple equation for the total size of the outbreak, reduction to an \emph{SIR} three dimensional ODE with two parameters. Formulae become more complex when adding a local level consisting in one layer of structures (households), and even intractable with an additional structure like workplaces, where the chain of transmission does not need mean field infection any longer to propagate.

We have used both existing approaches (namely \cite{pellisEpidemicGrowthRate2011}) and added new developments, by providing more explicit expressions (Section \ref{reductionODE}) and simulations to study the role of structure sizes on the key outcomes of epidemics. Notice that even though the results of Corollaries \ref{prop-lapzeta} and \ref{prop-lapzetag} have not been used in this paper for evaluating the growth rate of simulation scenarios (details on computations in Appendix \ref{annex_numer}), their numerical implementation would have been an alternative method for achieving these computations. In our setting, this would not be numerically pertinent as workplace sizes can be relatively large and the formulae obtained in Corollaries \ref{prop-lapzeta} and \ref{prop-lapzetag} need to iterate over all elements in a set whose cardinal grows exponentially with the structure size. However, making use of these results may be pertinent when computing the growth rate in models considering households only, as those typically are of smaller sizes. 

We have focused on the Markovian case, \emph{i.e.} time of infection exponentially distributed and constant infection rate, and a simple \emph{SIR} structure, with local infection rates proportional to the number of susceptible and infected individuals. This basic framework was complemented by a robustness study on more complex settings (Sections \ref{robustinteract} and \ref{robust:reduction}). According to our findings, the structure size distribution plays a role on the key outcomes in most scenarios. More precisely, for a given number of structures and a given number of individuals and thus for fixed average structure size, the way individuals are distributed has a quantitative impact on the growth rate of infections, the total number of infected individuals and the size of the infected peak. In this setting, the variance of the structures size distribution provides a good proxy of this impact.

This finding may be related to previously known results on the importance of the variance of the degree distribution in configuration model settings. Indeed, \cite{brittonGraphsSpecifiedDegree2007} have pointed out the impact of the degree distribution variance on the reproduction number for \emph{SIR} epidemics on configuration graphs. Similarly, \cite{maEffectiveDegreeHousehold2013} have studied the case of a model with two levels of mixing, corresponding to a layer of households and a general population taking the form of a configuration graph. They have shown that in this case, the variance of the degree distribution in the general population has a strong influence on epidemic dynamics. In the case of the household-workplace model, we can consider the epidemic at the household level. In this situation, the size distribution of the workplace plays a crucial role in determining the number of households to which a given household is directly connected. Although the framework is clearly more complex than a simple configuration model, the distribution of workplace size may play a role similar in spirit to that of the degree distribution in the previously mentioned models based on configuration graphs.

As for the limitations of our study, the robustness analyses that we have carried out (Sections \ref{robustinteract} and \ref{robust:reduction}), even if they are rather summary, point to robustness of the main results when certain assumptions are modified. A more detailed analysis using sensitivity analysis is left for a future work.\\
On the one hand, the effect of the structure size distribution may be overestimated by the fact that we consider a linear infection rate. Indeed, the rate at which a susceptible individual is infected at a given time is assumed to be proportional to the number of infected individuals at the same time. This probably overestimates the real infection rate. We consider here structures which form a partition of the population, with uniform mixing within each structure. We think this assumption is rather relevant for households but it could be improved and extended for workplaces (and schools), where different levels of mixing could be considered (services/departments within companies, classes within schools, etc). The simulation study with sublinear infection rates in households and workplaces shows that the observation on the linear impact of variance on the final epidemic size is still valid (Figure \ref{fig:Variance_sqrt1} and Figure S5 of the Online Resource).  With sublinear infection rates, we also observed similar results regarding the robustness of the prediction of epidemic outcomes using the reduced model, namely that the reduction is accurate for high values of $p_G$ and a smaller epidemic size, see Figure S12 of the Online Resource. We note however, that the accuracy of the prediction using the reduced model is lower with sublinear infection rates, and the early epidemic is not predicted as accurately.
When it comes to policies and controls such as teleworking, a notion of effective size and/or effective infection rate should probably be introduced, which is one of the interesting perspectives. \\
On the other hand, the effect of the structure size distribution is observed when looking at the characteristics of the epidemic. Variance arises using the size biased law and the fact that mean final size of epidemics is comparable to the size structure. Furthermore, following preliminary explorations, we observed that the dependence in the structure size distribution may imply a higher moment of it than the second moment linked to its variance. This could confer a greater impact of large structures than the variance would predict. This reinforces the importance of the structure size distribution and specifically the importance of the health benefit of moving toward structures (workplace offices, school classrooms...) of the same size. Variance remains the simplest proxy we have found in general, but further exploration may reveal another form of dependence.

Thus, based on our results, the variance of the structure size distribution, and more generally the ratio of the second to the first moment when the latter or the number of structures is not fixed, provides a good indicator to measure the impact of the distribution of individuals within small social structures. Nevertheless, much work remains to be done to identify the key parameters related to population structure that determine epidemic outcomes.

In this study, we are also interested in model reduction, in order to have a parsimonious model that is sufficiently accurate in terms of prediction and fast to run. This is an important issue, especially in a context where many scenarios and control policies need to be evaluated. Thus, our goal was to reduce the model to only a few parameters and variables. 
We obtained that by using the initial infection growth rate, which keeps track of the contact structure in a subtle (almost explicit but rather complex) way. We then make use of a classical \emph{SIR} model (three variables and two parameters) as a reduction of the initial stochastic individual-centered model with three types of contacts.

In summary, our study highlights that knowledge and modeling of the size of contact structures appears to be important in characterizing the outcome of an epidemic. It also points to the major role of the initial growth rate of the infection, which is unique, contrary to the reproduction number which can have several interpretations in multilevel contact models, and also difficult to treat.  We have tried to provide a more explicit expression for the initial growth rate, supplementing the literature. This allowed us first to conduct the study of the impact of the structure size distribution. As we already know, the initial growth rate provides the extinction/outbreak criterion by its sign and the rate of progression. It is related to the peak and timing of the peak, when the epidemic dynamics explode, and to the extinction rate when this dynamics decline. The initial growth rate has also been a key input to the reduction problem, allowing a complex contact structure to be reduced to a few parameters, while retaining the essence of the qualitative and quantitative behavior of the epidemic beyond the initial time.

\begin{appendices}

\section{Generation of structure size distributions}
\label{appdx:sizedist}

Let us start by giving detail on the way the reference workplace size distribution is derived from the INSEE French workplace size distribution of 2018. Indeed, the reference workplace size distribution is obtained from the number of workers in workplace size classes provided by INSEE. On the one hand, workers belonging to workplaces in size classes lower than 50 are placed, for our workplace size distribution, in workplaces with size uniformly chosen in the size range covered by the size class. On the other hand, workers belonging to workplaces of over 50 employees are arbitrarily placed in workplaces of size 50 in our workplace distribution, as we assume uniform mixing within workplaces which becomes unrealistic for very large workplace sizes.

Finally, in order to generate distributions with given mean $m$, variance and maximum size $n_{\max}$, we proceed as follows. First, we create a set $\mathcal{P}$ of distributions which each charge only two sizes in $\{1, \dots, n_{\max}\}$ and mean $m$. In other words, for any $1 \leq k < m < k' \leq n_{\max}$, we define $p = c\delta_k + (1-c)\delta_{k'}$ where $c = (k'-m)/(k'-k)$. 
Second, we construct a new set of distributions $\mathcal{D}$ from mixtures of two distributions of the previous set, such that the resulting distribution has a given variance. An element $d \in \mathcal{D}$ hence is obtained by taking $p_1, p_2 \in \mathcal{P}$ and letting 
$d = c p_1 + (1-c) p_2,$
where the weight $c$ is chosen such that $d$ has the prescribed variance.
A target distribution $D$ is then generated by a random mixture of those elementary distributions: 
    \[D = \sum_{d \in \mathcal{D}} w_d d,\]
where $w_d$ are random weights such that $\sum_{d \in D} w_d = 1$. 

\section{Parameter values for scenarios}
\label{annex_numer}

The size distribution of households and workplaces combined with structure-dependent infection rates determines a value for the initial growth rate of the epidemic $r$, as well as probabilities of infections corresponding to the three sources, global mixing $p_G$, households $p_H$ and workplaces $p_W$.
Due to the constraints imposed by the structure size distributions in the structured epidemic model, it is not always possible to find numerical values of infection rates that lead to given values of growth rate $r$ and infection probabilities $p_G$, $p_H$ and $p_W$. 
Parameter selection for scenarios in Table \ref{tab:scenfeat} was performed, for the reference household and workplace distributions, using an optimisation procedure that yields infection rates leading to a solution for growth rate and infection probabilities values as close as possible to the target values. It relies on a cost function based on the mean square error between the target values and the trial values of $r$, $p_G$, $p_H$ and $p_W$. A hyper-parameter controls the importance given to the error on the growth rate.

Table \ref{tab:scenarios} summarizes the values of $r$, $R_I$, $p_G$, $p_H$ and $p_W$ for each scenario of Table \ref{tab:scenfeat}, as well as the obtained values of epidemic parameters for the reference structure size distributions.

Scenarios provided in Table \ref{tab:scenfeat} combined with the workplace distributions from Table \ref{tab:dist-bis} allow the exploration of the relevant behavior of the structured epidemic model, covering a wide range of epidemic settings. 
The epidemic final size for each scenario and workplace distribution is reported in Figure S1 of the Online Resource.

As mentioned in Section \ref{sec-param}, our scenarios correspond to, or are close to, realistic epidemic settings for three diseases of interest: influenza (\cite{ajelliRoleDifferentSocial2014}), COVID-19 (\cite{locatelliEstimatingBasicReproduction2021, galmicheExposuresAssociatedSARSCoV22021}) and chickenpox (\cite{silholModellingEffectsPopulation2011}). Notice that while comparing proportions of infection at the local and global level is straightforward, the task is more delicate for reproduction numbers. 
As \cite{ajelliRoleDifferentSocial2014} and \cite{locatelliEstimatingBasicReproduction2021} infer the reproduction number based on the exponential growth rate, we have followed this approach and based the comparison on the reproduction number $R_0$ defined as $R_0 = 1 + r/\gamma$, where $r$ is the exponential growth rate (\cite{trapmanInferringR0Emerging2016}).Of course, \cite{ajelliRoleDifferentSocial2014}, \cite{locatelliEstimatingBasicReproduction2021} and \cite{silholModellingEffectsPopulation2011} do not use this precise definition of $R_0$, but it seemed the best compromise for comparison as it is closest in spirit to the the studies on influenza and COVID-19, while the study on chickenpox unfortunately does not detail their definition of $R_0$. The results are shown in Figure \ref{fig:R0pG-scenarios}, which illustrates that our procedure covers a wide range of realistic epidemic settings.

\begin{table}[t]
\begin{center}
\begin{minipage}{\textwidth}
\caption{Values of growth rate, reproduction number and proportions of infection per layer for each simulation scenario, as well as the corresponding epidemic parameters, in the case of the reference structure size distributions (recall that $\gamma = 1$ in all scenarios).}
\label{tab:scenarios}
\begin{tabular}{rrrrrr|rrr}
  \hline
  Scenario & Growth rate & $R_I$ & $p_H$ & $p_W$ & $p_G$ & $\lambda_H$ & $\lambda_W$ & $\beta_G$ \\ 
  \hline
      1 & 2.4822 & 2.5028 & 0.4217 & 0.1788 & 0.3995 & 11.852 & 0.009 & 1.000 \\ 
     2 & 4.9937 & 4.6876 & 0.2796 & 0.3471 & 0.3732 & 11.443 & 0.020 & 1.750 \\ 
     3 & 0.0009 & 0.9923 & 0.4070 & 0.3927 & 0.2002 & 0.376 & 0.010 & 0.199 \\ 
     4 & 2.5203 & 3.6460 & 0.1521 & 0.1472 & 0.7008 & 0.356 & 0.010 & 2.555 \\ 
     5 & 2.5017 & 5.2021 & 0.1301 & 0.5445 & 0.3254 & 0.451 & 0.028 & 1.693\\ 
     6 & 0.5054 & 1.5740 & 0.3868 & 0.3456 & 0.2676 & 0.653 & 0.012 & 0.421 \\ 
     7 & 0.5061 & 1.5187 & 0.1120 & 0.1122 & 0.7758 & 0.094 & 0.004 & 1.178  \\ 
     8 & 0.5020 & 1.5920 & 0.3975 & 0.4057 & 0.1968 & 0.730 & 0.013 & 0.313 \\ 
     9 & 0.1104 & 1.1330 & 0.3879 & 0.4182 & 0.1940 & 0.404 & 0.012 & 0.220 \\ 
    10 & 0.0900 & 1.0989 & 0.2631 & 0.2646 & 0.4723 & 0.196 & 0.007 & 0.519\\ 
    11 & 0.0706 & 1.1068 & 0.3449 & 0.5883 & 0.0668 & 0.309 & 0.016 & 0.074 \\ 
  \hline
\end{tabular}
\end{minipage}
\end{center}
\end{table}

\begin{figure}[!ht]
  \centering
\includegraphics[width=0.7\textwidth]{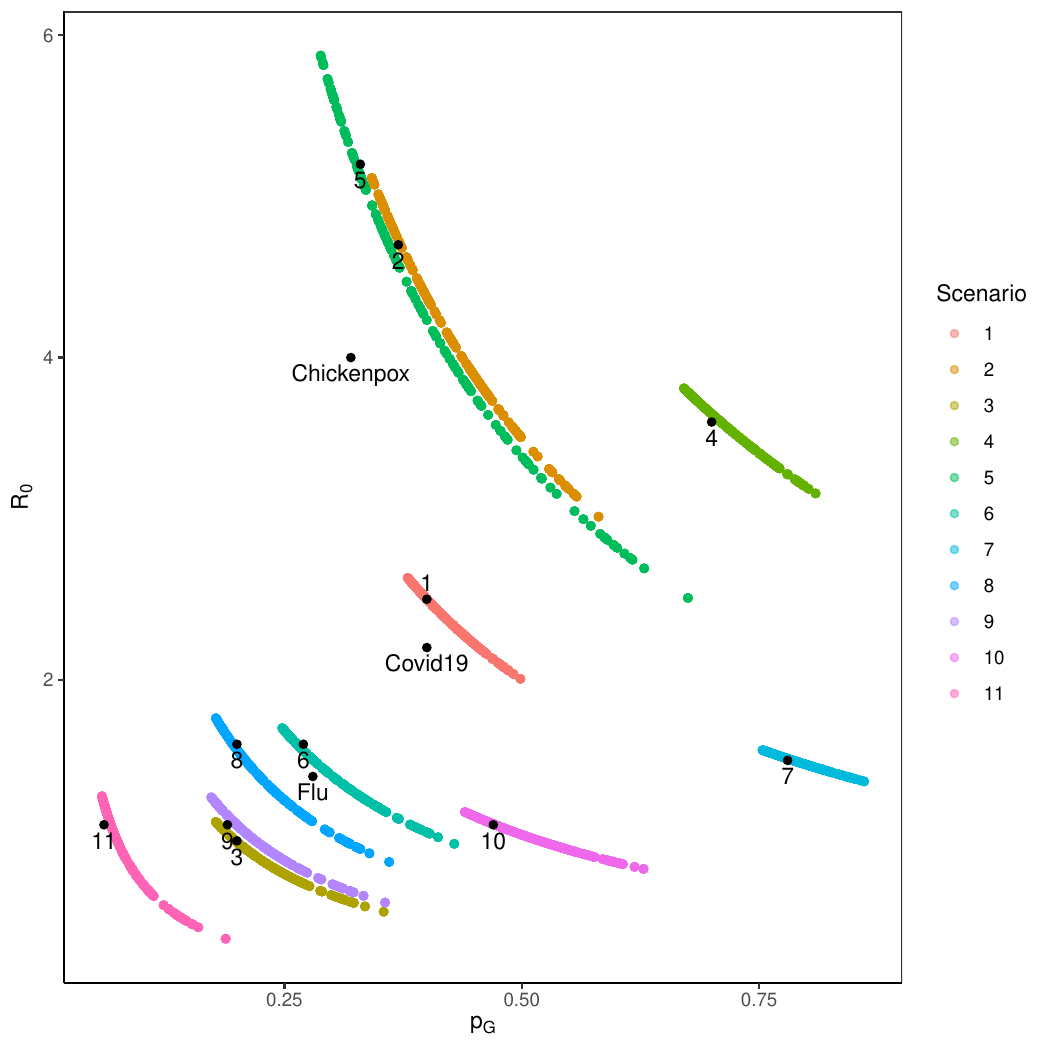}
\caption{Values of the reproduction number $R_0$, and proportion of infection via global mixing $p_G$ for all scenarios of Table \ref{tab:scenfeat} and all exploratory workplace size distributions from set A, Table \ref{tab:dist-bis}. The household size distribution is the reference household size distribution. For each scenario 1-11, labeled black points correspond to the values achieved for the reference workplace distribution. For comparison, values reported in the literature for COVID-19, influenza and chickenpox are also shown in black.}
\label{fig:R0pG-scenarios}
\end{figure}

\section{Numerical computations of epidemic parameters and outcomes}
\label{supp:nuemrical_methods}

The values of $R_I$, $p_G$, $p_H$ and $p_W$  are obtained from Equations \eqref{eq:reprod_matrix} and \eqref{eq-pgphpw}. They both require the values of $\mathcal I_G$, $\mathcal I_G$, $\mathcal I_W$ which are given by Equation \eqref{eq:nhat}. Evaluations of $\mathcal I_G$, $\mathcal I_G$, $\mathcal I_W$ require the values of $i_H(k)$ and $i_W(k)$ which we obtained from numerical simulations of the within structures epidemic. The growth rate $r$ can be obtained in several ways, which all involve some form of solution for Equation \eqref{eq-r}, which we solved using a root finding algorithm. Elements of Matrix \eqref{Kmatrix} can be obtained by numerical simulation of within structure epidemics, numerical integration and numerical matrix inversion. We also provide analytical formulations for the \emph{SIR} and \emph{SEIR} models, with linear infection rates, in Equations \eqref{eq-lapzeta} and forthcoming Equation \eqref{eq:lapzeta_seir}. Alternatively, we provide a fully analytical way to obtain the growth rate of the \emph{SIR} model with linear infection rates, in Corollaries \ref{prop-lapzeta} and \ref{prop-lapzetag}. 
The values of the epidemic peak and size are obtained by stochastic simulation of the structured model, and the values of the epidemic peak and size of the reduced model are obtained by simulation of Equation \eqref{eq:reduction}. For a summary of the numerical evaluation of epidemic quantities, see Table \ref{tab:numerical_qty}.

\begin{table}[ht]
\begin{center}
\begin{minipage}{\textwidth}
\caption{Numerical methods used to compute epidemic parameters and outcomes.}
\label{tab:numerical_qty}
\begin{tabular}{ll}
\toprule
Epidemic parameter/outcome & Numerical method  \\
\midrule
$\mathcal I_G$, $\mathcal I_G$, $\mathcal I_W$          &  Equation \eqref{eq:nhat}, with $i_H(k)$ and $i_W(k)$ obtained from \\
                                        & numerical simulations of the within structure epidemic.\\
$R_I$                                   & Largest eigenvalue of Equation \eqref{eq:reprod_matrix}.\\
$p_G$, $p_H$ and $p_W$                  & Equation \eqref{eq-pgphpw}.\\
Growth rate $r$                         & Equation \eqref{eq-r} with Laplace transforms for elements of matrix\\
                                        & \eqref{Kmatrix} which can be numerically evaluated. Alternatively, in \\ 
                                        &  the case of the \emph{SIR} (or \emph{SEIR}) model with linear infection rates, \\
                                        & these elements are defined in Equation \eqref{eq-lapzeta} (resp. \eqref{eq:lapzeta_seir}), where\\
                                        &  matrices in the sum are inverted numerically.\\
Peak and final size                     & Numerical simulation of the structured epidemic model or \\
      & numerical simulation of Equation \eqref{eq:reduction} for the reduced\\
      & epidemic model.\\
\bottomrule
\end{tabular}
\end{minipage}
\end{center}
\end{table}

\section{Proof of Proposition \ref{prop-hatq}}
\label{appendix-proofs}

The notations are the same as previously introduced in Section \ref{sec-laplace}. For two integers $p,q$, we further let $I_p$ denote the identity matrix of dimension $p$, and $M_{p,q}(\mathbb{R})$ (resp. $M_{p}(\mathbb{R})$) the space of $p\times q$ (resp. $p \times p$) matrices with real coefficients. The aim is to compute
\begin{equation}
\hQ_{k, \lambda, \gamma}(u) = (uI_{d(k)} - \mathbf{Q}_{\lambda,\gamma}(k))^{-1}
\end{equation}
for $u\geq 0$,
which allows us to obtain Corollaries \ref{prop-lapzeta} and \ref{prop-lapzetag}.

\begin{proof}[Proof of Proposition \ref{prop-hatq}.] 

We start by noticing that when $k = 1$, necessarily, $\ell=1$, $(s,i) = (0,1)$ and $m = 0$. The set of interest becomes $\mathcal{I}_1(1,0,1) = \{(1,1)\}$, and the right side of Equation (\ref{eq-hatq}) equals $(u + \gamma)^{-1}$. On the other hand, it is obvious that $\left(uI_1 - \mathbf{Q}_{\lambda,\gamma}(1)\right) = (u + \gamma)$. Thus, (\ref{eq-hatq}) holds when $k=1$.

Suppose now that Equation (\ref{eq-hatq}) is true for some integer $k$. We proceed by induction. Notice that it is possible to enumerate the states of $\Omega(k+1)$ in such a way that $\mathbf{Q}_{\lambda, \gamma}(k+1)$ is upper triangular. Indeed, it is enough to enumerate first all states in $\widehat{\Omega}(k+1) = \{(s,i) \in \Omega(k+1): s + i = k+1\}$ as $\{(k + 1 - \ell, \ell): 1 \leq \ell \leq k+1\}$, so that progression from one state to another occurs by infections which are not reversible as individuals are immune after infection. This process is repeated for states $\widehat{\Omega}(m) = \{(m-\ell,\ell): 1 \leq \ell \leq m\}$ for $m = k, \dots, 1$, so that transition from one set of states to the next occurs by removal of an infected, which also is irreversible as individuals will remain immune afterwards. 

This way of enumerating $\Omega(k+1)$ is particularly interesting as $\Omega(k+1) = \bigsqcup_{m = 1}^{k+1} \widehat{\Omega}(m)$, so that $\mathbf{Q}_{\lambda,\gamma}(k+1)$ can be regarded as the following block matrix:
\begin{equation*}
\mathbf{Q}_{\lambda,\gamma}(k+1) = 
\begin{pmatrix}
\bfA & \bfB \\
\mathbf{0} & \mathbf{Q}_{\lambda,\gamma}(k)
\end{pmatrix}.
\end{equation*}

Here, blocks $\bfA \in M_{k+1}(\mathbb{R})$ and $\bfB\in M_{k+1, d(k)}(\mathbb{R})$ represent events of infections and removals in $\widehat{\Omega}(k+1)$, respectively, where we recall that $d(k) = \#\Omega(k) = k(k+1)/2$. As previously, the elements of $\mathbf{Q}_{\lambda,\gamma}(k+1)$ can also be indexed by states in $\Omega(k+1)$. Naturally, $uI_{d(k+1)} - \mathbf{Q}_{\lambda,\gamma}(k+1)$ also is a block matrix, simply replacing the blocks $\bfA$ and $\mathbf{Q}_{\lambda,\gamma}(k)$ by $\bfA^\prime = uI_{k+1} - \bfA$ and $uI_{d(k)} - \mathbf{Q}_{\lambda,\gamma}(k)$ respectively. 

More precisely, $\bfA^\prime$ is an upper bidiagonal matrix such that
\begin{equation*}
\begin{aligned}
\forall \ell \in \{1, \dots, k+1\}, \quad
\bfA^\prime_{\ell,\ell} &= \bfA^\prime_{(k+1-\ell,\ell),(k+1-\ell,\ell)} = u + \lambda (k+1-\ell)\ell + \gamma \ell,  \\
\forall \ell \in \{1, \dots, k\}, \quad
\bfA^\prime_{\ell, \ell+1} &= \bfA^\prime_{(k+1-\ell,\ell),(k+1-(\ell+1),\ell+1)} = -\lambda (k+1-\ell)\ell.
\end{aligned}
\end{equation*}

\noindent Since $\bfA$ is upper diagonal, its inverse matrix is easily computable \cite{chatterjeeNegativeIntegralPowers1974}, and letting $a^{(k+1)}_{\ell,i} = (\bfA^\prime)^{-1}_{(k+1-\ell),(k+1-i,i)}$, we have: 
\begin{equation*}
a^{(k+1)}_{\ell,i} = \ind_{i \geq \ell} \frac{1}{u + \lambda(k+1-i)i + \gamma i} \prod_{j=\ell}^{i-1} \left(1 + \frac{u + \gamma j}{\lambda(k+1-j)j}\right)^{-1}.
\end{equation*}

Furthermore, as the only states of $\Omega(k)$ that are directly accessible by removal from $\widehat{\Omega}(k+1)$ belong to $\widehat{\Omega}(k)$, all coefficients of $\bfB$ are null except for the following:
\begin{equation*}
\bfB_{\ell,\ell-1} = \bfB_{(k+1-\ell,\ell),(k+1-\ell,\ell-1)} = -\gamma \ell, \quad \forall \ell \in \{2,\dots, k+1\}.
\end{equation*} 

Using the fact that $uI_{d(k+1)} - \mathbf{Q}_{\lambda, \gamma}(k+1)$ is a block matrix and that $\bfA^{\prime}$ and $uI_{d(k)} - \mathbf{Q}_{\lambda, \gamma}(k)$ are upper triangular with positive diagonal coefficients and thus invertible, we have the following:
\begin{equation*}
\hQ_{k+1, \lambda, \gamma} (u) = \left(uI_{d(k+1)} - \mathbf{Q}_{\lambda, \gamma}(k+1)\right)^{-1} =
\begin{pmatrix}
(\bfA^\prime)^{-1} & - (\bfA^\prime)^{-1} \bfB \hQ_{k, \lambda, \gamma}(u) \\
0 &  \hQ_{k, \lambda, \gamma}(u)
\end{pmatrix}.
\end{equation*}

\noindent Thus, we obtain that, for all $\ell, i \in \{1, \dots, k+1\}$ and $(s,i) \in \Omega(n)$,
\begin{equation}
\label{eq-inverse-q}
\begin{aligned}
\left(\hQ_{k+1, \lambda, \gamma} (u) \right)_{(k+1-\ell, \ell),(k+1-i,i)} & = a^{(k+1)}_{\ell,i} ; \\
\left(\hQ_{k+1, \lambda, \gamma} (u) \right)_{(k+1-\ell,\ell),(s, i)} & =  \sum_{w = 2}^{k+1} a^{(k+1)}_{\ell,w}  \gamma w \left(\hQ_{k, \lambda, \gamma} (u) \right)_{(k+1-w, w-1),(s, i)}.
\end{aligned}
\end{equation}

Let us turn to proving that Equation (\ref{eq-hatq}) holds true for $k+1$. Let $\ell \in \{1,\dots,k+1\}$ and consider $(s,i) \in \Omega(k+1)$ such that $s \leq (k+1)-\ell$ and define $m = (k+1) - (s+i)$. 

Notice that $m = 0$ if and only if $(s,i) \in \widehat{\Omega}(k+1)$ and $s = k+1-i$. As a consequence, if $i < \ell$, the set $\mathcal{I}_{k+1}{(\ell,0,i)}$ is empty; otherwise, if $i \geq \ell$, $\mathcal{I}_{k+1}{(\ell,0,i)} = \{(\ell,i)\}$. In both cases, the right hand side of Equation (\ref{eq-hatq}) equals $a^{(k+1)}_{\ell,i}$.
Thus, by Equation (\ref{eq-inverse-q}), Assertion (\ref{eq-hatq}) follows.

Consider now the case $m > 0$, \emph{i.e.} $(s,i) \in \Omega(k)$. It follows from Equation (\ref{eq-inverse-q}), using the change of variable $\ell' = w-1$, that 
\begin{equation}
\label{eq-part1}
\left(\hQ_{k+1, \lambda, \gamma} (u) \right)_{(k+1-\ell,\ell),(s, i)} = \sum_{\ell' = 1}^k a^{(k+1)}_{\ell, \ell' + 1} \gamma(\ell'+1) \left(\hQ_{k, \lambda, \gamma} (u) \right)_{(k-\ell',\ell'),(s,i)}.
\end{equation}

\noindent Using the inductive hypothesis and noticing that $k - (s+i) = m-1$, we get that for any $\ell' \in \{1, \dots k\}$,
\begin{equation*}
\begin{aligned}
\left(\hQ_{k, \lambda, \gamma} (u) \right)&_{(k-\ell',\ell'),(s,i)} =\\ 
& \frac{1}{u + \lambda si + \gamma i} \sum_{\mathfrak{i} \in \mathcal{I}_k(\ell',m-1,i)} \prod_{j=1}^m q_{k,\lambda,\gamma}(\mathfrak{i},j-1;u) g_{k,m-1,\lambda,\gamma}(\mathfrak{i},j-1;u).
\end{aligned}
\end{equation*}
Notice that if $\mathfrak{i} \in \N^{m+2}$ is such that $\mathfrak{i}_{j-1} = \mathfrak{i}'_j$ for all $j$, then $q_{k,\lambda,\gamma}(\mathfrak{i},j-1;u) = q_{k+1,\lambda,\gamma}(\mathfrak{i}',j;u)$ and $g_{k, m-1, \lambda,\gamma}(\mathfrak{i},j-1;u) = g_{k+1,m,\lambda,\gamma}(\mathfrak{i}',j;u)$.
Letting $\tau$ be the projection on $\N^{m+2}$:  for any $\mathfrak{i}' = (i_0,\dots,i_{m+1}) \in \N^{m+2}$, $\tau(\mathfrak{i}') = (i_1,\dots,i_{m+1})$, we get that:
\begin{equation*}
\begin{aligned}
\sum_{\mathfrak{i} \in \mathcal{I}_k(\ell',m-1,i)} \prod_{j=1}^m & q_{k,\lambda,\gamma}(\mathfrak{i},j-1;u) g_{k,m-1,\lambda,\gamma}(\mathfrak{i},j-1;u) = \\
& \sum_{\substack{\mathfrak{i}' \in \N^{m+2}: \\ \tau(\mathfrak{i}') \in \mathcal{I}_k(\ell',m-1,i)}} \prod_{j=1}^m q_{k+1,\lambda,\gamma}(\mathfrak{i}',j;u) g_{k+1,m,\lambda,\gamma}(\mathfrak{i}',j;u).
\end{aligned}
\end{equation*}

Furthermore, for $\mathfrak{i}' \in \N^{m+2}$ such that $i_0 = \ell, i_1 = \ell'$, it holds that 
\begin{equation}
\label{eq-part2}
a^{(k+1)}_{\ell, \ell' + 1} \gamma(\ell'+1) = \ind_{\{\ell' + 1 \geq \ell\}} q_{k+1,\lambda,\gamma}(\mathfrak{i}',0;u) g_{k+1,m,\lambda,\gamma}(\mathfrak{i}',0;u).
\end{equation}

Thus, noticing that the limits of the sum over $\ell'$ in Equation (\ref{eq-part1}) taken together with $\ind_{\{\ell' + 1 \geq \ell\}}$ from Equation (\ref{eq-part2}) induce that $\ell-1 \leq \ell' \leq k$, Equation (\ref{eq-part1}) yields the desired result:
\begin{equation*}
\begin{aligned}
\left(\hQ_{k+1, \lambda, \gamma} (u) \right)&_{(k+1-\ell,\ell),(s, i)} = \\
& \frac{1}{u + \lambda si + \gamma i} \sum_{\mathfrak{i}' \in \mathcal{I}_{k+1}(\ell,m,i)} \prod_{j=0}^m q_{k+1, \lambda,\gamma}(\mathfrak{i}',j;u) g_{k+1,m, \lambda,\gamma}(\mathfrak{i}',j;u).
\end{aligned}
\end{equation*}
This completes the proof.
\end{proof}

\section{Computation of the exponential growth rate for the \emph{SEIR} model with two levels of mixing}
\label{sec-SEIR-growthrate}

In the \emph{SEIR} model, subsequently to an infectious contact between an infected and a susceptible individual, the susceptible first becomes \emph{exposed} (\emph{E} state, assimilated to an infected but not yet infectious state) for a duration distributed according to an exponential law of parameter $\mu$, before entering the infectious state. Thus, the computation of the exponential growth rate as proposed by \cite{pellisEpidemicGrowthRate2011} needs to be adapted. 

For a population of size $k$, consider the Markov chain giving the numbers $(S_t,E_t,I_t)_{t \geq 0}$ of susceptible, exposed and infected individuals in the population at time $t \geq 0$ after the beginning of the epidemic, of transition rates
\begin{equation}
    \begin{aligned}
    \text{Transition} &\quad & \text{Rate} \\
    (s, e, i) \to (s - 1, e + 1, i) & \quad & \lambda s i; \\
    (s, e, i) \to (s, e-1, i+1) & \quad & \mu e;\\
    (s, e, i) \to (s, e, i-1) & \quad & \gamma i.
\end{aligned}
\end{equation}
The set of transient states is then given by 
\begin{equation*}
    \Omega'(k) = \{(s,e,i) \in (\N \cup 0)^3: s + e + i \leq k, e + i \geq 1\},
\end{equation*}
of cardinal $d'(k) = k(k+1)(k+5)/6$.
The restriction of the generator of this Markov chain to $\Omega(k)$ is given by $Q'_{\lambda, \gamma, \mu}(k)$ defined by: $\forall (s,e,i), (s',e',i') \in \Omega'(k)$,

\begin{equation}
\left(Q'_{\lambda, \gamma, \mu}(k)\right)_{(s,e,i),(s',e',i')} =
\begin{cases}
\begin{aligned}
- \lambda s i -\mu e - \gamma i \;\; & \text{ if } (s',e',i') = (s,e,i); \\
\lambda s i \;\; & \text{ if } (s',e',i') = (s-1,e+1, i); \\
\mu e \;\; & \text{ if } (s',e',i') = (s,e-1, i+1); \\
\gamma i \;\; & \text{ if } (s',e',i') = (s, e, i-1); \\
0 \;\; & \text { otherwise.}
\end{aligned}
\end{cases}
\end{equation}

Following the work of \cite{pellisEpidemicGrowthRate2011} and adopting the notations introduced in Section \ref{reductionODE}, one can then easily see that the exponential growth rate $r'$ of the \emph{SEIR} model with two levels of mixing is characterized by the implicit equation 
\begin{equation}
    \rho(K'(r')) = 1,
\end{equation}
where for $u \geq 0$, $K'(u)$ is the following matrix:
{\small
\begin{equation}
\label{eq-Kr-SEIR}
\begin{pmatrix} 
 \hat{\pi}^H \left(\LapG{\bullet}{\lambda_H, \mu}{\gamma}{\beta_G}'\right) (u) & \hat{\pi}^H \left(\LapG{\bullet}{\lambda_H, \mu}{\gamma}{\beta_G}'\right) (u) \\
\hat{\pi}^H \left(\Lap{\bullet}{\lambda_H, \mu}{\gamma}'\right) (u)  \hat{\pi}^W \left(\Lap{\bullet}{\lambda_W, \mu}{\gamma}'\right) (u) &  \left(1 + \hat{\pi}^H \left(\Lap{\bullet}{\lambda_H, \mu}{\gamma}'\right) (u)\right)  \hat{\pi}^W \left(\Lap{\bullet}{\lambda_W, \mu}{\gamma}'\right) (u)
\end{pmatrix},
\end{equation}
}
given the definition for $\beta, \lambda, \mu, \gamma > 0$ of
\begin{equation}
\label{eq:lapzeta_seir}
\begin{aligned}
\LapG{k}{\lambda, \mu}{\gamma}{\beta}'(u) &= \sum_{(s,e,i) \in \Omega'(k)} \beta i \left((uI_{d'(k)} - Q'_{\lambda, \gamma, \mu}(k))^{-1}\right)_{(k - 1,1,0),(s,e,i)}, \\
\Lap{k}{\lambda, \mu}{\gamma}'(u) &= \sum_{(s,e,i) \in \Omega'(k)} (\lambda si + \mu e) \left((uI_{d'(k)} - Q'_{\lambda, \gamma, \mu}(k))^{-1}\right)_{(k - 1,1,0),(s,e,i)}.
\end{aligned}
\end{equation}

\section{Numerical aspects for the model reductions of Section \ref{robust:reduction}}
\label{apx:seir-num}

\textbf{\emph{SIR} model with two levels of mixing and Gamma distributed individual recovery times}. The reduced model still takes the form of the dynamical system given in Equation \eqref{eq:reduction}, whose parameters are determined as follows. $\gamma$ is set to the average recovery time and the growth rate $r$ is obtained from Equation \eqref{eq-r}, where the coefficients of the relevant matrix defined in Equation \eqref{eq-Kr} are estimated through simulations.

\textbf{\emph{SEIR} model with two levels of mixing.}
The reduced model is a standard uniformly mixing deterministic \emph{SEIR} model, with infectious contact rate $\widehat{\lambda}$ and transition rate $\mu$ for the transition from \emph{E} to \emph{I}. As before, $\gamma$ designates the recovery rate of infected individuals.

Notice that, given the parameters $\widehat{\lambda}$, $\mu$ and $\gamma$, the epidemic growth rate of the deterministic \emph{SEIR} model can be computed by solving Equation \eqref{growth-rate-hSIR} with $\zeta(\tau) = (\widehat{\lambda}/\gamma) \omega(\tau)$, where $\omega(\tau)$ is the distribution of infection times for an infected individual in a uniformly mixed population and $\widehat{\lambda}/\gamma$ is the average  number of infections caused by an individual in early stages of the \emph{SEIR} epidemic.

In order to choose the value of $\widehat{\lambda}$ for the model reduction, we proceed as follows. First, the epidemic growth rate $r$ of the \emph{SEIR} model with two layers of mixing is computed following section \ref{sec-SEIR-growthrate}. It then remains to set $\widehat{\lambda}$ in such a way that the epidemic growth rate of the deterministic \emph{SEIR} model, which can be computed as described above, is equal to $r$. This can be achieved through simulations. Alternatively, it is possible to use the following formula derived from the Supplementary Material of \cite{trapmanInferringR0Emerging2016} which states that 
\begin{equation*}
    \widehat{\lambda} = \gamma \left(1 + \frac{r}{\gamma} \right) \left(1 = \frac{r}{\mu} \right).
\end{equation*}

The reduced model is then defined by the following set of ordinary differential equations:
\begin{equation}
\label{eq:reduction_seir}
\left\{
\begin{aligned}
\frac{dS}{dt} = & - \widehat{\lambda} S I \\
\frac{dE}{dt} = & \widehat{\lambda} S I - \mu E \\
\frac{dI}{dt} = & \mu E - \gamma I \\
\frac{dR}{dt} = & \gamma I \\
\end{aligned}
\right.
\end{equation}

\paragraph{Acknowledgements} The authors thank the reviewers for their pertinent and detailed feedback on the manuscript. This work has been supported by the Chair "Mod\'elisation Math\'ematique et Biodiversit\'e" of Veolia Environnement - Ecole Polytechnique - Museum national d'Histoire naturelle - Fondation X. \\
This work has also been partially funded by the European Union (ERC, SINGER, 101054787). Views and opinions expressed are however those of the author(s) only and do not necessarily reflect those of the European Union or the European Research Council. Neither the European Union nor the granting authority can be held responsible for them.

\end{appendices}

\bibliographystyle{apalike} 
\bibliography{epidemiological-footprint}

\newpage
\section*{Online resource - supplementary material}

\renewcommand{\thefigure}{S\arabic{figure}}

\begin{figure}[!ht]
  \centering
\includegraphics[width=0.7\textwidth]{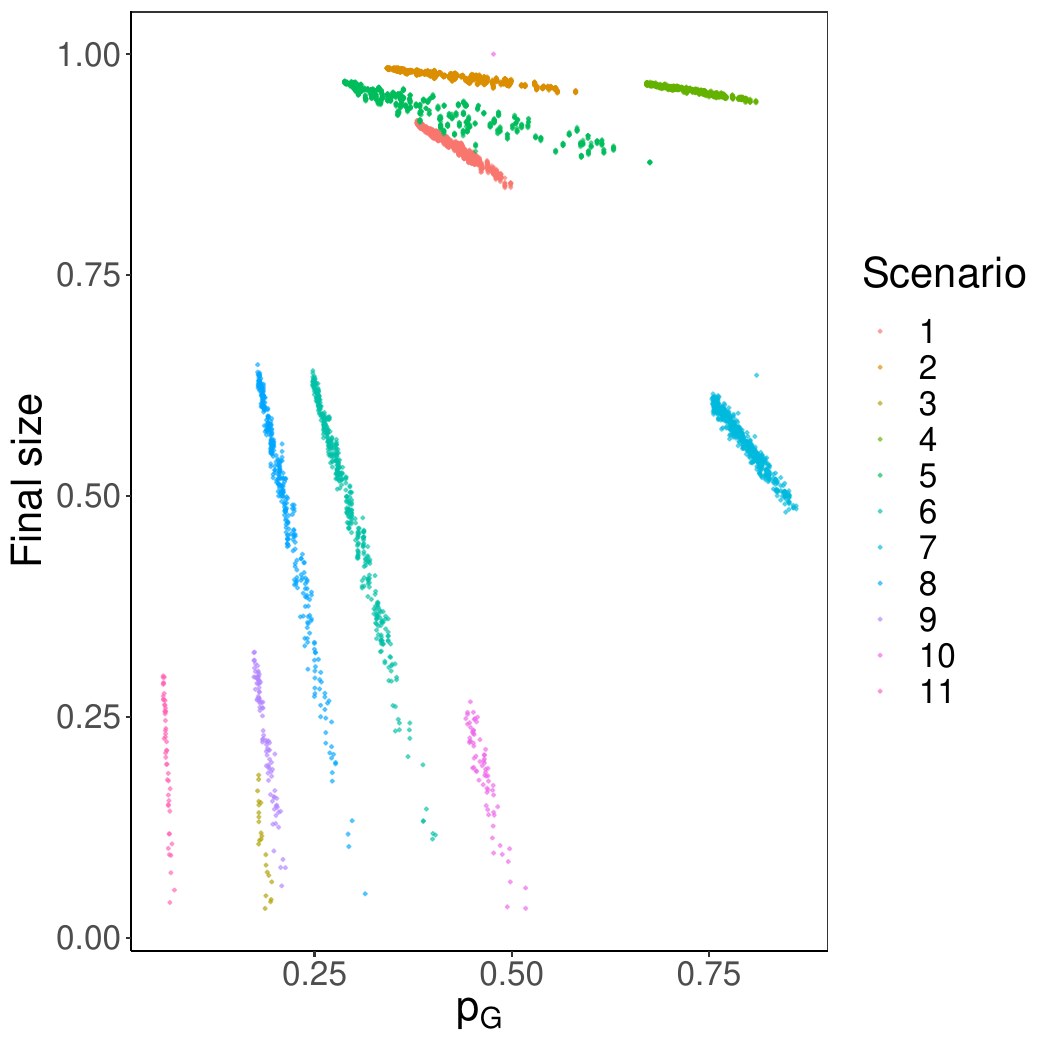} 
\caption{Simulated values of the final size as a function of the proportion of infection via global mixing ($p_G$) for all scenarios of Table 3 and all exploratory workplace size distributions from set A, Table 2. For each epidemic scenario and each workplace size distribution, simulations were repeated 10 times. Each coloured scatter plot on the figure thus contains 160x10 points. The household size distribution is the reference household size distribution.}
\label{fig_supp:scenario_size_pG}
\end{figure}

\begin{figure}[!ht]
  \centering
\includegraphics[width=0.7\textwidth]{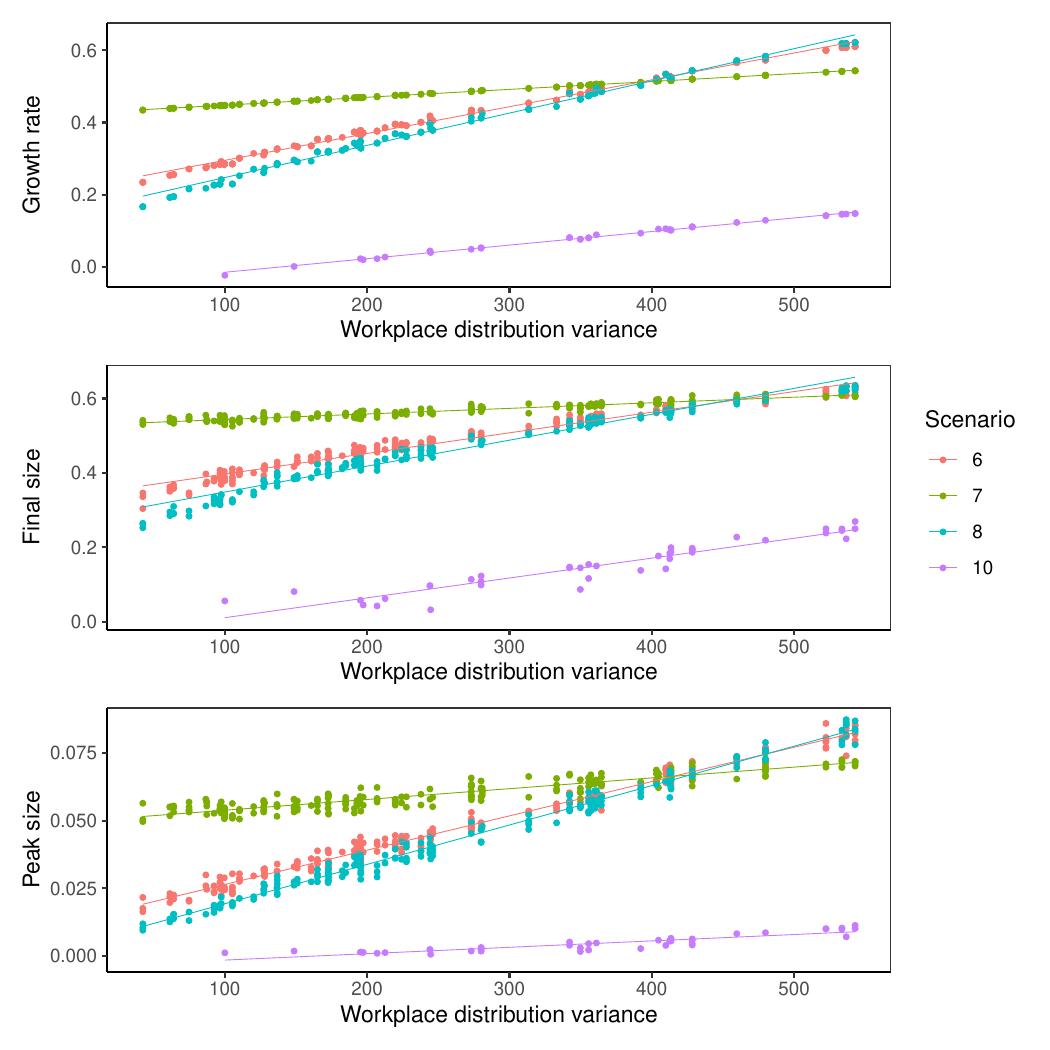} 
\caption{Influence of the variance of the workplace size distribution on the epidemic growth rate (top), final size (middle) and peak size (bottom). Simulations of the stochastic structured model with sub-linear infection rates in households and workplaces were performed with the reference household size distribution, exploratory workplace size distribution set B with average workplace size of 20 from Table 2 and epidemic scenarios 6, 7, 8 and 10 from Table 3. Simulations were repeated 10 times for each combination of scenario and workplace size distribution.
}
\label{fig_supp:Variance2}
\end{figure}

\begin{figure}[!ht]
  \centering
\includegraphics[width=0.7\textwidth]{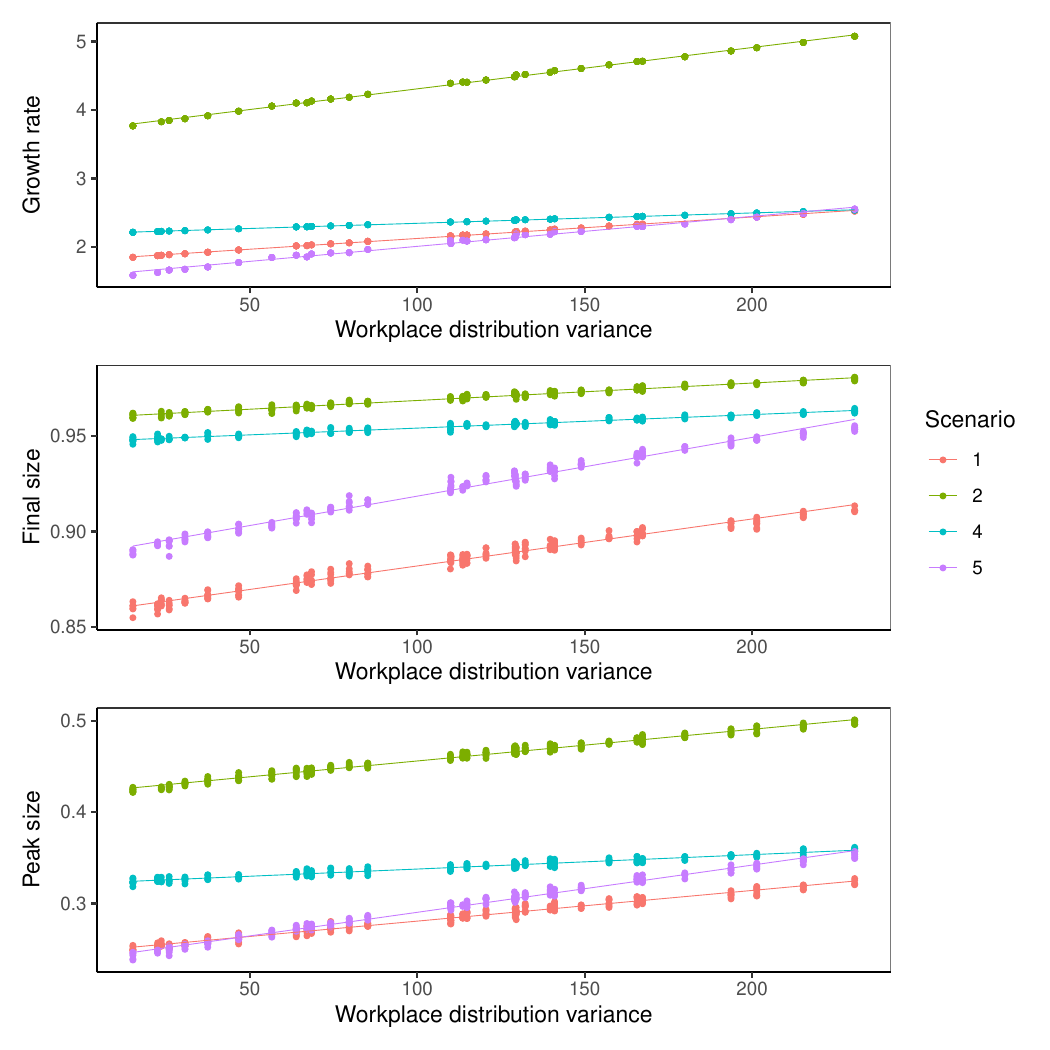} 
\caption{Influence of the variance of the workplace size distribution on the epidemic growth rate (top), final size (middle) and peak size (bottom). Simulations of the stochastic structured model with linear infection rates in households and workplaces were performed with the reference household size distribution, exploratory workplace size distribution set C with average workplace size of 7 from Table 2 and epidemic scenarios 1, 2, 4 and 5 from Table 3. Simulations were repeated 10 times for each combination of scenario and workplace size distribution.
}
\label{fig_supp:Variance_mean7_1245}
\end{figure}

\begin{figure}[!ht]
  \centering
\includegraphics[width=0.7\textwidth]{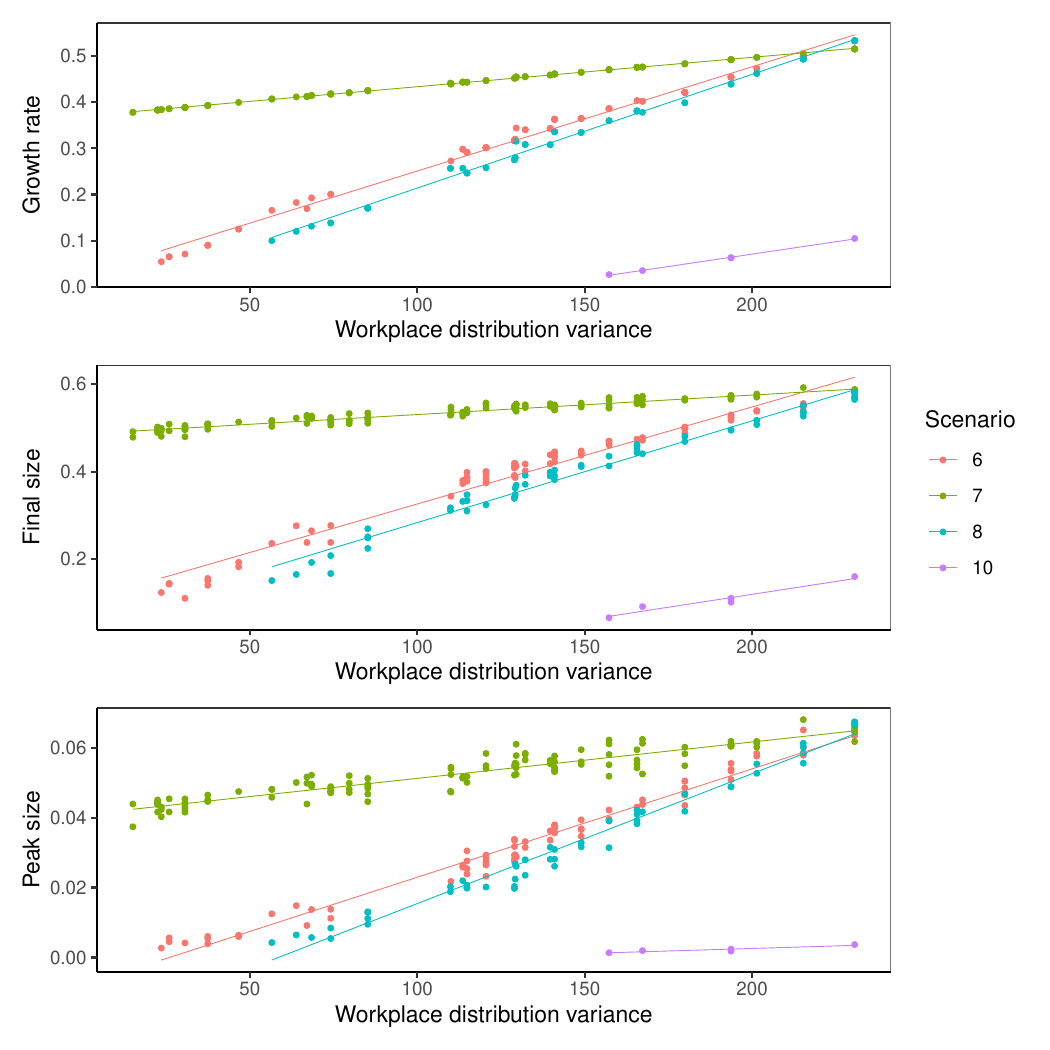} 
\caption{Influence of the variance of the workplace size distribution on the epidemic growth rate (top), final size (middle) and peak size (bottom). Simulations of the stochastic structured model with linear infection rates in households and workplaces were performed with the reference household size distribution, exploratory workplace size distribution set C with average workplace size of 7 from Table 2 and epidemic scenarios 6, 7, 8 and 10 from Table 3. Simulations were repeated 10 times for each combination of scenario and workplace size distribution.
}
\label{fig_supp:Variance_mean7_67810}
\end{figure}

\begin{figure}[!ht]
  \centering
\includegraphics[width=0.7\textwidth]{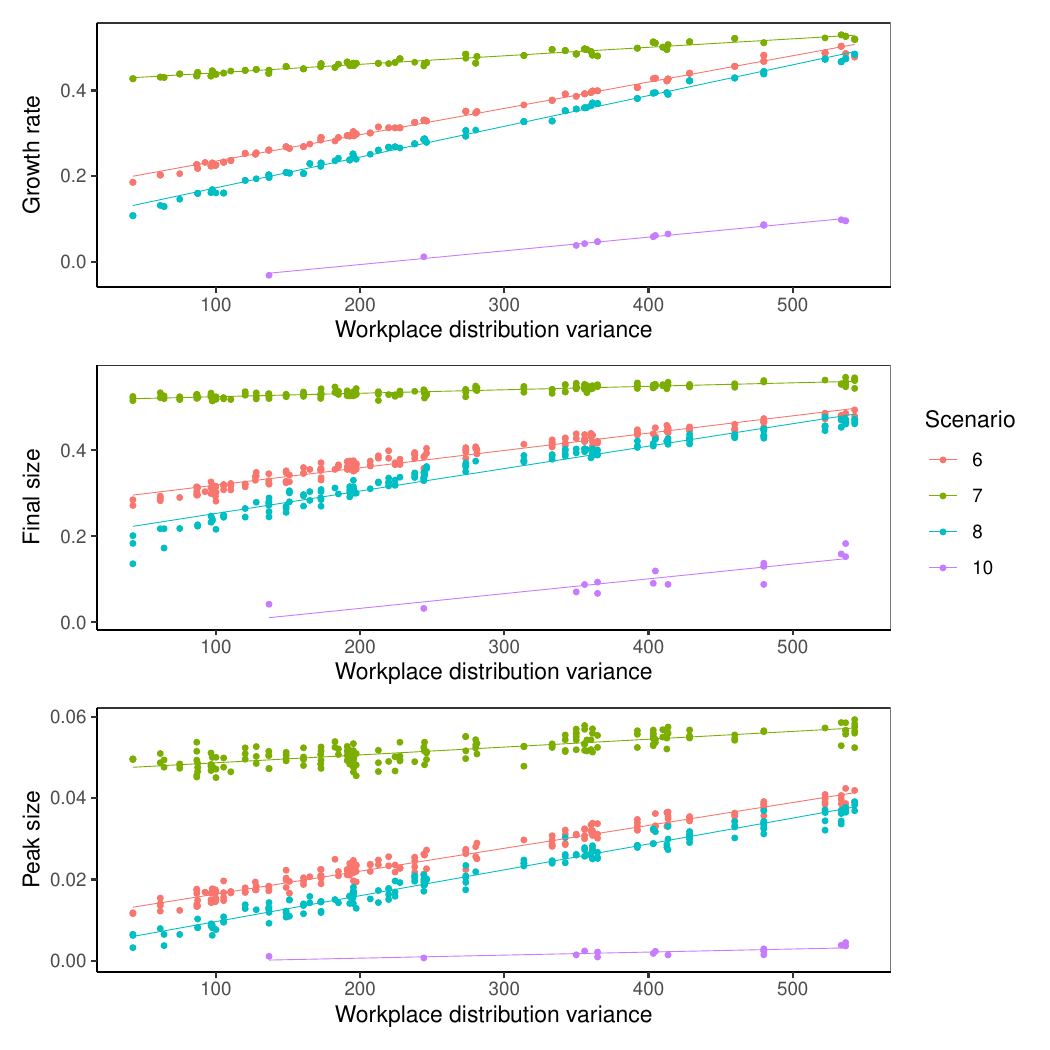} 
\caption{Influence of the variance of the workplace size distribution on the epidemic growth rate (top), final size (middle) and peak size (bottom). Simulations of the stochastic structured model with sub-linear infection rates in households and workplaces were performed with the reference household size distribution, exploratory workplace size distribution set B with average workplace size of 20 from Table 2 and epidemic scenarios 6, 7, 8 and 10 from Table 3. Simulations were repeated 10 times for each combination of scenario and workplace size distribution.
}
\label{fig_supp:Variance_sqrt2}
\end{figure}

\begin{figure}[!ht]
  \centering
\includegraphics[width=0.7\textwidth]{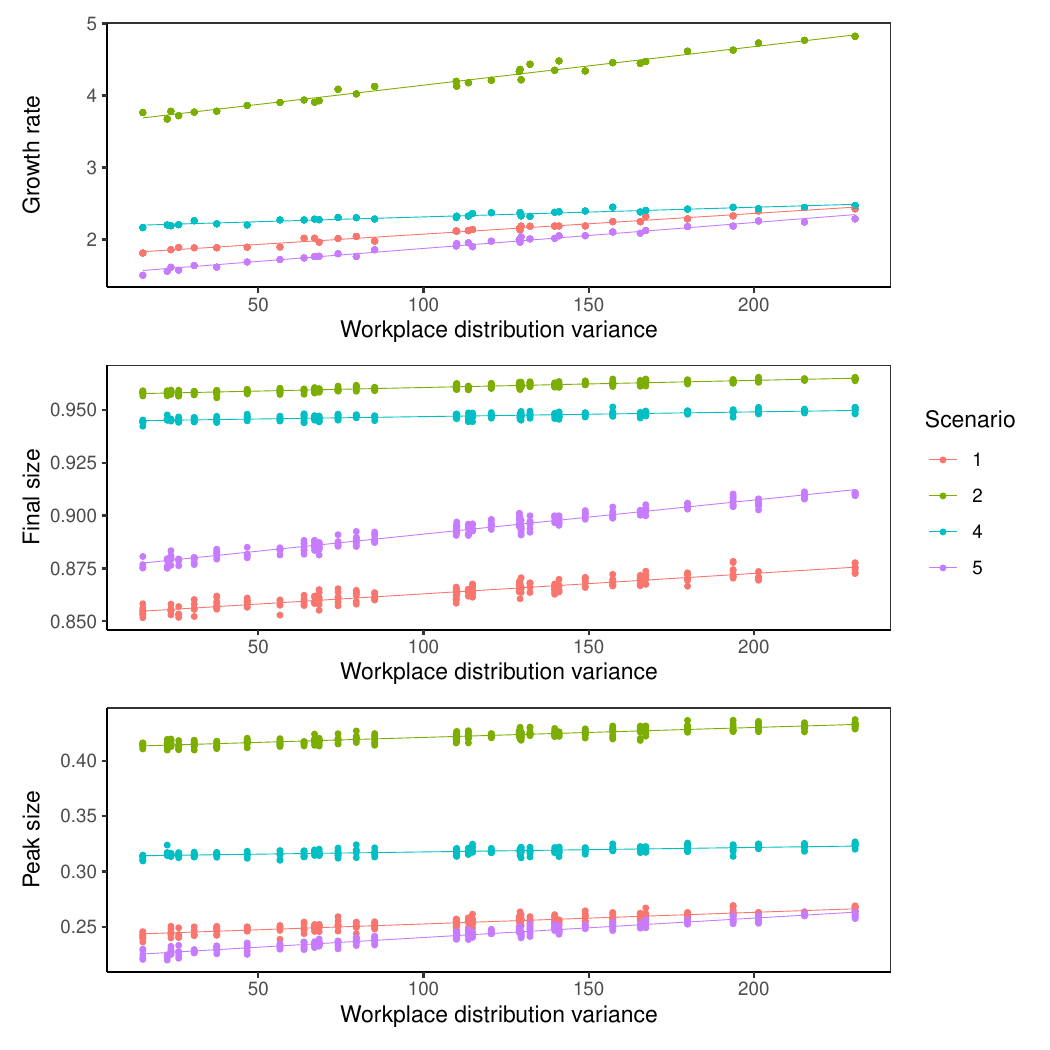} 
\caption{Influence of the variance of the workplace size distribution on the epidemic growth rate (top), final size (middle) and peak size (bottom). Simulations of the stochastic structured model with sub-linear infection rates in households and workplaces were performed with the reference household size distribution, exploratory workplace size distribution set C with average workplace size of 7 from Table 2 and epidemic scenarios 1, 2, 4 and 5 from Table 3. Simulations were repeated 10 times for each combination of scenario and workplace size distribution.
}
\label{fig_supp:Variance_mean7_1245_sqrt}
\end{figure}

\begin{figure}[!ht]
  \centering
\includegraphics[width=0.7\textwidth]{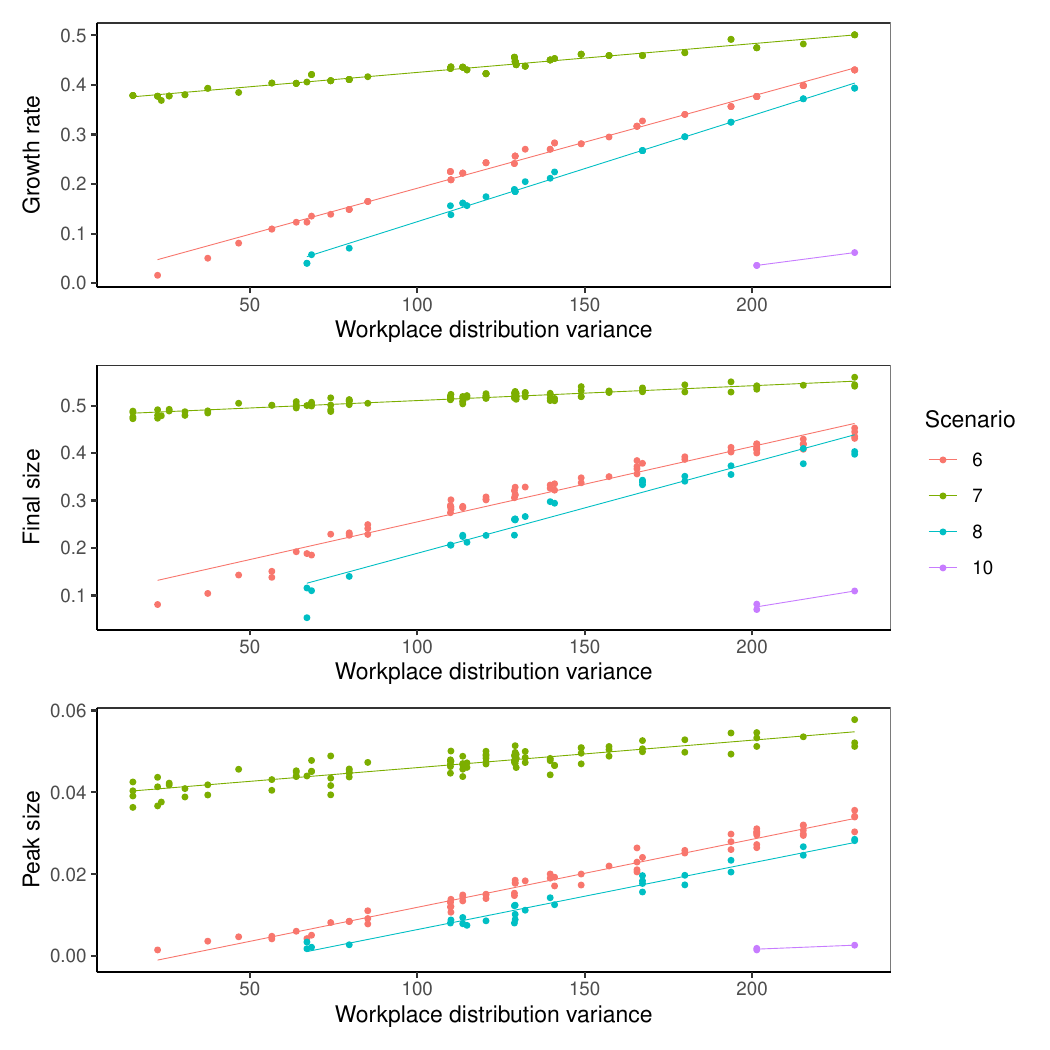} 
\caption{Influence of the variance of the workplace size distribution on the epidemic growth rate (top), final size (middle) and peak size (bottom). Simulations of the stochastic structured model with sub-linear infection rates in households and workplaces were performed with the reference household size distribution, exploratory workplace size distribution set C with average workplace size of 7 from Table 2 and epidemic scenarios 6, 7, 8 and 10 from Table 3. Simulations were repeated 10 times for each combination of scenario and workplace size distribution.
}
\label{fig_supp:Variance_mean7_67810_sqrt}
\end{figure}

\begin{figure}[!ht]
\centering
\includegraphics[width=0.7\textwidth]{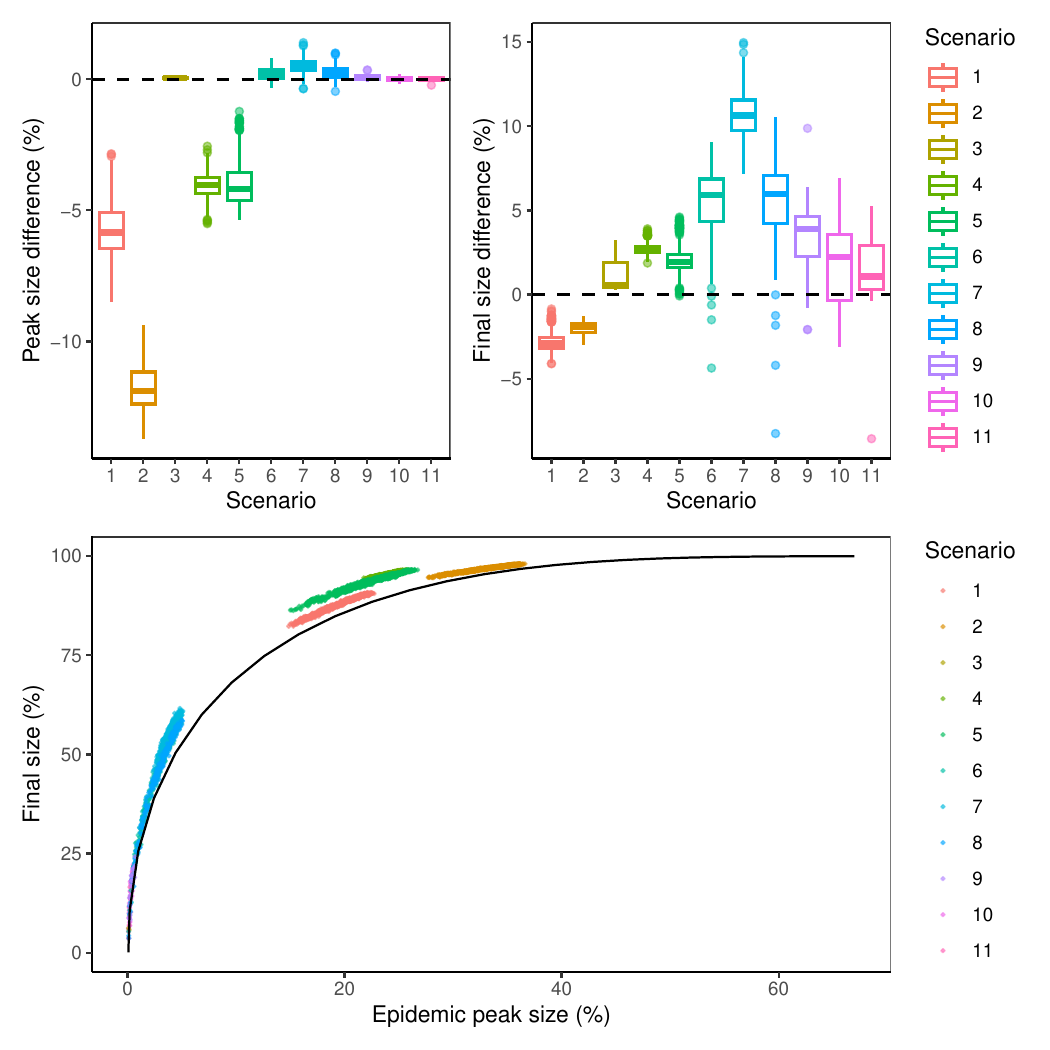}
\caption{
Boxplot of the differences in the peak sizes and (left column) and final sizes (right column) between the complete and the reduced models,
obtained from simulations of the stochastic structured \emph{SIR} model,
with gamma distributed recovery rates.
The shape parameter $a$ and rate parameter $r$ are set to values of $a=r=0.5$, \emph{i.e.} the density is given by $(r^{-a} \Gamma(a))^{-1} x^{a-1}e^{-rx}$. The average recovery rate is 1.
Simulations of the stochastic structured model are performed with population size of 100,000,
reference household distribution, workplace size distributions set A, for all scenarios from Table 3.
Each combination of scenario and workplace size distribution from set A is repeated 10 times.
Only simulations where an epidemic outbreak occurred (\emph{i.e.} more than 3\% of the population become infected) are reported in this figure.}
\label{fig_supp:sir_gamma_05_1}
\end{figure}

\begin{figure}[!ht]
\centering
\includegraphics[width=0.7\textwidth]{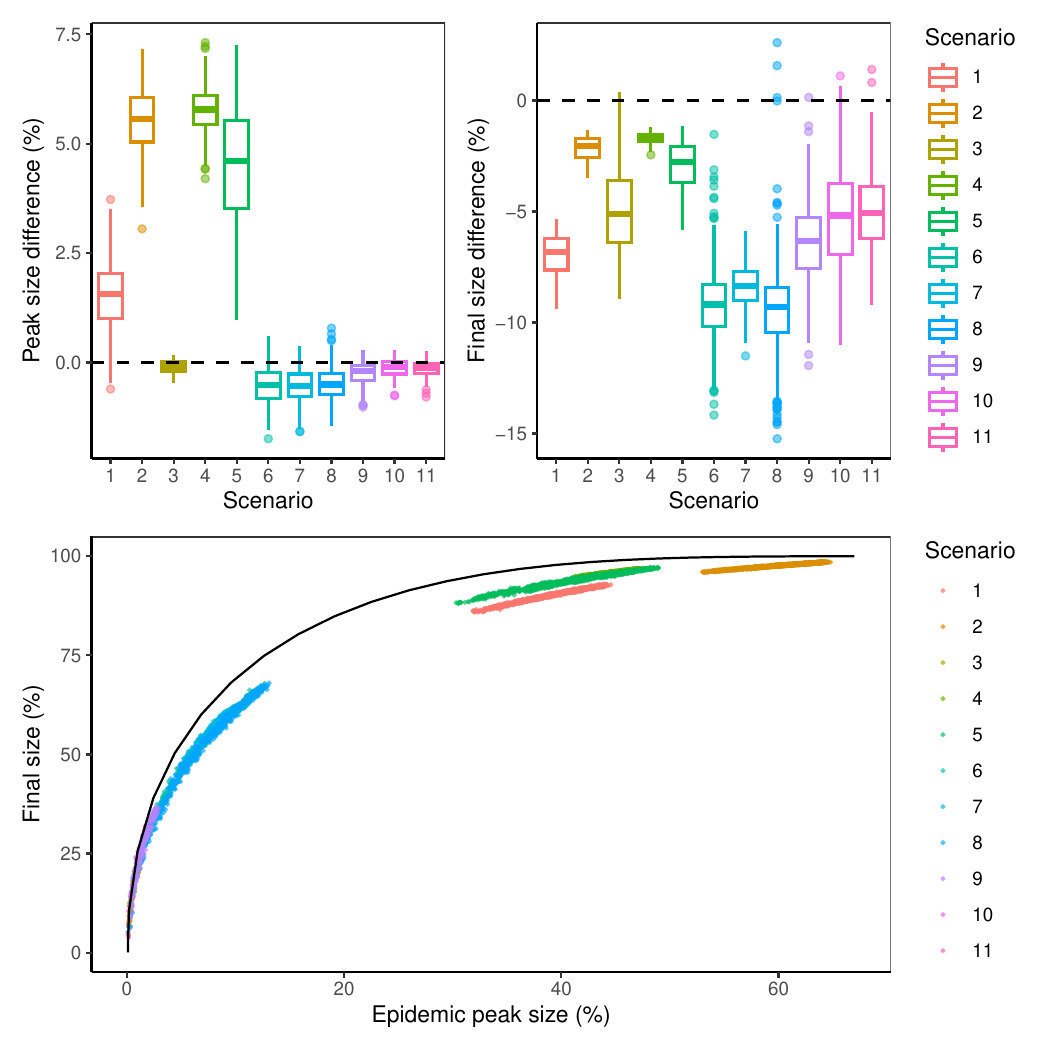}
\caption{
Boxplot of the differences in the peak sizes and (left column) and final sizes (right column) between the complete and the reduced models,
obtained from simulations of the stochastic structured \emph{SIR} model,
with gamma distributed recovery rates.
The shape parameter $a$ and rate parameter $r$ are set to values of $a=r=2$, \emph{i.e.} the density is given by $(r^{-a} \Gamma(a))^{-1} x^{a-1}e^{-rx}$. The average recovery rate is 1.
Simulations of the stochastic structured model are performed with population size of 100,000,
reference household distribution, workplace size distributions set A, for all scenarios from Table 3.
Each combination of scenario and workplace size distribution from set A is repeated 10 times.
Only simulations where an epidemic outbreak occurred (\emph{i.e.} more than 3\% of the population become infected) are reported in this figure.}
\label{fig_supp:sir_gamma_05_2}
\end{figure}

\begin{figure}[!ht]
  \centering
\includegraphics[width=0.7\textwidth]{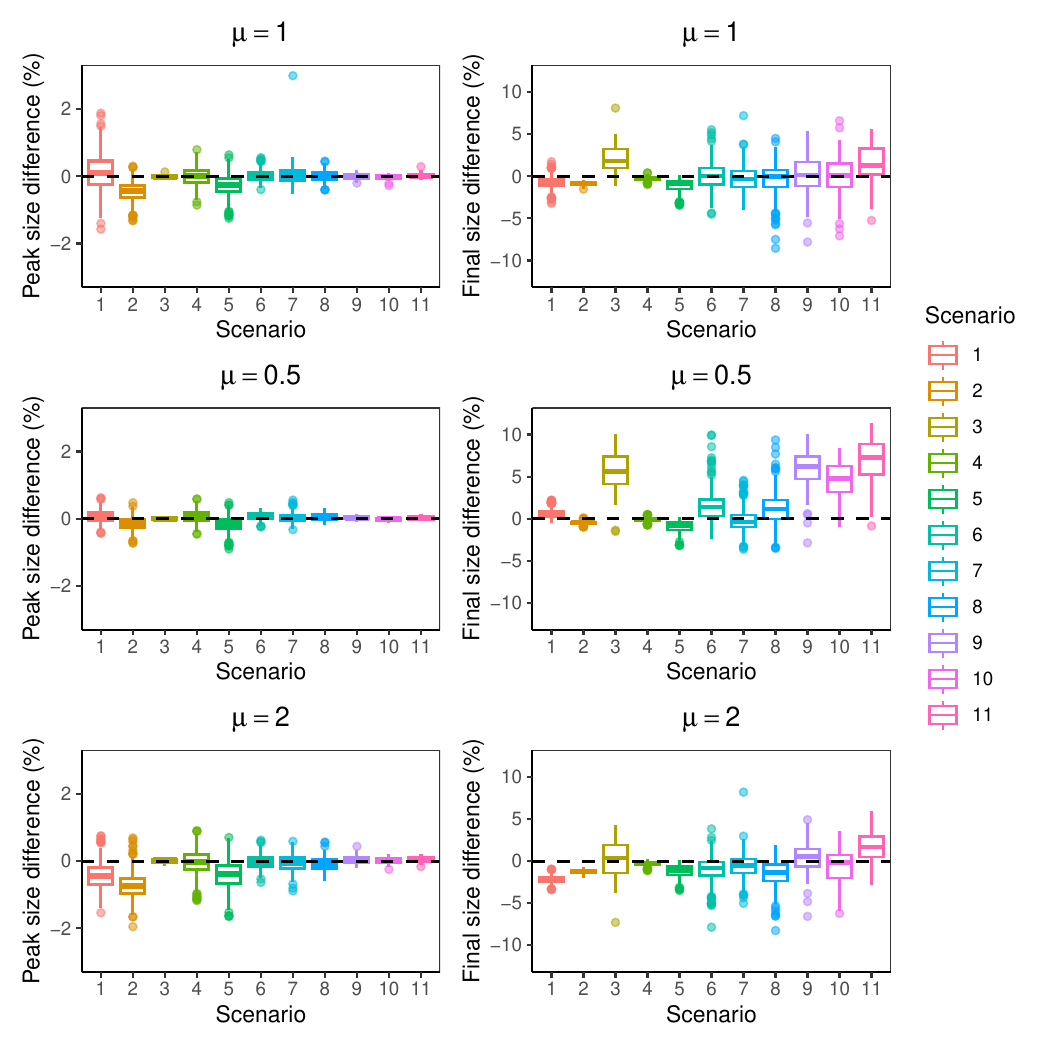} 
\caption{
Boxplot of the differences in the peak sizes (left column) and final sizes (right column) between the complete and the reduced models,
obtained from simulations of the stochastic structured \emph{SEIR} model and the corresponding reduced model defined in Equation (F16),
for values of the transition rate from \emph{E} to \emph{I}, $\mu=1$, $\mu=0.5$ and $\mu=2$.
Simulations of the stochastic structured model are performed with population size of 100,000,
reference household distribution, workplace size distributions set A, for all scenarios from Table 3.
Each combination of scenario and workplace size distribution from set A is repeated 10 times.
Only simulations where an epidemic outbreak occurred (\emph{i.e.} more than 3\% of the population become infected) are reported in this figure.
}
\label{fig_supp:reduction_robustnessSEIR_1}
\end{figure}

\begin{figure}[!ht]
  \centering
\includegraphics[width=0.7\textwidth]{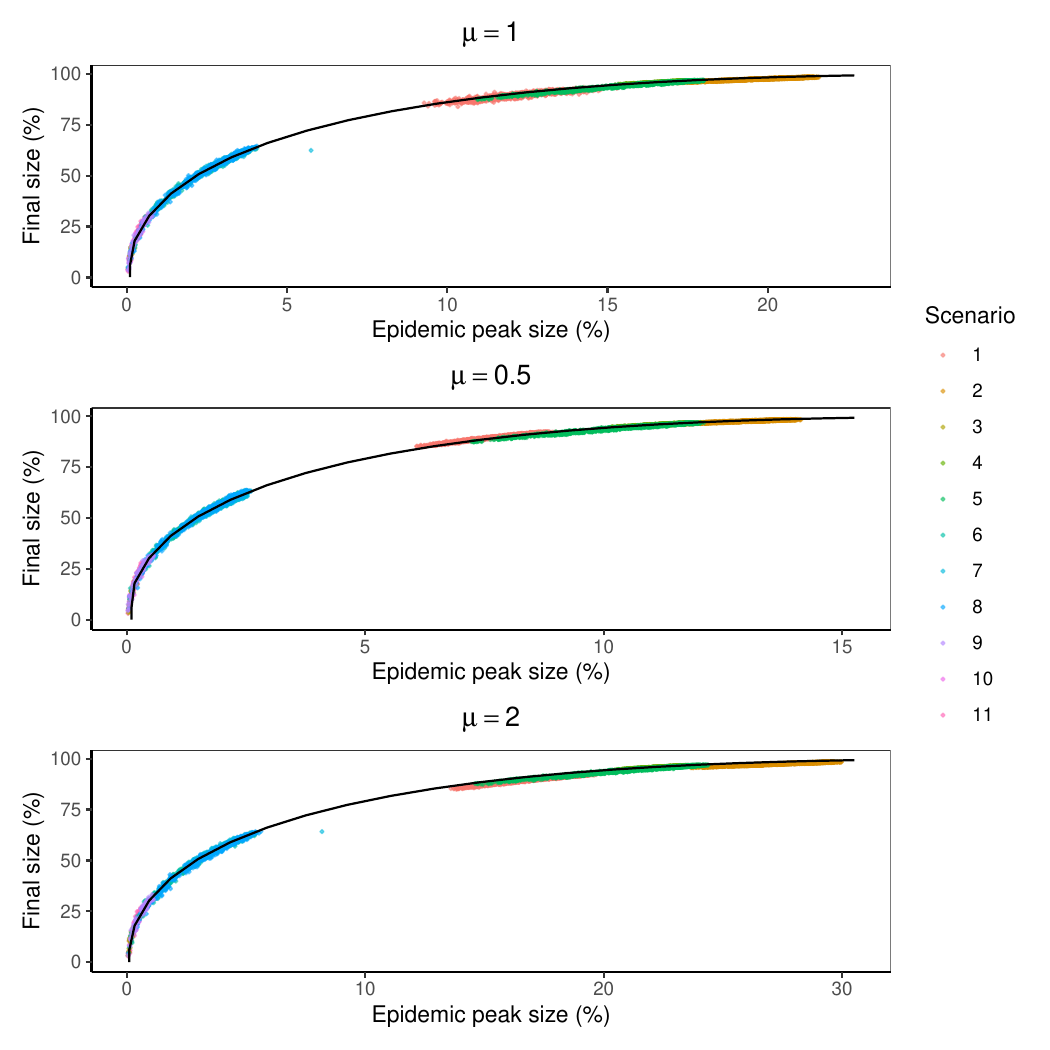} 
\caption{
Plot of the epidemic final size as a function of the  epidemic peak sizes 
obtained from simulations of the stochastic structured \emph{SEIR} model,
for values of the transition rate from \emph{E} to \emph{I}, $\mu=1$, $\mu=0.5$ and $\mu=2$.
The black line represents the epidemic peak size and epidemic final size for the standard \emph{SEIR} model.
Simulations of the stochastic structured model are performed with population size of 100,000,
reference household distribution, workplace size distributions set A, for all scenarios from Table 3.
Each combination of scenario and workplace size distribution from set A is repeated 10 times.
Only simulations where an epidemic outbreak occurred (\emph{i.e.} more than 3\% of the population become infected) are reported in this figure.
}
\label{fig_supp:reduction_robustnessSEIR_2}
\end{figure}

\begin{figure}[!ht]
  \centering
\includegraphics[width=0.7\textwidth]{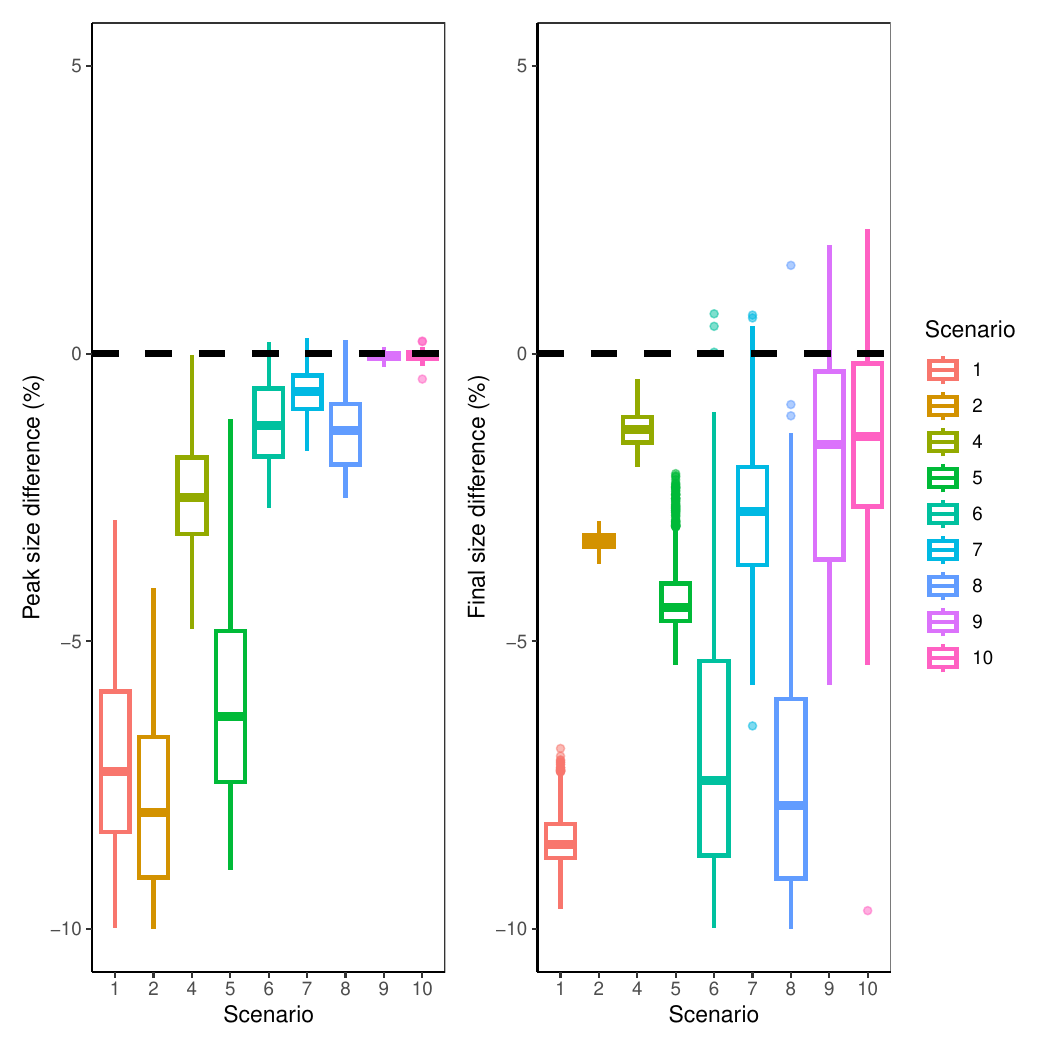} 
\caption{
Boxplot of the differences in the peak sizes and (left) and final sizes (right) between the complete and the reduced models, obtained from simulations of the stochastic structured model with sub-linear rate and the corresponding reduced model. Simulations of the stochastic structured model are performed with population size of 100,000, reference household distribution, workplace size distributions set A, for all scenarios from Table 3 (point color). Each combination of scenario and workplace size distribution from set A is repeated 10 times. Only simulations where an epidemic outbreak occurred (\emph{i.e.} more than 3\% of the population become infected) are reported in this figure.}
\label{fig_supp:reduction_robustness2_sqrt}
\end{figure}

\end{document}